\newsavebox\CBox
\def\textBF#1{\sbox\CBox{#1}\resizebox{\wd\CBox}{\ht\CBox}{\textbf{#1}}}
\newtheorem{theorem}{Theorem}
\newtheorem{lemma}{Lemma}
\newtheorem{assump}{Assumption}
\newtheorem{coro}{Corollary}
\newtheorem{prop}{Proposition}
\theoremstyle{remark}
\newtheorem{example}{\textbf{Example}}
\newtheorem{remark}{Remark}
\theoremstyle{definition}
\newtheorem{definition}{Definition}
\title{Do price trajectory data increase the efficiency of market impact estimation? }
\author[1]{Fengpei Li\footnote{Corresponding author. Email: fengpei.li@mogranstanley.com}}
\author[2,3]{Vitalii Ihnatiuk}
\author[1]{Yu Chen}
\author[1]{Jiahe Lin}
\author[3]{Ryan Kinnear}
\author[1]{Anderson Schneider}
\author[1]{Yuriy Nevmyvaka}
\author [5]{Henry Lam}
\affil[1]{Machine Learning Research, Morgan Stanley}
\affil[2]{Quantitative Research, Morgan Stanley }
\affil[3]{Department of Electrical Engineering, University of Waterloo}
\affil[4]{Department of Economics, Taras Shevchenko National University of Kyiv}
\affil[5]{Department of Industrial Engineering and Operations Research, Columbia University}
\date{}
\begin{document}
\maketitle

\abstract{Market impact is an important problem faced by large institutional investors and active market participants. In this paper, we rigorously investigate whether price trajectory data from the metaorder increases the efficiency of estimation, from the view of the Fisher information, which is directly related to the asymptotic efficiency of statistical estimation. We show that, for popular market impact models, estimation methods based on partial price trajectory data, especially those containing early trade prices, can outperform established estimation methods (e.g., VWAP-based) asymptotically. We discuss theoretical and empirical implications of such phenomenon, and how they could be readily incorporated into practice. }

\section{Introduction}
Market impact is a crucial feature of the market microstructure faced by large traders, and it manifest itself through the adverse effect on the price of the underlying asset generated from the execution of an order. In other words, upon completion of the trade, aside from direct costs (i.e., commissions/fees), slippage from effective bid-ask spread or delay/timing risk, investors are also subject to the transaction cost generated from the price impact of their own actions \cite{robert2012measuring}. For example, a trader who needs to liquidate a large number of shares will take liquidity from the Limit-Order Book (LOB) and push the price down, resulting in a \textit{implementation shortfall} (see \cite{perold1988implementation}) or \textit{liquidation cost} (see \cite{gatheral2013dynamical}), i.e. the difference between the realized revenue and the initial asset value. Besides the short-term correlation between price changes and trades, or the statistical effect of order flow fluctuations (see \cite{bouchaud2010price}), one notable explanation for this dynamics of market impact relates to the reveal of new, private information through trades, which dates back to the seminal work of \cite{kyle1985continuous}. As shown in \cite{kyle1985continuous}, for an investor with private information, to minimize execution cost or the revelation of information, the optimal strategy is to trade incrementally through time. In fact, in modern electronic markets, a decision to trade a large volume (i.e., the full order, also referred to as  $\textit{metaorder}$) is typically implemented by a sequence of incremental executions of the smaller orders (referred to as  $\textit{child orders}$) over a scheduled time window. As discussed in \cite{zarinelli2015beyond,curato2017optimal}, the optimal way to split \textit{metaorders} depends on the cost criterion and the specific market impact model.

The functional form of the price impact along with its corresponding parameters critically characterize these market impact models. First, a large fraction of the market microstructure literature is dedicated to expressing the optimal execution strategies under different risk criteria as a functional of the model parameters (analytically \cite{gatheral2012transient,adrian2020intraday} or implicitly \cite{dang2017optimal,curato2017optimal}). Moreover, the functional forms/parameters also determine the viability/stability of a market impact model, as models with certain functional forms/parameters admit different types of price manipulation strategies (see \cite{huberman2004price,alfonsi2012order,gatheral2013dynamical}) which potentially lead to risky/unstable trading behavior and mathematically preclude the existence of unique optimal execution strategies \cite{gatheral2012transient}. Second, the estimation of model parameters has important implications in explaining and understanding various stylized facts/empirical findings including the \textit{square-root impact law} (\cite{grinold2000active,bucci2019crossover}), the logarithmic dependence in market impact surface (\cite{zarinelli2015beyond}), the power law decay (\cite{bouchaud2003fluctuations,gatheral2010no}) and so on. Finally,  the design of trading strategies that minimize execution cost, as well as pre-trade analytic software that delivers a reliable pre-trade estimate of the expected trade cost, relies crucially on an accurate and efficient estimation of the model parameters. Indeed, following the prevalence of automated trading algorithms, both of the above have become standard considerations for active market participants, especially large institutional investors.

Despite its theoretical importance in market microstructure literature and its practical importance among institutional traders, there are relatively few studies (both empirically and, to a greater extent, theoretically) on the parameter estimation problem for market impact models and we aim to fill this gap in this paper. One main reason is the limited access to the metaorder data for academics and practitioners, as these data  are typically proprietary data of brokerage firms or funds. Consequently, most empirical studies on the market impact estimation are based on publicly available datasets, which collectively suffer from certain inherent drawbacks (e.g., limited information on trades being initiated by buyer vs seller, an unknown number of metaorders, unknown trading duration  \cite{almgren2005direct}) and can only provide a ``partial view" of the market (\cite{zarinelli2015beyond}). Another possible reason is the limited consensus on the appropriate model for price impact (e.g., linear vs non-linear, permanent vs temporary, transient vs instantaneous, see \cite{bouchaud2010price,cont2014price}). Notably, a few exceptions include (\cite{almgren2005direct,moro2009market,zarinelli2015beyond}) which conducted empirical investigations with large metaorder datasets, but the model fitting procedures typically relied only on some summary statistics during the execution. For example, in \cite{almgren2005direct}, the authors tried to determine the exponent of the power law functional form in price impact, for which a nonlinear least square regression is carried out with two statistics: the \textit{realized impact} and the\textit{ permanent impact} (which we explore in detail later). In \cite{zarinelli2015beyond}, in order to fit the \textit{temporary market impact surface}, the market impact measured at the moment when the metaorder is completed was regressed on the metaorder duration and the metaorder participation rate. Moreover, it was also suggested in \cite{curato2017optimal} that, as ``one of the major attractions of
the propagator model to practitioners", the historical execution data on the cost of VWAP executions\footnote{To be specified in Section 2.} (Volume Weighted Average Price, can be seen as another summary statistics, similar to the realized impact in \cite{almgren2005direct}) can be easily used to estimate the functional forms/parameters of the instantaneous market impact function. 

While these summary statistics contain important information about the price trajectory during the order (e.g. cost of VWAP involves averaging of price along the trajectory), it seems uneconomical to simply discard a large part of the price trajectory data during model fitting, especially considering how, from an intuitive standpoint, besides the trade cost itself, the trajectory taken by the price movement to arrive at that cost (referred to as the ``master curve" shape \cite{lillo2003master}) could potentially reveal extra information for model fitting (see also \cite{briere2020modeling}). For example, it has been empirically observed that the market impact of metaorders is concave with respect to the order size \cite{zarinelli2015beyond,toth2011anomalous}, from which one might conjecture the price movements around the early stages of the trade can be especially informative for predicting the total order cost or the market impact shape. Meanwhile, for the owner of metaorder data (i.e. asset managers or brokerage firms), compared with modeling approaches based on LOB, modeling approaches based on the price dynamics would also be feasible, as the additional collection and storage of these extra price data during the life of order should generally not come at a high cost. On the other hand, it is not unusual to see financial data intentionally discarded (due to structural noise or data corruption) for more accurate estimates. For example, it is common practice (see \cite{ait2009estimating,zhang2005tale}) when estimating the volatility of an asset return process to throw away a large fraction of the high-frequency data as a way to avoid the contamination of market microstructure noise (e.g., bid-ask spread). In particular, the realized volatility is typically computed by the sum of less frequently sampled squared returns, i.e. selected on time intervals much larger (i.e. 5 or 10 minutes) than the original ones where data are collected (i.e. every a couple of seconds or less), thus effectively discarding a substantial portion of data. Although the bid-ask spread should not have a substantial effect on the market impact model, as trades within a metaorder mostly have the same sign (i.e. a large buy program usually does not contain many sell orders) and most market impact models in the literature do not include a bid-ask spread (see detailed discussion in \cite{alfonsi2010optimal1} regarding a two-sided versus one-sided LOB model), it remains largely unclear/undiscussed whether the extra price trajectory information should benefit the estimation of market impact models.

In this paper, using the Fisher information, which is directly related to the asymptotic efficiency of statistical estimation, we investigate whether the additional price trajectory data increase the efficiency of market impact estimation. We compare the Fisher information matrix (FI) of different statistical experiments constructed from the same underlying stochastic process, and quantify the \textit{relative efficiency} of their respective \textit{maximum likelihood estimators} (MLE). The validity of this approach in assessing the optimality of experimental design is rooted in the asymptotic optimality of the MLE estimator in \textit{regular} parametric models among the class of \textit{regular} estimators (see, e.g.,\cite{van2000asymptotic} or sections below). Specifically, among the popular existing market impact models, we compare different estimators by their respective Fisher information matrix and observe when one information matrix would dominate another, as this implies an asymptotically smaller variance for estimating any quantity of interest under that parametric model, e.g., the impact of metaorder or the cost of execution. To ensure the broad applicability of our findings, we separately investigate two broad categories of market impact models: the Almgren-Chriss model and the propagator models, which cover a large portion of the parametric models that are currently adopted or studied. Whether the price trajectory data could increase the efficiency of estimation is directly related to whether the current statistic is sufficient. Perhaps surprisingly (or even puzzling), we observe that, even when one does not have access to the full price trajectory data, it does not take many price points at all to achieve a more efficient estimation than well-established (also highly intuitive) methods, e.g., VWAP-based estimation method. For example, we show that in the Almgren-Chriss model, even substituting the realized impact data (the terminology from \cite{almgren2005direct}, equivalent to the cost of VWAP) with just two price points, one in a sufficiently early stage (within the first quarter of trade duration) of the order and the other one at the end of the order, we would get a strictly more asymptotically efficient estimation. One possible, intuitive explanation could be the concavity of the market impact function (see, e.g., \cite{zarinelli2015beyond}), where two carefully chosen points can leverage information more efficiently than VWAP. Similar results can also be observed for the family of propagator models, except more price trajectories are typically more useful. We will explore our findings in detail in Section 3 and 4.

\subsection{Related Works and Motivation}
\subsubsection{Market Impact Modeling}
As one of the central themes in quantitative finance and market microstructure literature, the modeling of market impact is of great interest to practitioners and academic researchers. There is a wide range of literature on market impact modeling for which we only give a partial review here (for a more comprehensive review, see e.g. \cite{gatheral2013dynamical, zarinelli2015beyond}). As one of the most well-known and widely adopted models on market impact models, the Almgren–Chriss model in the influential papers of \cite{almgren1999value,almgren2001optimal} quantifies two distinctive components of price impact: a temporary impact induced by and affecting only an individual's ongoing trade, and a permanent impact affecting all current and future trades equally. Under the Almgren–Chriss model, \cite{almgren2003optimal} extended the analysis in \cite{grinold2000active,loeb1983trading} and solved the optimal execution problem by framing it as a \textit{mean-variance optimization} between the expected execution cost and its variance (representing the uncertainty/liquidity risk during execution). The optimal execution problem was also investigated, from an optimal control perspective, by  \cite{bertsimas1998optimal,forsyth2012optimal,forsyth2011hamilton,gatheral2011optimal} under the geometric Brownian motion assumption for the unaffected stock price process, rather than the arithmetic Brownian motion (ABM) assumption in the Almgren-Chriss model. The discrete and continuous time variants of these models (\cite{bank2004hedging,almgren2007adaptive,brunnermeier2005predatory,carlin2007episodic,cetin2010liquidity}) are collectively termed by \cite{curato2017optimal} as ``first-generation" market impact models, as opposed to the ``second-generation" models (\cite{bouchaud2003fluctuations, bouchaud2006random, gatheral2012transient, gatheral2010no,obizhaeva2013optimal,  alfonsi2010optimal}). The ``second-generation" models, among which the propagator model is perhaps the most notable representative, postulate that the price impact should be neither permanent nor temporary, but \textit{transient}, as it decays over time \cite{moro2009market,lehalle2010rigorous}. As one of the pioneering ``second-generation" models, \cite{obizhaeva2013optimal} proposed a model with linear transient price impact and exponential decay, by modeling an exponential resilience for LOB. The model based on dynamics of the LOB was further developed by \cite{alfonsi2010optimal,alfonsi2008constrained,gatheral2012transient,curato2017optimal,avellaneda2008high,bayraktar2014liquidation,cont2014price,gueant2012optimal}, which include non-linear price impact, as well as LOB with general shape function and time-dependent resilience. On the other hand, instead of modeling the dynamics of LOB, the discrete-time and continuous-time propagator models developed by \cite{bouchaud2003fluctuations,gatheral2010no} directly model the dynamics of price, using decay kernels to reflect the transient nature of the market impact. Detailed discussions on the connection and comparison between these two approaches, as well as further generalizations of propagator models, can be found in \cite{bacry2015market,gatheral2012transient,alfonsi2010optimal1,donier2015fully,toth2011anomalous,curato2017optimal}. Finally, aside from the aforementioned approaches focusing on the interactions between large trades and price changes, other approaches from alternative perspectives also provided many valuable insights on the price impact dynamics. For example, \cite{cont2014price} investigates how price changes are driven by order flow imbalance in the order book events (e.g., quote events); \cite{cardaliaguet2018mean} also investigates the optimal execution problem in a mean-field game setting, where the trader strategically deals with the uncertainty in price and other market participants.

\subsubsection{Stochastic Process and Asymptotic Statistical Inference}

In this section, we briefly review some of the basic concepts from asymptotic statistical inference and discuss how they are related to our setting where the underlying price evolves as a continuous-time stochastic process. In the market impact literature (see e.g.,\cite{gatheral2012transient,almgren2001optimal}), it is typical to assume the ``unaffected'' stock price follows an arithmetic Brownian motion (ABM). The term \textit{unaffected price process} is used in \cite{gatheral2012transient} which refers to the price-determined market participants other than ourselves, i.e., diffusion excluding the drift component. As mentioned in \cite{almgren2001optimal}, while it could be beneficial to consider geometric Brownian motion (GBM) or correlation in price movements for longer investment horizons, ABM remains a suitable model for the unaffected stock price (no drift term or no information on the direction of movement) over the short-term horizon, and its difference with GBM in this regime is almost negligible. In fact, it was investigated in \cite{gatheral2011optimal} that, under the Almgren-Chriss model, the \textit{cost-risk efficient frontier} under GBM and ABM assumption is ``almost identical" and in \cite{almgren2001optimal} that, the improvement gained by incorporating short-term serial correlation in price movement is also small. 

Indeed, as spelled out in \cite{merton1992continuous}, most theoretical models in finance use a continuous-time diffusion process (or general continuous-time Markov process) driven by stochastic differential equations (SDE). However, as documented in \cite{grenander1950stochastic}, the extensive literature on stochastic process ``rarely touched upon" questions of inference. A notable reference for applying an asymptotic approach in statistical estimation/inference to SDE is \cite{bishwal2007parameter}, where the author established the asymptotic property of MLE using the Radon-Nikodym derivative (likelihood) when the whole continuous sample path from the SDE can be observed and sampled. Yet, in practice, the observed data can only possibly be discretely sampled in time and most estimation methods also require
discrete input \cite{ait2004estimators,ait2008closed}. Thus, the parameter estimation and inference problem for discretely, or sometimes non-synchronously/randomly observed diffusion processes (e.g., high-frequency trading) is of much more practical importance. 

The parameter estimation for drift parameters in diffusion processes based on discrete observations has been studied by a few authors (see \cite{bishwal2007parameter} for a comprehensive review). Although a variety of
estimating methods for such data exists, a large portion of such studies utilized (quasi-) likelihood-based estimation/inference \cite{bibby2010estimating}. While the maximum likelihood estimation (MLE) can be a natural choice, the key difficulty is that despite the Markovian nature of diffusion processes allows one to readily calculate the log-likelihood function of discretely sampled
observations simply as the
sum of successive pairs of log-transition function, the transition densities themselves from one time point
to another generally do not have any finite or known analytic forms, except in some special cases. In order to implement the efficient MLE-based methods, many attempts have been made to approximate the likelihood function which leads to approximate maximum likelihood estimators (AMLE). Notably, the ground-breaking series of works in \cite{ait2002maximum,ait2004estimators,ait2008closed} proposed asymptotic expansions of the transition density for the diffusion process which could be used for approximation. Following this line of work by A{\"\i}t-Sahalia, various subsequent analyses and noteworthy applications have been developed (see \cite{li2013maximum,chang2011approximate} for a review), as well as other numerical, simulation-based approaches to approximate likelihood (see, e.g., \cite{pedersen1995consistency}). 

Fortunately, in market impact models, by invo the ABM assumption for the price process, one can evade the technical difficulty of transition density approximation for likelihood-based estimation of impact (i.e., drift term) because the joint distribution of discrete price observations follows a multivariate Gaussian. Within this canonical statistical model, a wide range of basic results in asymptotic inference, especially estimation for \textit{regular} parametric models, are readily available. Precise definitions on \textit{regular} experiments or \textit{regular} parametric models can be found in \cite{bickel1993efficient}, which we also specify in the appendix for completeness. As termed in \cite{bickel1993efficient}, a regular parametric statistical experiment has a ``nice" parameter space in $\theta\in\Theta\subseteq\mathbb R^{K}$  and a density ``smooth" in $\theta$. Most importantly, a regular statistical experiment possesses a non-singular Fisher information matrix at each point $\theta\in\Theta$. As we shall see, the relationship between MLE and Fisher information matrix plays a crucial role in our understanding of asymptotic optimality.

\subsubsection{Asymptotic Optimality and Experimental Design }

MLE is a ubiquitous method in statistical inference and it has many desirable properties in terms of efficiency, feasibility, and generality. In fact, it has been argued that MLE attains asymptotically optimality among the classes of \textit{regular} estimators (precise definitions on \textit{regular} estimators can be seen in \cite{van2000asymptotic}, which we specify in the appendix for completeness. Intuitively, a regular estimator admits a considerable regularity where a small change in parameters does not change the distribution of the estimator too much.). For example (see section 7\&8 in \cite{van2000asymptotic} for results stated below), in this regime, the local asymptotic normality (LAN) and Lipschitzness of log-likelihood can be used to establish the $\sqrt{n}$ - convergence of the MLE estimator to the true parameter under a Gaussian distribution with the inverse of Fisher information matrix as its covariance matrix. This limiting property attained by the MLE, as shown in the  H\'ajek-LeCam convolution theorem and its variant, is the ``best" limiting distribution asymptotically for any \textit{regular} estimator, in the sense that, it is 1) locally asymptotically minimax for any bowl-shaped loss function, i.e., non-negative function with level sets convex and symmetric around the origin, 2) achieves the lowest possible variance (i.e., a quadratic form based on the inverse of the Fisher information matrix) for any asymptotically regular sequence of the estimator and 3) any improvement over this limit distribution can only be made on a Lebesgue null set of parameters. The asymptotic efficiency of MLE has also been discussed in the sense of Bahadur's asymptotic efficiency or C.R.Rao's efficiency (see \cite{ibragimov2013statistical}). A more well-known result, regarded as a simpler version of the H\'ajek-LeCam convolution theorem, is the
Cram\'er-Fr\'echet-Rao information lower bound, which also establishes the asymptotic variance lower bound as the inverse of Fisher information under unbiasedness. Notice the various prerequisites one must declare before one claims the asymptotic optimality of MLE. This is not a mere technicality because, aside from the fact the optimality criterion is not singular in nature, various counter-examples exist outside the confine of such conditions. For example, it is well-known that James-Stein's shrinkage estimator (\cite{lehmann2006theory}) achieves strictly smaller risk for estimating the mean of a $K\geq 3$-dimension multivariate Gaussian with identity covariance matrix under quadratic loss when compared to the MLE (i.e., sample mean). However, the James-Stein estimator is not regular and the improvement over MLE here is for finite sample scenarios, not asymptotic ones. For a counter-example with asymptotic improvement on a Lebesgue null set, one can check the famous Hodges' estimator \cite{van2000asymptotic}.

The discussion above does not aim to debate whether one should necessarily use MLE for the estimation of market impact models. Rather, one can make the observation that, the asymptotic variance attained by MLE, being the ``best" (or ``lowest") possible as the inverse of the Fisher information matrix, quantifies an upper limit on how efficiently one can learn the parameter from a given statistical experiment. As a result, one naturally questions whether one can, by designing statistical experiments that  maximize Fisher information in some sense (more about it below), 
reduces uncertainty in parameter estimation. Indeed, this line of work is pursued extensively in experimental design literature, where the Fisher information matrix has been used to measure the amount of information gained and to design optimal experiments. For example, recently \cite{durant2021determining} uses Fisher information to optimize the experimental design in neutron reflectometry. In this paper, we investigate whether the statistical experiments based on price trajectory are more efficient than the ones based on certain summary statistics. In the experimental design literature (see, e.g., \cite{fedorov2010optimal,whittle1973some,wolkenhauer2008parameter,chaloner1995bayesian}), a unifying, single optimality-criteria for designing experiments has been studied. Traditional methods include maximizing expected trace, minimal eigenvalue, or determinant of Fisher information, corresponding to so-called A-optimality, E-optimality, or D-optimality. However, we note that in the case of estimating multiple parameters, there is an inherent difficulty in estimating all of them accurately, as an optimal way to estimate one particular parameter may not be optimal for the other ones, especially in the context of market impact models where the scales of parameters (or their variances) are vastly different. As a consequence, the meaning of the traditional criteria becomes less clear, unless the Fisher information from one experiment strictly dominates the other (i.e., their difference matrix is positive semi-definite), which we focus on showing in this paper.

\subsection{Roadmap and Outline}
In this section we clarify the roadmap and main contributions of this paper. As mentioned in the previous paragraph, although we discuss and quantify the asymptotic efficiency of MLE (based on which numerical simulation is conducted), the main discourse is concentrated on comparing the Fisher information of different statistical experiments or designs. The MLE serves as natural candidate for numerical verification and its asymptotic optimality is quantified by the Fisher information; this also justifies the comparison among statistical experiments based on dominance of Fisher information. 
To this end, we propose an efficient method---that outperforms those based on common summary statistics---for sampling price trajectory data from the underlying continuous-time stochastic process assumed in the market impact models, where efficiency is measured by the Fisher information.
The sampling scheme and the analysis revolving around it
do not rely on any discretization of the continuous process, although we do consider, for theoretical purpose, discretized price trajectory with fixed increments approaching 0 (similar to the canonical model in \cite{ait2002maximum}, with equal spaced discretization). In particular, in the discussion about maximizing the Fisher information, we borrow insight from the concept of sufficiency in an idealized discretization of price trajectory. The flexibility regarding sampling scheme is an important practical consideration because, although experiment design with samples along a single trajectory approaching infinity could theoretically improve the Fisher information, in reality the price trajectory at a scale finer than a certain threshold would not behave as a continuous-time diffusion process and one would practically want to design robust experiments based on fewer representative trajectory data (e.g., 3 or 4 price points). We focus our discussion in this spirit.

The remainder of the paper is organized as follows. In Section \ref{sec:basic_setup}, we present the basic market impact estimation framework given price trajectory data, including technical lemmas about conditions.  In Section \ref{s3}, we investigate two popular market impact models:  Almgren-Chriss and the family of propagator models. The main result for the Almgren-Chriss model, connecting asymptotic efficiency and sampling of three trajectory points, is presented in Theorem  \ref{acfinal}. The main result for propagator models is presented in Theorem \ref{minisuff}. In sharp contrast with the Almgren-Chriss result, it states that the only sufficient statistic is the full trajectory data when considering general instantaneous and kernel functions. Section \ref{s3} also shows a numerical study comparing different sampling strategies against VWAP-based estimation methods where the importance of early price data is explored. The last section concludes with discussions on some limitations of the theorems, specifically on both model misspecification and model selection. Simulation and empirical results are placed within each section. Reviews of basic concepts, proofs, and technical conditions are left in the Appendix.

\section{Basic Setup and Framework}\label{sec:basic_setup}

\subsection{Background and Model}
Throughout this paper, we assume the metaorder is a buy program so that we do not need to specify the \textit{sign} of a trade. The case for a sell program can be derived analogously. We first focus our discussion on the VWAP (\textit{Volume Weighted Average Price}) execution strategy, where the trading rate $\dot{x}_t=v$ is constant in volume time, where, as a standard assumption in market impact literature, time units are measured by traded volume or \textit{volume time} instead of \textit{physical time}. Typically \textit{volume time} is scaled to adjust for different levels of trading activity during the day, but for this paper, we do not actively distinguish the two times. Consequently, a VWAP execution strategy aims to trade equally/evenly in volume time, which implies trading at a constant proportion against the current traded volume of the stock. In this setting, VWAP strategies can be identified as strategies with a constant trading rate. See \cite{almgren2005direct, gatheral2013dynamical}.  Note that this is not a restriction on the order types, since we are considering the estimation rather than the optimal execution problem. In particular, different execution strategies can be approximated by sequences of interval VWAP strategies with different trading rates (see Remark 3.1 of \cite{curato2017optimal}) and a provably reliable estimation procedure for VWAP execution provides insight for non-VAWP orders as well. Thus, the discussion for non-VWAP strategies shall be deferred to Section \ref{nonvwappp}.

We consider a continuous-time model for the evolution of the underlying stock price during the execution of a metaorder during $0\leq t\leq T$:
\begin{equation}\label{main}
    S_t=S_0+\mu_{\theta}(t,v)+ \sigma\int_0^t dW_s, \text{ for } 0\leq t \leq T
\end{equation}
where $t$ represents time. The trade duration $T$ and trading rate $v$ are given beforehand, i.e., the SDE in \eqref{main} is conditional on $(v,T)$. For a buy program, the drift term $\mu_\theta(t,v)$ in \eqref{main} is initialized with $\mu_\theta(0,v)=0$ and typically a concave function in $t$ \cite{nadtochiy2022simple,lillo2003master} (various forms of $\mu_\theta(t,v)$ will be discussed, see also \cite{almgren2005direct,gatheral2010no}) representing the price impact generated from trading at rate $v$ for a period of $t$. Morevoer,  $\{\mu_\theta\}_{\theta\in\Theta}$ is a family of impact functions parameterized by $\theta\in\Theta\subseteq \mathbb R^K$ as the parameter space. Given fixed $(v,T)$ and $\theta\in\Theta$, $\{S_t\}_{t\in\mathbb T}$ is defined on some filtered probability space $(\Omega,\{\mathcal F_t\}_{t\in\mathbb T}, \mathbb P_\theta)$ where $W_t$ is a standard Brownian motion, $\sigma$ is the volatility and $\mathbb T$ is the time span (typically $\mathbb T=[0,T]$ for some $T$). As it is typically the case for market impact models (and in practice), we only aim to calibrate function $\mu_\theta(t,v)$ while we assume $\sigma$ is fixed or estimated separately. An execution strategy is represented by a continuous function of time $\{x_t\}_{t\in\mathbb T}$ (here we write $x_t$ instead of $X_t$, but in general $x_t$ can also be random and adaptive w.r.t $\mathcal F_t$) with $x_0=0$ and $x_T=X$ indicating the units of shares bought by time $t$, where $X$ is the metaorder size and $T$ is order duration measured by volume time. In general, impact function $\mu$ depends on the entire trajectory of trading rate $\{\dot{x}_t\}_{0\leq t \leq T}$, but for VWAP strategy with constant $\dot{x}_t=v$, we can simplify this dependence into only one variable $v$. 
As in \cite{zarinelli2015beyond}, we characterize two metaorder features (or hyper-parameters): participation rate $v$ and duration $T$. In particular, if $V_D$ is the daily traded volume and $V_M$ is the volume traded by the whole market during the order execution period, then we define 
\begin{equation}\label{tandv}
    T={V_M}/{V_D} \text{ , } v={X}/{V_M},
\end{equation}
where both quantities are unitless and are in $[0,1]$. Notice the definition in \eqref{tandv} would suggest that $T\cdot v=X/V_D$, making the order size $X$ effectively scaled by a factor of $1/{V_D}$. For ease of notation, henceforth we assume $V_D=1$ and treat $X$ as has been scaled. Moreover, as we do not actively distinguish between physical time and volume time so that we retain $X=T v$. Below are our two canonical classes models from \eqref{main}.

\begin{example}
    As a first example, in the Almgren-Chriss model \cite{almgren2005direct}, price dynamics follow as: 
\begin{equation*}
    S_t=S_0+S_0(g(v)t+h(v))+S_0\sigma\int_0^t dW_s.
\end{equation*}
where, in view of \eqref{main}, one can see $\mu(v,t)=S_0(g(v)t+h(v))$ and $\sigma=\sigma S_0$ scaled by $S_0$. As we shall discuss in detail, $g$ represents the permanent impact and $h$ the temporary impact. As $S_0$ can be viewed as fixed and execution cost is typically expressed in basis points (bps), an affine transformation of $P_t=\frac{S_t-S_0}{S_0}$ is carried out in \cite{almgren2005direct} which reduces the dynamics of $P_t$ to the canonical form \eqref{main} with $\mu(t,v)=g(v)t+h(v)$ and volatility $\sigma$. 
\end{example}

\begin{example}
    A second example under the framework of \eqref{main}, is the continuous-time propagator model (see, e.g., \cite{bouchaud2009markets,bouchaud2003fluctuations,gatheral2012transient,curato2017optimal}) developed to quantitatively reflect the \textit{transient} nature of price impact:
\begin{equation}\label{propagator}
    S_t=S_0+\int_0^t f(\dot{x}_s)G(t-s)ds+\sigma\int_0^t dW_s,
\end{equation}
where $f(\cdot)$ is referred to as the \textit{instantaneous market impact function} and $G(\cdot)$ as the $\textit{decay kernel}$ (\cite{gatheral2010no}). For a VWAP strategy, \eqref{propagator} reduces to \eqref{main} with $\mu(t,v)=f(v)\int_0^t G(s)ds$. We shall discuss \eqref{propagator} in detail in Section 3.2.
\end{example}

\subsection{Statistical Experiment and Asymptotic Inference}\label{theorystat}

In the context of this paper, rather than a statistical model, we speak of a statistical experiment constructed from discrete observations or statistics sampled from \eqref{main}, as we cannot observe the whole path under the SDE. We briefly review some basic concepts in the statistical experiment and asymptotic inference, with the technical proofs and additional related material provided in the Appendix. We first lay out some assumptions.

\begin{assump}[\textbf{Model Specification and Identifiability}]\label{modelspec}
$\Theta$ is an open subset of $\mathbb R^K$ and
there exists a unique $\theta^\star\in\Theta$ as the true parameter of the SDE in \eqref{main} for all $(v,T)$ almost surely (a.s.).
\end{assump}

\begin{assump}[\textbf{Metaorder Characteristics}]\label{orderindepG}

The distribution of $(v,T)$ in metaorder data does not depend on $\theta$ and follows some exogenous distribution function $G_{\text{order}}$. We write it as $(v,T)\sim G_{\text{order}}$ with pdf (or p.m.f) $g_{\text{order}}$, i.e., $G_{\text{order}}(dv,dT)=g_{\text{order}}(v,T)dvdT$. We further assume $(v,T)\sim G_{\text{order}}$ satisfies
\begin{equation*}
    v_L\leq v\leq v_U \text { and } T_L\leq T\leq T_U
\end{equation*}
for some constants $0<v_L\leq v_U<\infty$ and $0<T_L\leq T_U<\infty$ a.s.
    
\end{assump}
\begin{remark}\label{tandv}
We make a couple of remarks on the assumptions above. We first note that assumption \ref{modelspec} states \eqref{main} is well-specified within the parametric family. The discussion for model misspecification is deferred to Section \ref{limitation}. The uniqueness of $\theta^\star$ implicitly requires certain identifiability conditions within the parametric family. This is relatively easy to satisfy when the support of $G_{\text{order}}$ is not too narrow, because, as long as one parametrizes $\mu_{\theta}$ carefully, it would then be difficult to find $\theta_1$ and $\theta_2$ such that they $\mu_{\theta_1}(t,v)=\mu_{\theta_2}(t,v)$ for all $0\leq t \leq T$ and all $(v,T)$ pair almost surely. 
We shall also see later that such ``sufficient variability" condition on $G_{\text{order}}$ is crucial in our discussion regarding the nonsingular Fisher information matrix. Such condition on $G_{\text{order}}$ is also not restrictive as the metaorder data typically contains various execution styles and order characteristics reflecting the demands or specifications of clients, resulting in a broad range of values for $(v,T)$ in practice (see the figures on the empirical distribution of $(v,T)$ from real metaorder data in \cite{zarinelli2015beyond}). For assumption \ref{orderindepG}, the lower and upper bounds on $(v,T)$ are reasonable, as one typically does not trade too slowly or too fast. On the other hand, the assumption \ref{orderindepG} on the exogenousity of $G_{\text{order}}$ may not be as straightforward. Indeed, although the distribution $G_{\text{order}}$ undoubtedly depends on many exogenous factors such as requests of clients or trading styles, it is not immediately clear whether one can claim $G_{\text{order}}$ has no dependence on $\theta$. For example, it is conceivable that the traders would, over time, estimate the parameter up to some accuracy and adapt their trading strategy accordingly for all the exogenous requirements, to satisfy certain optimal trading schedules (e.g., \cite{almgren2003optimal}), resulting in a ``shift" of $G_{\text{order}}$ towards their acquired knowledge of $\theta$. In this paper, we assume such dependence is negligible, but we do note that assumption \ref{orderindepG} could be a source of bias and should be a subject of future deliberation.  
\end{remark}
\subsubsection{Statistical Experiment \& Regularity Conditions}
We are now ready to discuss the experiment design derived from \eqref{main}. In this paper, we primarily consider statistical experiments consisting of discrete observations or summary statistics of the following form: given $N, M\in\mathbb N^+$, let\footnote{We use $[N]$ to denote the set $\{1,2,...,N\}$.} $\boldsymbol S =\{S_{t_i}\}_{i\in[N]}$ and $\boldsymbol J=\{\int_{t_j^1}^{t_j^2} S_tdt\}_{j\in[M]}$, the statistical experiment on $\mathbb R^{2+N+M}$ consists of the triplet,
\begin{equation}\label{statexp}
   (\boldsymbol X, \mathscr B^{N+M}, \{\mathbb P^{\boldsymbol S\cup \boldsymbol J}_\theta\}_{\theta\in\Theta})
\end{equation}
where $\boldsymbol X\triangleq\{v,T\}\cup\boldsymbol S\cup \boldsymbol J$. Here $\boldsymbol S$ are $N$ price trajectory data (i.e., observations from a single trajectory)
with $\tau_i\triangleq t_i/T$ fixed; $\boldsymbol J$ are $M$ summary statistics proportional to the average cost (or price) along a certain time window with $\tau_j^1\triangleq t_j^1/T,\tau_j^2\triangleq t_j^2/T$ fixed. Such selection of $0\leq t\leq T$ based on fixed ratio $\tau$ avoids inconsistencies in choosing $t$ when $T$ is random. Here $\mathscr B^{2+N+M}$ is the Borel-sigma algebra on $\mathbb R^{2+N+M}$ and $\mathbb P^{\boldsymbol S\cup \boldsymbol J}_\theta$ is the product measure of $G_{\text{order}}$ and the probability measure induced from the SDE in \eqref{main} conditional on $(v,T)$ when restricted to $\boldsymbol S\cup \boldsymbol J$. For ease of notation, we omit the upper script in $\mathbb P^{\boldsymbol S\cup \boldsymbol J}_\theta$ and simply write $\mathbb P_\theta$ (and density function $p_{\theta}$ as well) when there is no ambiguity about the sample space in question. One can consider a sample $\boldsymbol X$ under \eqref{statexp} generated as: first sample $(v,T)\sim G_{\text{order}}$, then generate a sample path based on SDE \eqref{main}, and finally record $\boldsymbol S$ and $\boldsymbol J$.

Based on \eqref{main}, it is straightforward to see that, conditional on $(v,T)$, the sample $\boldsymbol S\cup \boldsymbol J$ follows a $N+M$-dimensional multivariate Gaussian distribution
\begin{equation}\label{gassexp}
    \mathcal N(\mu(\theta, T,v), \Sigma(T)),
\end{equation}
with a mean function $\mu(\theta,T,v)$ given by
\begin{equation*}
    \mathbb ES_{t}=\mu_\theta(t,v) \text{ and }\mathbb E\int_{t_1}^{t_2} S_tdt=\int_{t_1}^{t_2}\mu_\theta(t,v)dt,
\end{equation*}
 and a covariance matrix $\Sigma(T)$ given by 
\begin{align*}
    \text{Cov}[(S_{t_1},S_{t_2})]=&\sigma^2\text{Cov}[(W_{t_1},W_{t_2})]\nonumber\\ \text{Cov}[(S_{t_1},\int_{{t_2}}^{t_3}S_tdt)]=&\sigma^2\text{Cov}[(W_{t_1},\int_{{t_2}}^{t_3}W_tdt)]\nonumber\\\text{Cov}[(\int_{t_1}^{t_2}S_tdt,\int_{t_3}^{t_4}S_tdt)]=&\sigma^2\text{Cov}[(\int_{t_1}^{t_2}W_tdt,\int_{t_3}^{t_4}W_tdt)],
\end{align*}
 all of which can be readily computed using elementary It\^{o} calculus on standard Brownian motion (e.g., $\text{Cov}[(W_s,W_t)]=\min(s,t)$). A notable consequences is that $\Sigma(T)$ has no dependence on $\theta$ or $v$.

We assume there is no linear dependence among elements in $\boldsymbol S\cup\boldsymbol J$ so that $\Sigma(T)\succ 0$ and $\Sigma^{-1}(T)$ exists. Thus, the likelihood function of a sample from \eqref{statexp} can be written as 
\begin{equation}\label{oglike}
    p_\theta(\boldsymbol X) = g_{\text{order}}(v,T)\frac{1}{\sqrt{2\pi|\Sigma(T)|}} \exp-\frac{\Big(\mu(\theta,T,v)-(\boldsymbol S,\boldsymbol J)\Big)^T\Sigma^{-1}(T) \Big(\mu(\theta,T,v)-(\boldsymbol S,\boldsymbol J)\Big)}{2},
\end{equation}
and log-likelihood 
\begin{equation}\label{loglike}
    l(\theta|\boldsymbol X) =\log p_\theta(\boldsymbol X).
\end{equation}
Let $\boldsymbol X_1,\boldsymbol X_2,...,\boldsymbol X_n$ be $n$ i.i.d. samples from experiment \eqref{statexp}. Let $\hat\theta_n$ be the maximum likelihood estimator such that
\begin{equation}\label{mlebasic}
    \hat\theta_n =\underset{\theta\in\Theta}{\mathrm{argmax}} \sum_{i=1}^n l(\theta|\boldsymbol X_i).
\end{equation}

\begin{example}
    Typically one has price trajectory data $\boldsymbol{S}=\{S_{t_i}\}_{i\in [N]}$ and $\boldsymbol J=\{\int_{t_j^1}^{t_j^2} S_tdt\}_{j\in[M]}$ from metaorders, in the forms of (\cite{zarinelli2015beyond, almgren2005direct}):
\begin{enumerate}
    \item The cost of execution: $\int S_t \dot{x}_t dt-XS_0$ (divide by $XS_0$ for bps),
    \item The peak impact $S_T$,
    \item The ``permanent" impact $S_{T_{\text{post}}}$ for $T_{\text{post}}>T$ (e.g., 30 mins after trade \cite{almgren2005direct}),
    \item The quantification of price trajectory for $\mu_\theta(t,v)$ for some $t<T$.
\end{enumerate}
For example, conditional $(v,T)$, the cost of VWAP execution $C_{\text{VWAP}}=v\int S_tdt-XS_0$ follows Gaussian as in \eqref{gassexp}
\begin{equation}\label{vwapexp}
   C_{\text{vwap}}\sim\mathcal N (c_{\text{vwap}}(\theta,T,v),\sigma_{\text{vwap}}^2 (T,v)), 
\end{equation}
where the mean $c_{\text{vwap}}(\theta,T,v)=v\int_0^T\mu_{\theta}(t,v)dt$ and variance $\sigma_{\text{vwap}}^2 (T,v)=\frac{v^2T^3}{3}\sigma^2$ (derivation left in Appendix). Another example, given a single sample corresponding to 
price trajectory data $\boldsymbol X = (v,T)\cup \{S_{t_i}\}_{i\in [N]}$, the log-likelihood follows (derived using independent increments of $S_t$): \begin{align}\label{loglike}
    &l(\boldsymbol{X}|\theta)\triangleq\log(p_\theta(\boldsymbol{X}))=\log g_{\text{order}}(v,T) -\frac{N}{2}\log(2\pi\sigma^2)\nonumber\\
    &+\sum_{i\in [N] } -\frac{1}{2}\log(\tau_i T-\tau_{i-1}T) -\frac{1}{2\sigma^2(\tau_i-\tau_{i-1})T}\Big((S_{\tau_i T}-S_{\tau_{i-1}T})-(\mu_{\theta}(\tau_iT,v)-\mu_{\theta}(\tau_{i-1}T,v))\Big)^2.
\end{align}
Then, given $n$ total orders samples $\boldsymbol X_j$, executed by VWAP strategy under $(T_j,v_j)$, with $\{\boldsymbol{S}_j\}_{1\leq j \leq n}$ being the $j$-th price trajectory data $\boldsymbol{S}_j=\{S_{\tau_i T_j}\}_{i\in [N]}$, the MLE is the estimator that maximizes the log-likelihood ratio:
\begin{align}\label{mle}
    \hat\theta_{n}=&\underset{\theta\in\Theta}{\arg\max} \sum_{j=1}^n l(\boldsymbol{X}_j|\theta)\nonumber\\
    =& \underset{\theta\in\Theta}{\arg\min}\sum_{j=1}^M\sum_{i\in \mathcal K_j}\frac{1}{t_{ij}-t_{(i-1)j}} \Big((S_{t_{ij}}-S_{t_{(i-1)j}})-(\mu_{\theta}(t_{ij}, v_j)-\mu_{\theta}(t_{(i-1)j}, v_j))\Big)^2,
\end{align}
where the second equality follows from \eqref{loglike}.
\end{example}

\begin{assump}[\textbf{Gaussian Experiment}]\label{ras3}
        For the statistical experiment in \eqref{statexp}, we assume 
        \begin{enumerate}
              \item  $\mathbb P_{\theta}l(\theta|\boldsymbol X)<\infty$, $\forall \theta\in\Theta$.
              \item For the multivariate Gaussian in \eqref{gassexp}, there exists some $\epsilon$ such that $\Sigma(T)\succcurlyeq\epsilon\sigma^2 \text{\textbf{I}}$ $\forall T$ a.s.
              \item For the multivariate Gaussian in \eqref{gassexp}, $\mu(\theta,T,v)$ is continuously differentiable in $\theta$ $\forall (v,T)$ a.s.
              \item For the multivariate Gaussian in \eqref{gassexp}, there exists a neighborhood around every $\theta$ such that  $\forall \theta'$ in this neighborhood, $\mu(\theta', T,v)\leq B$ and $|\mu(\theta', T,v)-\mu(\theta'', T,v)| \leq L\|\theta'-\theta''\|$ for some $B,L$, $\forall \theta',\theta''$ in this neighborhood and $\forall (v,T)$ a.s.
              \item For the multivariate Gaussian in \eqref{gassexp}, $\theta^\star$ is the unique true parameter.
        \end{enumerate}
        \end{assump}
\begin{remark}
We make a couple of remarks on assumption \ref{ras3}. Assumption 3.1 is standard. Assumption 3.2 is related to the design of the Gaussian experiment, where one must choose $\boldsymbol S \cup \boldsymbol J$ so that they are not linearly dependent (otherwise one can simply delete redundant observations). This would ensure that $\Sigma (T)\succ 0$. The uniform lower bound $\epsilon$ on its eigenvalue for all $T$, as we shall see, hinges on the lower and upper bound of $T$ in assumption \ref{orderindepG}. For assumption 3.3, one simply can check whether $\mu_\theta(t,v)$ in \eqref{main} is continuously differentiable $\forall (v,T)$ in the bounded support of $G_{\text{order}}$. Assumption 3.4 ensures Lipschitz continuity of $\mu_\theta(t,v)$ in \eqref{main} $\forall (v,T)$ in the bounded support of $G_{\text{order}}$. Assumption 3.5 is different from assumption \ref{modelspec} as $\mathbb P_\theta$ is a projection of the probability measure induced from SDE \eqref{main} onto a finite-dimensional space \eqref{statexp}. Intuitively, the higher the dimension for $\theta$, the higher the dimension of $\boldsymbol X$ we need for the model identifiability in \eqref{statexp}.
    \end{remark}

\begin{definition}
    With assumption 3.3 in place, we write the derivative of $\mu(\theta,T,v)$ w.r.t $\theta$ as a Jacobian matrix

\begin{equation}\label{fishJJ}
    \mathcal J(\theta, T,v)=\begin{bmatrix} 
    \pdv{\mu_1}{\theta_1} & \pdv{\mu_1}{\theta_2} & \dots & \pdv{\mu_1}{\theta_K} \\
    \pdv{\mu_2}{\theta_1} & \pdv{\mu_2}{\theta_2} & \dots & \pdv{\mu_2}{\theta_K} \\
    \vdots & \vdots & \ddots& \vdots\\
    \pdv{\mu_{N+M}}{\theta_1} & \pdv{\mu_{N+M}}{\theta_2} & \dots & \pdv{\mu_{N+M}}{\theta_K} 
    \end{bmatrix}.
\end{equation}
Consequently, given $\theta_0\in\Theta$, we can define the Fisher information matrix in $\mathbb R^{K\times K}$ for experiment \eqref{statexp}:
\begin{equation}\label{fisher}
     \mathcal I_{\boldsymbol{X}}(\theta_0)=\mathbb E_{\theta_0}\bigg[\Big(\frac{\partial l(\theta|\boldsymbol{X})}{\partial \theta}\Big)\Big(\frac{\partial l(\theta|\boldsymbol{X})}{\partial \theta}\Big)^T\bigg],
\end{equation}
where the subscript in $\mathbb E_{\theta_0}$ denotes expectation under $\mathbb P_{\theta_0}$ and the subscript in $\mathcal I_{\boldsymbol{X}}$ denotes the experiment design in \eqref{statexp} based on sample $\boldsymbol X$.
\end{definition}

Before we discuss the relevance of the Fisher information and its importance, we first provide a known result which allows convenient computation for it in a Gaussian experiment \eqref{gassexp}.

\begin{lemma}\label{jacob}
Let $X\sim \mathcal N(a(\theta),\Sigma)$ be the $N$-dimensional multivariate Gaussian distribution with known $\Sigma$ where $\theta\in \mathbb R^K$. Then, let $ D(\theta)\in\mathbb R^{N\times K}$ be Jacobian matrix where
\begin{equation*}
     D(\theta)=\begin{bmatrix} 
    \pdv{a_1}{\theta_1} & \pdv{a_1}{\theta_2} & \dots & \pdv{a_1}{\theta_K} \\
    \pdv{a_2}{\theta_1} & \pdv{a_2}{\theta_2} & \dots & \pdv{a_2}{\theta_K} \\
    \vdots & \vdots & \ddots& \vdots\\
    \pdv{a_N}{\theta_1} & \pdv{a_n}{\theta_2} & \dots & \pdv{a_N}{\theta_K} 
    \end{bmatrix},
\end{equation*}
we have
\begin{equation}
    \mathcal I_X(\theta)= D(\theta)^T\Sigma^{-1} 
 D(\theta).
\end{equation}
\end{lemma}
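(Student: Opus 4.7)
The plan is to prove this by direct computation from the definition of Fisher information in \eqref{fisher}, exploiting the fact that for a Gaussian with known covariance the score is linear in the observation. First I would write out the log-likelihood
\begin{equation*}
l(\theta \mid X) = -\frac{N}{2}\log(2\pi) - \frac{1}{2}\log|\Sigma| - \frac{1}{2}\bigl(X - a(\theta)\bigr)^T \Sigma^{-1} \bigl(X - a(\theta)\bigr),
\end{equation*}
and note that only the quadratic form depends on $\theta$.

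Next, I would differentiate componentwise in $\theta_k$. Letting $D_k(\theta)$ denote the $k$-th column of the Jacobian $D(\theta)$, the chain rule gives
\begin{equation*}
\frac{\partial l}{\partial \theta_k} = D_k(\theta)^T \Sigma^{-1}\bigl(X - a(\theta)\bigr),
\end{equation*}
so that assembling the gradient into a column vector yields the compact form $\nabla_\theta l(\theta\mid X) = D(\theta)^T \Sigma^{-1}(X - a(\theta))$. This already does most of the work, since the score is a linear function of $X - a(\theta)$, which is a centered Gaussian with covariance $\Sigma$.

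The remaining step is to plug this into the outer-product definition \eqref{fisher}. Writing
\begin{equation*}
\nabla_\theta l \, (\nabla_\theta l)^T = D(\theta)^T \Sigma^{-1} \bigl(X - a(\theta)\bigr)\bigl(X - a(\theta)\bigr)^T \Sigma^{-1} D(\theta),
\end{equation*}
and taking expectation under $\theta$, the factors $D(\theta)^T \Sigma^{-1}$ and $\Sigma^{-1} D(\theta)$ pull out since $\theta$ is fixed, leaving $\mathbb{E}[(X - a(\theta))(X - a(\theta))^T] = \Sigma$ in the middle. The two copies of $\Sigma^{-1}$ then cancel one factor of $\Sigma$, giving $\mathcal{I}_X(\theta) = D(\theta)^T \Sigma^{-1} D(\theta)$, as claimed.

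There is no real obstacle here; the statement is a standard consequence of the Gaussian likelihood and is included only as a computational tool for later sections. The only mild care needed is to justify swapping expectation and the matrix-valued outer product, which follows immediately from linearity of expectation applied entrywise, and to note that the assumed invertibility of $\Sigma$ makes every manipulation above well defined.
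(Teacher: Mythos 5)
Your computation is correct and is exactly the standard argument: the score of a Gaussian location family with known covariance is $D(\theta)^T\Sigma^{-1}(X-a(\theta))$, and its covariance collapses to $D(\theta)^T\Sigma^{-1}D(\theta)$. The paper does not write out a proof at all but simply cites Example 7.7 of \cite{van2000asymptotic}, which contains essentially this same calculation, so your proposal just makes explicit what the reference provides.
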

\begin{proof}
    See example 7.7 in \cite{van2000asymptotic}.
\end{proof}

Based on Lemma \ref{jacob}, we have the following Proposition.
\begin{prop}\label{fishtvtv}
Under assumption \ref{orderindepG} and \ref{ras3}, the statistical experiment \eqref{statexp} has its Fisher information in \eqref{fisher} as
\begin{equation}\label{statexpFI}
    \mathcal I_{\boldsymbol X}(\theta) = \mathbb E_{(v,T)\sim G_{\text{order}}} \bigg[\mathcal J^T(\theta, T,v)\Sigma^{-1}(T) \mathcal J(\theta, T,v)\bigg].
\end{equation}
\end{prop}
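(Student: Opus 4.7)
The plan is to exploit the product structure of the joint density and reduce the computation to a conditional Gaussian Fisher information, to which Lemma \ref{jacob} applies directly. First, I would observe that under Assumption \ref{orderindepG} the density factorizes as $p_\theta(\boldsymbol X) = g_{\text{order}}(v,T)\,p_\theta(\boldsymbol S \cup \boldsymbol J \mid v, T)$, where $g_{\text{order}}$ has no $\theta$ dependence. Consequently the log-likelihood splits as $l(\theta \mid \boldsymbol X) = \log g_{\text{order}}(v,T) + \log p_\theta(\boldsymbol S \cup \boldsymbol J \mid v,T)$, and the score $\partial l/\partial\theta$ depends only on the second summand. In particular, $\partial l/\partial \theta$ is, conditionally on $(v,T)$, exactly the score of the multivariate Gaussian in \eqref{gassexp}.

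Next I would invoke the tower property of conditional expectation to write
\begin{equation*}
    \mathcal I_{\boldsymbol X}(\theta) = \mathbb E_\theta\!\left[\left(\tfrac{\partial l}{\partial \theta}\right)\!\left(\tfrac{\partial l}{\partial \theta}\right)^{\!T}\right] = \mathbb E_{(v,T)\sim G_{\text{order}}}\!\left[\mathbb E_\theta\!\left[\left(\tfrac{\partial l}{\partial \theta}\right)\!\left(\tfrac{\partial l}{\partial \theta}\right)^{\!T} \Big| v,T\right]\right].
\end{equation*}
Since the conditional distribution of $\boldsymbol S \cup \boldsymbol J$ given $(v,T)$ is exactly $\mathcal N(\mu(\theta,T,v),\Sigma(T))$ with $\Sigma(T)$ independent of $\theta$ (by elementary It\^o calculus, as noted after \eqref{gassexp}), the inner conditional expectation is precisely the Fisher information of this Gaussian experiment. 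Applying Lemma \ref{jacob} with $a(\theta) = \mu(\theta,T,v)$ and $\Sigma = \Sigma(T)$ identifies the inner expectation as $\mathcal J^T(\theta,T,v)\,\Sigma^{-1}(T)\,\mathcal J(\theta,T,v)$, and taking the outer expectation gives \eqref{statexpFI}.

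The remaining work is to check that the regularity conditions required to (i) exchange differentiation and integration when computing the score, and (ii) legitimately apply Lemma \ref{jacob}, are in force. Assumption \ref{ras3}.2 guarantees that $\Sigma(T)^{-1}$ exists and is uniformly bounded in operator norm (with $\|\Sigma^{-1}(T)\|\leq 1/(\epsilon\sigma^2)$), so the Gaussian density is well-defined and smooth in $\theta$; Assumption \ref{ras3}.3 yields continuous differentiability of $\mu(\theta,T,v)$ so that $\mathcal J(\theta,T,v)$ in \eqref{fishJJ} is well-defined; Assumption \ref{ras3}.4 provides a local Lipschitz and boundedness control of $\mu$ uniformly over $(v,T)$ in the bounded support of $G_{\text{order}}$ (Assumption \ref{orderindepG}), which in turn produces an integrable dominating function for the score and its square, legitimizing the differentiation-under-the-integral step and the use of Fubini in the tower property. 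Assumption \ref{ras3}.1 ensures the score has finite second moment.

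The only step that is more than bookkeeping is this last justification: carefully producing an $L^1$ dominating function uniformly in a neighborhood of $\theta$, so that the Fisher information identity $\mathbb E[(\partial l/\partial\theta)(\partial l/\partial\theta)^T] = -\mathbb E[\partial^2 l/\partial\theta\,\partial\theta^T]$ (or equivalently the derivation in Lemma \ref{jacob}) is valid. Given the uniform bounds on $(v,T)$, the spectral bound on $\Sigma(T)$, and the local Lipschitz bound on $\mu$, a single quadratic-in-$\boldsymbol S,\boldsymbol J$ envelope suffices and is integrable against the Gaussian measure; I expect this to be the most technical (but still routine) part of the argument, with the rest following from direct substitution.
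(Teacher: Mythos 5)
Your argument is correct and is essentially identical to the paper's proof: both factor the log-likelihood via Assumption \ref{orderindepG}, apply the tower property to reduce to the conditional Gaussian Fisher information, and invoke Lemma \ref{jacob} to evaluate it as $\mathcal J^T(\theta,T,v)\Sigma^{-1}(T)\mathcal J(\theta,T,v)$. The additional regularity bookkeeping you supply (dominating functions, uniform spectral bounds) is sound and goes somewhat beyond what the paper writes down explicitly, but it does not change the route of the argument.
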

\begin{proof}
Due to assumption \ref{orderindepG}, we can infer from \eqref{oglike} that
\begin{equation*}
    l(\theta|\boldsymbol{X}) = \log g_{\text{order}}(v,T) + \log p_{\theta}(\boldsymbol X|(v,T)) = \log g_{\text{order}}(v,T) + \log p_{\theta}(\boldsymbol S\cup \boldsymbol J|(v,T)) 
\end{equation*}
Since $g_{\text{order}}(v,T)$ does not depend on $\theta$ by assumption \ref{orderindepG} and $\boldsymbol S\cup \boldsymbol J$ is multivariate Gaussian conditional on $(v,T)$, we can evoke Lemma \ref{jacob} and the tower property of conditional expectation to write
    \begin{align*}
        \mathcal I_{\boldsymbol{X}}(\theta)=&\mathbb E_{(v,T)\sim G_{\text{order}}}\bigg[\mathbb E\bigg[\Big(\frac{\partial l(\theta|\boldsymbol{X})}{\partial \theta}\Big)\Big(\frac{\partial l(\theta|\boldsymbol{X})}{\partial \theta}\Big)^T\bigg|(v,T)\bigg]\bigg]\nonumber\\
        =&\mathbb E_{(v,T)\sim G_{\text{order}}}\bigg[\mathcal I_{\boldsymbol X|(v,T)}(\theta)\bigg]\nonumber\\
        =&\mathbb E_{(v,T)\sim G_{\text{order}}} \bigg[\mathcal J^T(\theta, T,v)\Sigma^{-1}(T) \mathcal J(\theta, T,v)\bigg].
    \end{align*}
    where the $\mathcal I_{\boldsymbol X|(v,T)}$ is computed from conditional likelihood, treating $(v,T)$ as fixed.
\end{proof}
 Lastly, we present our final assumption which is essential for establishing regular parametric models.
\begin{assump}[\textbf{Nonsingular Fisher Information}]\label{nonsigfishh}
    We assume the Fisher information $\mathcal I_{\boldsymbol{X}}(\theta)$ in \eqref{statexpFI} is well-defined, non-singular and continuous in $\theta$.
\end{assump}

As we shall see, assumption \ref{nonsigfishh} is related to $G_{\text{order}}$. Simply put, it would be hard to determine a high-dimensional $\theta$ in the market impact model if trading style $(v,T)$ is too singular, as one might struggle to separate different impact effects from $v$ or $T$. One would need to check assumption \ref{nonsigfishh} on a case-by-case basis. For example, a point mass distribution of $G_{\text{order}}$ (i.e., one pair of $(v,T)$) is almost never capable of fitting a good market impact model.

\subsubsection{Asymptotic Optimality and Fisher Information}

Now we present some consequences of assumptions \ref{modelspec}-\ref{nonsigfishh}. The results discussed here are related to basic concepts in asymptotic inference, such as $\textit{regular}$ parametric model, local asymptotic normality (LAN), and \textit{regular} estimators. We do not dive into the details of these established results and we defer both the proof and the reference for these concepts to the appendix. The purpose of the following proposition is to establish that, under fairly reasonable criteria and a fairly broad class of estimators, the MLE achieves asymptotic optimality given the aforementioned conditions and that optimality is quantified by the Fisher information.

\begin{prop}\label{sdfadsfasdfas}
Let $\boldsymbol X_1,\boldsymbol X_2,...,\boldsymbol X_n$ be $n$ i.i.d. samples from experiment \eqref{statexp}. Let $\hat\theta_n$ be the maximum likelihood estimator such that
\begin{equation*}
    \hat\theta_n =\underset{\theta\in\Theta}{\mathrm{argmax}} \sum_{i=1}^n l(\theta|\boldsymbol X_i).
\end{equation*}
    Then, under assumptions \ref{modelspec}-\ref{nonsigfishh}, the MLE is consistent for $\theta^\star$, i.e., $\hat\theta_{n}\rightarrow \theta^\star$ in probability and $\sqrt{n}$-asymptotically normal:
\begin{equation}\label{mleinv}
    \sqrt{n}(\hat\theta_{n}-\theta^\star) \overset{d}{\rightarrow} \text{ } \mathcal N\big(0, \mathcal I^{-1}_{\boldsymbol{X}}(\theta^\star)\big) \text{ and } \sqrt{n}(\phi(\hat\theta_{n})-\phi(\theta^\star)) \overset{d}{\rightarrow} \text{ } \mathcal N\big(0, \nabla_\theta\phi(\theta^\star)^T\mathcal I^{-1}_{\boldsymbol{X}}(\theta^\star)\nabla_\theta\phi(\theta^\star)\big)
\end{equation}
for any $\phi(\theta)$ differentiable in $\theta$. 
Moreover, for any bowl-shaped loss function $l$ (i.e., $\{x:l(x)\leq c\}$ is convex and symmetric around the origin) and any estimator sequence $\{T_n\}_{n\geq0}$,
\begin{equation}\label{lammmmax}
    \sup_{I}\liminf_{n\rightarrow\infty}\sup_{h\in I}\mathbb E_{\theta^\star+\frac{h}{\sqrt{n}}}l\Big(\sqrt{n}\Big(T_n-\phi(\theta^\star+\frac{h}{\sqrt{n}})\Big)\Big)\geq l(Z_{\theta^\star})
\end{equation}
where $I$ is taken over all finite subset of $\mathbb R^K$ and $Z_{\theta^\star}\sim\mathcal N\big(0, \nabla_\theta\phi(\theta^\star)^T\mathcal I^{-1}_{\boldsymbol{X}}(\theta^\star)\nabla_\theta\phi(\theta^\star)\big)$.
\end{prop}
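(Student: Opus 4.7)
The plan is to verify that the statistical experiment \eqref{statexp} fits into a standard regular parametric framework (in the sense of \cite{bickel1993efficient,van2000asymptotic}) and then invoke the classical machinery: MLE consistency, local asymptotic normality (LAN), the Cramér--Rao / delta method calculation for \eqref{mleinv}, and the H\'ajek--LeCam convolution/minimax theorem for \eqref{lammmmax}. Concretely, the argument splits into five steps.

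\textbf{Step 1 (LAN/DQM).} First I would establish that the model is differentiable in quadratic mean (DQM) at every $\theta^\star\in\Theta$. Since conditional on $(v,T)$ the sample $\boldsymbol S\cup\boldsymbol J$ is Gaussian $\mathcal N(\mu(\theta,T,v),\Sigma(T))$ with $\Sigma(T)$ independent of $\theta$, the score has the explicit form $\mathcal J(\theta,T,v)^T\Sigma^{-1}(T)(\boldsymbol S\cup\boldsymbol J-\mu(\theta,T,v))$. Continuous differentiability of $\mu(\theta,T,v)$ in $\theta$ (Assumption 3.3), together with the uniform lower bound $\Sigma(T)\succcurlyeq \epsilon\sigma^2 \mathbf I$ (Assumption 3.2) and the local Lipschitz/boundedness of $\mu$ (Assumption 3.4), give a square-integrable dominating function for the score in a neighborhood of $\theta^\star$; this yields DQM and hence LAN via the standard Gaussian-shift argument. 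The Fisher information of the experiment is then precisely \eqref{statexpFI}, which by Assumption \ref{nonsigfishh} is nonsingular.

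\textbf{Step 2 (Consistency).} Next I would show $\hat\theta_n\to\theta^\star$ in probability. The log-likelihood summand, after dropping the $\theta$-independent $\log g_{\mathrm{order}}(v,T)$ term, is a quadratic form in $\mu(\theta,T,v)-(\boldsymbol S\cup\boldsymbol J)$, which by Assumption 3.4 is locally Lipschitz and bounded in $\theta$, uniformly in $(v,T)$ a.s. This yields a Glivenko--Cantelli / uniform law of large numbers on compact neighborhoods of $\theta^\star$; identifiability (Assumption 3.5) ensures the population criterion $\theta\mapsto \mathbb E_{\theta^\star}l(\theta|\boldsymbol X)$ is uniquely maximized at $\theta^\star$, and a standard argmax continuous mapping argument (Theorem 5.7 in \cite{van2000asymptotic}) delivers consistency. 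The main subtlety is that $\Theta$ is only open, not compact; one handles this by noting that nonsingular Fisher information at $\theta^\star$ guarantees strict concavity of the expected log-likelihood in a neighborhood of $\theta^\star$, so it suffices to restrict attention to a compact neighborhood once the estimator is shown to be eventually in such a neighborhood, a standard ``envelope'' argument using the Lipschitz bound in Assumption 3.4.

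\textbf{Step 3 (Asymptotic normality and delta method).} With DQM, consistency, local Lipschitz log-likelihood, and nonsingular Fisher information all in hand, Theorem 5.39 of \cite{van2000asymptotic} (the classical MLE asymptotic normality theorem for regular parametric models) gives
\begin{equation*}
\sqrt{n}(\hat\theta_n-\theta^\star)\overset{d}{\to}\mathcal N\bigl(0,\mathcal I^{-1}_{\boldsymbol X}(\theta^\star)\bigr).
\end{equation*}
The second limit in \eqref{mleinv} is then an immediate application of the delta method to the differentiable $\phi$.

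\textbf{Step 4 (Local asymptotic minimax lower bound).} Finally, the bound \eqref{lammmmax} is a direct application of the H\'ajek--LeCam local asymptotic minimax theorem (Theorem 8.11 / Theorem 25.21 in \cite{van2000asymptotic}): LAN at $\theta^\star$ with nonsingular Fisher information (already established in Step 1), differentiability of $\phi$, and the bowl-shaped property of $l$ are exactly the hypotheses of that theorem, and its conclusion matches \eqref{lammmmax} verbatim with $Z_{\theta^\star}\sim\mathcal N(0,\nabla_\theta\phi(\theta^\star)^T\mathcal I^{-1}_{\boldsymbol X}(\theta^\star)\nabla_\theta\phi(\theta^\star))$.

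\textbf{Main obstacle.} The routine parts are Steps 3 and 4, where one essentially cites black-box theorems once the regularity conditions are verified. The conceptually delicate step is Step 1 combined with the consistency argument of Step 2: one must leverage the metaorder-characteristic integration in \eqref{statexpFI} carefully, because differentiability, Lipschitzness and the lower bound on $\Sigma(T)$ are stated pointwise in $(v,T)$ and must be lifted to integrated statements under $G_{\mathrm{order}}$. The boundedness of the support of $(v,T)$ in Assumption \ref{orderindepG} is what makes this lifting painless; without it, envelope functions controlling the score and its derivatives would not be integrable.
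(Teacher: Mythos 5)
Your proposal is correct and follows essentially the same route as the paper's proof: both verify quadratic-mean differentiability from the Gaussian structure and Assumptions 3.3/4, establish local Lipschitzness of the log-likelihood from Assumptions 2 and 3.4, obtain consistency, invoke the classical MLE asymptotic normality theorem in \cite{van2000asymptotic} (your Theorem 5.39 versus the paper's Theorem 7.12, which are the same result), and conclude \eqref{lammmmax} from Theorem 8.11. The only cosmetic difference is that you spell out the consistency and delta-method steps in more detail than the paper, which simply cites Theorem 5.14.
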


The proof is presented in the appendix. The argument follows from section 8.7 of \cite{van2000asymptotic} where the asymptotic optimality of the MLE and its Fisher information is argued from the local asymptotic minimax perspective. Many other optimality criteria exist and are discussed in Appendix as well. The main takeaway is that, if \eqref{mleinv} is taken as the basis for asymptotic optimality, one should aim to design an experiment that maximizes the Fisher information. To see why, consider two statistical experiments \eqref{statexp} based on different designs $\boldsymbol X$ and $\boldsymbol Y$ (e.g., price trajectory versus total cost) and one wants to predict $c_{\text{vwap}}(\theta)=\int c_{\text{vwap}}(\theta^\star,T,v)G_{\text{order}}(dv,dT)$ based on $\hat\theta_{n,\boldsymbol X}$ or $\hat\theta_{n,\boldsymbol Y}$. Then, if $c_{\text{vwap}}(\theta)$ is differentiable w.r.t $\theta$ with gradient $\nabla_\theta c_{\text{vwap}}$, one can show that, based on \eqref{mleinv} and the delta method \cite{van2000asymptotic}, the asymptotic variance of the plug-in estimator satisfies 
\begin{align*}
    \sqrt{n}\big(c_{\text{vwap}}(\hat\theta_{n,\boldsymbol X})-c_{\text{vwap}}(\theta^\star)\big)\overset{d}{\rightarrow} \text{ } &  \mathcal N \big(0, \nabla_\theta c^T_{\text{vwap}}(\theta^\star)\cdot\mathcal I^{-1}_{\boldsymbol X}(\theta^\star)\cdot \nabla_\theta c_{\text{vwap}}(\theta^\star)\big) \nonumber\\
    \sqrt{n}\big(c_{\text{vwap}}(\hat\theta_{n,\boldsymbol Y})-c_{\text{vwap}}(\theta^\star)\big)\overset{d}{\rightarrow} \text{ } &\mathcal N \big(0, \nabla_\theta c^T_{\text{vwap}}(\theta^\star)\cdot\mathcal I^{-1}_{\boldsymbol Y}(\theta^\star)\cdot \nabla_\theta c_{\text{vwap}}(\theta^\star)\big).
\end{align*}
Now, if $\mathcal I_{\boldsymbol X}(\theta^\star)\succcurlyeq\mathcal I_{\boldsymbol Y}(\theta^\star)$ (i.e., $A\succcurlyeq B$ denotes the matrix $A-B$ is positive semi-definite), it can be shown that $\mathcal I^{-1}_{\boldsymbol X}(\theta^\star)\preccurlyeq\mathcal I^{-1}_{\boldsymbol Y}(\theta^\star)$ and
\begin{equation}\label{asymvar}
    \nabla_\theta c^T_{\text{vwap}}(\theta^\star)\cdot\mathcal I^{-1}_{\boldsymbol X}(\theta^\star)\cdot \nabla c_{\text{vwap}}(\theta^\star)\leq\nabla_\theta c^T_{\text{vwap}}(\theta^\star)\cdot\mathcal I^{-1}_{\boldsymbol Y}(\theta^\star)\cdot \nabla c_{\text{vwap}}(\theta^\star),
\end{equation}
which implies $\hat\theta_{n,\boldsymbol X}$ has lower asymptotic variance for estimating $c_{\text{vwap}}(\theta^\star)$, or any other differentiable function of $\theta$, simply based on its greater Fisher information.

\subsubsection{Sufficiency and Other Lemmas}
In this section, we introduce several technical lemmas we shall use later. Following the previous discussion, one might want to maximize the Fisher information of an experiment \eqref{statexp}. An important related concept is sufficient statistics. Formally, given $\boldsymbol X$ in some statistical experiment \eqref{statexp} and a function $\phi$, one can create another statistical experiment using the \textit{statistics} $\phi(\boldsymbol X)$. This typically incurs a loss of information in terms of the Fisher information unless the statistic is sufficient. Here, a statistic $\phi(\cdot)$ is sufficient for $\{\mathbb P_\theta\}_{\theta\in\Theta}$ if the conditional distribution of $\boldsymbol{S}$ given $\phi(\boldsymbol{S})$ is free of $\theta$, under $\mathbb P_\theta$ for any $\theta\in\Theta$. We also give a formal characterization here which we shall use later. 

\begin{lemma}[Neyman-Fisher Factorization Theorem]\label{factor} Consider the statistical experiment \eqref{statexp} and its sample $\boldsymbol X$. A statistic $\phi(\boldsymbol X)$ is sufficient if and only if the likelihood function in \eqref{statexp} has the factorization
\begin{equation*}
    p(\boldsymbol{X}|\theta)=h(\boldsymbol{X})g(\phi(\boldsymbol{X}),\theta),
\end{equation*}
for some non-negative $h(\cdot), g(\cdot)$.
\end{lemma}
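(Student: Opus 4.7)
The plan is to prove both implications of this classical factorization theorem directly at the level of densities. Since the statistical experiment \eqref{statexp} is dominated (the conditional law of $\boldsymbol S\cup\boldsymbol J$ given $(v,T)$ is Gaussian by \eqref{gassexp} and the order distribution $G_{\text{order}}$ is exogenous by assumption \ref{orderindepG}), the joint density $p_\theta(\boldsymbol X)$ from \eqref{oglike} is explicit, so we can avoid invoking Halmos--Savage in its fully abstract form.

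For the ``only if'' direction, I would assume $\phi(\boldsymbol X)$ is sufficient and decompose the likelihood via the standard chain rule for densities:
\begin{equation*}
p_\theta(\boldsymbol X) \;=\; p_\theta\!\bigl(\boldsymbol X \,\big|\, \phi(\boldsymbol X)\bigr)\cdot p_\theta\!\bigl(\phi(\boldsymbol X)\bigr).
\end{equation*}
By the very definition of sufficiency, the first factor is independent of $\theta$, so setting $h(\boldsymbol X):= p\bigl(\boldsymbol X \mid \phi(\boldsymbol X)\bigr)$ and $g\bigl(\phi(\boldsymbol X),\theta\bigr):= p_\theta\bigl(\phi(\boldsymbol X)\bigr)$ delivers the required factorization, with both factors non-negative.

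For the ``if'' direction, assume $p_\theta(\boldsymbol X) = h(\boldsymbol X)\, g\bigl(\phi(\boldsymbol X),\theta\bigr)$. I would marginalize over the level set $\{\boldsymbol x:\phi(\boldsymbol x)=t\}$: because $g(\phi(\boldsymbol x),\theta)$ equals $g(t,\theta)$ and is therefore constant on the level set, it pulls out of the integral, giving
\begin{equation*}
p_\theta\!\bigl(\phi(\boldsymbol X)=t\bigr) \;=\; g(t,\theta)\!\int_{\{\phi(\boldsymbol x)=t\}}\! h(\boldsymbol x)\,d\boldsymbol x.
\end{equation*}
Forming the conditional density, the $\theta$-dependent factor cancels:
\begin{equation*}
p_\theta\!\bigl(\boldsymbol X \,\big|\, \phi(\boldsymbol X)=t\bigr) \;=\; \frac{h(\boldsymbol X)\,g(t,\theta)}{g(t,\theta)\int_{\{\phi(\boldsymbol x)=t\}} h(\boldsymbol x)\,d\boldsymbol x} \;=\; \frac{h(\boldsymbol X)}{\int_{\{\phi(\boldsymbol x)=t\}} h(\boldsymbol x)\,d\boldsymbol x},
\end{equation*}
which is manifestly free of $\theta$, establishing sufficiency.

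The only real obstacle is giving rigorous meaning to the ``integral over the level set'' $\{\phi(\boldsymbol x)=t\}$ when $\phi$ is a general measurable map rather than a coordinate projection. This requires the disintegration-of-measure theorem, or equivalently the existence of regular conditional probabilities, which is available here since the sample space in \eqref{statexp} is Polish and the family $\{\mathbb P_\theta\}$ is dominated by a common $\sigma$-finite measure. I would not reprove these technicalities but instead cite the standard treatments in \cite{van2000asymptotic} and \cite{bickel1993efficient}, noting that the substantive content of the lemma is the algebraic cancellation of $g(t,\theta)$ exhibited above.
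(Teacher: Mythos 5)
The paper itself offers no proof of Lemma~\ref{factor}: it is stated as the classical Neyman--Fisher factorization theorem and used as a black box, so there is no in-paper argument to compare against. Your proposal is the standard textbook proof and is essentially sound. The ``only if'' direction via $p_\theta(\boldsymbol X)=p_\theta(\boldsymbol X\mid\phi(\boldsymbol X))\,p_\theta(\phi(\boldsymbol X))$ and the ``if'' direction via cancellation of $g(t,\theta)$ in the conditional density are exactly the right algebraic skeleton, and you correctly identify the one genuine technical issue: the ``integral over the level set'' $\{\phi(\boldsymbol x)=t\}$ is not a Lebesgue integral on $\mathbb R^{N+M}$ (the level set is typically Lebesgue-null), so the marginal of $\phi(\boldsymbol X)$ and the conditional density must be defined through a disintegration of the dominating measure along $\phi$, or equivalently through regular conditional probabilities, which exist here because the sample space is Polish and the family is dominated. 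Deferring that step to a citation is appropriate for a result the paper itself does not reprove; \cite{keener2010theoretical} or \cite{lehmann2006theory} (both already in the bibliography and cited for exactly this kind of material) would be more natural references than \cite{van2000asymptotic}. One further remark: the fully rigorous proof for a general dominated family is the Halmos--Savage argument, which avoids level-set integrals entirely by constructing a privileged dominating measure (a countable convex combination of the $\mathbb P_\theta$) and characterizing sufficiency as $\sigma(\phi)$-measurability of the corresponding density; your density-level route is the more elementary one and is legitimate in this paper's setting precisely because \eqref{oglike} gives an explicit joint density with respect to the product of $G_{\text{order}}$ and Lebesgue measure. Also note that in the experiment \eqref{statexp} the component $(v,T)$ may be discrete, so the dominating measure is a product of possibly mixed types; this does not affect the argument but is worth a sentence if the proof were to be included.
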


The following lemmas relate a sufficient statistic with the Fisher information.
\begin{lemma}[Data Processing Inequality \cite{zamir1998proof}]\label{data processing}
Consider the statistical experiment \eqref{statexp} and its sample $\boldsymbol X$. Let $\phi(\boldsymbol X)$ be a statistic of data, then the statistical experiment based on $\phi(\boldsymbol X)$ satisfies,for $\theta\in \Theta$,
\begin{equation}\label{fishinequality}
   \mathcal I_{\boldsymbol{X}}(\theta) \succcurlyeq  \mathcal I_{\phi(\boldsymbol{X})}(\theta),
\end{equation}
 with the equality obtained if and only if $\phi(\boldsymbol X)$ is a sufficient statistic for the original statistical experiment.
\end{lemma}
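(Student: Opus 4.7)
The plan is to prove the inequality via the standard score-decomposition argument, then read off the equality condition from the Neyman-Fisher factorization already provided in Lemma \ref{factor}. Write $\boldsymbol Y = \phi(\boldsymbol X)$ and note that the density of $\boldsymbol Y$ under $\mathbb P_\theta$ is obtained by pushing forward the joint density $p_\theta(\boldsymbol X)$ along $\phi$. Concretely, $p_\theta(\boldsymbol y) = \mathbb E_\theta[\mathbb 1\{\phi(\boldsymbol X) = \boldsymbol y\}]$ (or its integral analogue), and differentiating under the integral — which is legitimate in our Gaussian experiment by Assumption \ref{ras3} (smoothness of $\mu(\theta, T, v)$ in $\theta$ and the finite Fisher information on compactly supported $(v,T)$) — yields the identity
\begin{equation*}
    \nabla_\theta \log p_\theta(\boldsymbol Y) \;=\; \mathbb E_\theta\!\left[\,\nabla_\theta \log p_\theta(\boldsymbol X)\;\big|\;\boldsymbol Y\,\right].
\end{equation*}
That is, the score of the coarsened experiment is the conditional expectation of the score of the fine experiment given the statistic.

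Next I would apply the conditional Jensen inequality to the outer-product map $u \mapsto uu^T$, which is matrix-convex on $\mathbb R^K$. Writing $s(\theta) \triangleq \nabla_\theta \log p_\theta(\boldsymbol X)$ and using the tower property,
\begin{align*}
    \mathcal I_{\boldsymbol Y}(\theta) &= \mathbb E_\theta\!\left[\mathbb E_\theta[s(\theta)\mid \boldsymbol Y]\,\mathbb E_\theta[s(\theta)\mid \boldsymbol Y]^T\right] \\
    &\preccurlyeq \mathbb E_\theta\!\left[\mathbb E_\theta[s(\theta)s(\theta)^T\mid \boldsymbol Y]\right] \;=\; \mathbb E_\theta[s(\theta)s(\theta)^T] \;=\; \mathcal I_{\boldsymbol X}(\theta),
\end{align*}
which establishes \eqref{fishinequality}. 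The residual $\mathcal I_{\boldsymbol X}(\theta) - \mathcal I_{\boldsymbol Y}(\theta)$ equals the expected conditional covariance of the score given $\boldsymbol Y$, and is therefore PSD with equality if and only if the score $s(\theta)$ is $\sigma(\boldsymbol Y)$-measurable almost surely, i.e.\ $s(\theta) = \nabla_\theta \log p_\theta(\boldsymbol Y)$ a.s.

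Finally, for the equality characterization, I would pivot to Lemma \ref{factor}. If $\phi$ is sufficient, the factorization $p_\theta(\boldsymbol X) = h(\boldsymbol X)\,g(\phi(\boldsymbol X),\theta)$ yields $s(\theta) = \nabla_\theta \log g(\phi(\boldsymbol X), \theta)$, which is a function of $\boldsymbol Y$ alone, so equality holds. Conversely, if the residual PSD term vanishes then $s(\theta)$ is $\sigma(\boldsymbol Y)$-measurable for every $\theta$ in a neighborhood of the true parameter; integrating in $\theta$ along a path shows $\log p_\theta(\boldsymbol X) - \log p_{\theta_0}(\boldsymbol X)$ is a function of $\boldsymbol Y$, which supplies exactly the factorization required by Lemma \ref{factor} (with $h(\boldsymbol X) = p_{\theta_0}(\boldsymbol X)$), so $\phi$ is sufficient.

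The main obstacle I anticipate is the interchange of differentiation and integration needed to derive the conditional-expectation representation of the score of $\boldsymbol Y$: it requires dominated-convergence-type control on $\partial_\theta p_\theta$. In the present Gaussian setup this is routine given Assumption \ref{ras3}, but it is the only non-algebraic step and needs to be handled carefully, particularly when $\phi$ has singular pushforwards (e.g.\ when $\boldsymbol Y$ lives on a lower-dimensional manifold), in which case one works with a Radon-Nikodym derivative with respect to the induced measure rather than Lebesgue measure. Once that identity is in place, the rest of the argument is Jensen for matrix-convex outer products and a direct appeal to Lemma \ref{factor} for the equality case.
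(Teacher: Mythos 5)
Your argument is correct and lands in the same place, but it is organized around a different identity than the paper's. The paper proves the inequality by applying the chain rule for Fisher information (Lemma \ref{mutual}) to the pair $(\boldsymbol X,\phi(\boldsymbol X))$: it expands $\mathcal I_{\boldsymbol X,\phi(\boldsymbol X)}(\theta)$ in both orders, notes that $\mathcal I_{\phi(\boldsymbol X)\mid\boldsymbol X}(\theta)=0$ because $\phi(\boldsymbol X)$ is deterministic given $\boldsymbol X$, and identifies the gap as $\mathbb E_{\phi(\boldsymbol X)}[\mathcal I_{\boldsymbol X\mid\phi(\boldsymbol X)}(\theta)]\succcurlyeq 0$. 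You instead derive the projection identity $\nabla_\theta\log p_\theta(\boldsymbol Y)=\mathbb E_\theta[\nabla_\theta\log p_\theta(\boldsymbol X)\mid\boldsymbol Y]$ and apply conditional Jensen to the outer product, identifying the gap as the expected conditional covariance of the score. These are two faces of the same computation: the conditional score is $s-\mathbb E_\theta[s\mid\boldsymbol Y]$, so your residual term coincides with the paper's $\mathbb E[\mathcal I_{\boldsymbol X\mid\phi(\boldsymbol X)}(\theta)]$. Your route is more self-contained (it does not take the chain-rule lemma as a black box), at the price of having to justify differentiation under the integral for the pushforward density, which you correctly flag and which Assumption \ref{ras3} supports in the Gaussian setting.

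One caution on the equality characterization. The ``if'' direction via Lemma \ref{factor} matches the paper. For the ``only if'' direction, vanishing of the residual gives, for each fixed $\theta$, that $s(\theta)$ is $\sigma(\boldsymbol Y)$-measurable only up to a $\theta$-dependent null set; the paper explicitly warns, citing the counterexample in \cite{kagan2005sufficiency}, that without an additional regularity condition (continuous differentiability of the density, so that the exceptional sets can be unified over a countable dense set of $\theta$ before integrating the score along a path) equality of Fisher information does \emph{not} imply sufficiency in general. Your path-integration argument is the standard repair and is valid in the present Gaussian experiment, but as written it silently crosses exactly the point where the unqualified claim can fail, so you should state the extra regularity you are invoking.
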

\begin{lemma}[Data Refinement Inequality]\label{data refine} Consider the statistical experiment \eqref{statexp} and its sample $\boldsymbol X$. Let $\boldsymbol U$ and $\boldsymbol V$ be two statistics of the experiments, respective; then the statistical experiment based on them satisfies, for $\theta\in\Theta$,
\begin{align*}
    \mathcal I_{\boldsymbol U,\boldsymbol V}(\theta)\succcurlyeq\mathcal I_{\boldsymbol U}(\theta).
\end{align*}
Moreover, if $\boldsymbol V=\phi(\boldsymbol U)$ for some $\phi(\cdot)$ which is a bijective mapping, then 
\begin{equation*}
   \mathcal I_{\boldsymbol U}(\theta)=\mathcal I_{\boldsymbol V}(\theta). 
\end{equation*}
\end{lemma}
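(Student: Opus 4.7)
Both assertions are immediate corollaries of the Data Processing Inequality (Lemma \ref{data processing}) that was just stated, together with Lemma \ref{factor}. The plan is to exhibit each Fisher information as obtainable from another one by application of a statistic, and then play the two resulting inequalities against each other for the equality case.

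For the first inequality, I would observe that $\boldsymbol U$ is itself a statistic of the pair $(\boldsymbol U,\boldsymbol V)$, namely the coordinate projection $\pi_1\colon (u,v)\mapsto u$, which is clearly measurable. Applying Lemma \ref{data processing} to the statistical experiment with sample $(\boldsymbol U,\boldsymbol V)$ and the statistic $\pi_1(\boldsymbol U,\boldsymbol V)=\boldsymbol U$ then yields directly
\begin{equation*}
\mathcal I_{\boldsymbol U,\boldsymbol V}(\theta)\succcurlyeq \mathcal I_{\pi_1(\boldsymbol U,\boldsymbol V)}(\theta)=\mathcal I_{\boldsymbol U}(\theta),
\end{equation*}
which is the claim. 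Note that I do not need to compute either Fisher information explicitly; the result is a purely structural consequence of the data processing inequality.

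For the equality in the bijective case, I would apply the Data Processing Inequality twice. Since $\boldsymbol V=\phi(\boldsymbol U)$ with $\phi$ bijective (and bi-measurable, which I will assume implicitly as is standard), $\boldsymbol V$ is a statistic of $\boldsymbol U$, giving
\begin{equation*}
\mathcal I_{\boldsymbol U}(\theta)\succcurlyeq \mathcal I_{\boldsymbol V}(\theta).
\end{equation*}
Conversely, $\boldsymbol U=\phi^{-1}(\boldsymbol V)$ is a statistic of $\boldsymbol V$, so the same lemma yields $\mathcal I_{\boldsymbol V}(\theta)\succcurlyeq \mathcal I_{\boldsymbol U}(\theta)$. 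Combining, we obtain $\mathcal I_{\boldsymbol U}(\theta)=\mathcal I_{\boldsymbol V}(\theta)$. Equivalently, one could invoke Lemma \ref{factor}: because $\phi$ is bijective, the conditional distribution of $\boldsymbol U$ given $\boldsymbol V$ is the Dirac mass at $\phi^{-1}(\boldsymbol V)$ and hence free of $\theta$, so $\boldsymbol V$ is sufficient for the experiment generated by $\boldsymbol U$, and the equality clause in Lemma \ref{data processing} gives the result.

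I do not expect a real obstacle here; the only mild care-taking concerns measurability of $\phi^{-1}$, which should be assumed as part of ``bijective mapping'' in this paper's usage (analogous to the change-of-variables formula for densities). No new calculation of Jacobians, covariance matrices, or expectations through the formula of Proposition \ref{fishtvtv} is required, since everything is settled by abstract application of the two previously proved lemmas.
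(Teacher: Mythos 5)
Your proposal is correct, and it reaches both conclusions by a route that differs in packaging from the paper's. The paper proves this lemma directly from the chain rule for Fisher information (Lemma \ref{mutual} in the appendix), writing $\mathcal I_{\boldsymbol U,\boldsymbol V}(\theta)=\mathcal I_{\boldsymbol U}(\theta)+\mathbb E_{\boldsymbol U}[\mathcal I_{\boldsymbol V|\boldsymbol U=\boldsymbol u}(\theta)]$ and using positive semi-definiteness of the conditional information for the inequality; for the bijective case it writes the chain rule in both orders and argues that both conditional informations vanish. You instead stay one level up and use only the already-stated Data Processing Inequality (Lemma \ref{data processing}): the coordinate projection $\pi_1$ realizes $\boldsymbol U$ as a statistic of $(\boldsymbol U,\boldsymbol V)$, and the bijective case follows from applying the DPI in both directions via $\phi$ and $\phi^{-1}$. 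Since the paper's own proof of the DPI is itself the chain-rule computation, the two arguments rest on the same machinery, but yours has a small advantage in the equality case: the two-sided application of the inequality makes it unnecessary to check that the conditional Fisher informations are zero, and it does not even need the ``equality iff sufficient'' clause. The only care points, which you correctly flag, are that Lemma \ref{data processing} is stated for a statistic of the original sample $\boldsymbol X$ whereas you apply it with $(\boldsymbol U,\boldsymbol V)$ in the role of the sample (harmless, since the DPI holds for any statistic of a statistic), and that bi-measurability of $\phi^{-1}$ must be read into ``bijective mapping,'' which matches the paper's implicit usage. No gap.
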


The proofs for Lemmas \ref{data processing} and \ref{data refine} are in the Appendix for completeness. In other words, the Fisher information of the experiment using $\phi(\boldsymbol{X})$ is generally smaller than the one using $\boldsymbol X$, unless statistic $\phi(\boldsymbol{X})$ is sufficient, in which case  $\boldsymbol{X}$ provides no extra information over $\phi(\boldsymbol X)$ for estimating $\theta$ (thus the equality in \eqref{fishinequality}). Note the original sample $\boldsymbol X$ is always, trivially, a sufficient statistic. However, the whole data $\boldsymbol{X}$ is not always required.
\begin{example}
\textup{Consider a simple model under \eqref{main} which is a random walk with drift expressed by $S_t=S_0+\theta f(v) t +\sigma W_t$ for some $f(\cdot)$. Let $\boldsymbol X=(v,T)\cup\boldsymbol S$ be the statistical experiment in \eqref{statexp} where $\boldsymbol S=\{S_{t_i}\}_{i\in[N]}$. It is straightforward to check that the log-likelihood in \eqref{loglike} is given by 
\begin{align*}
   &l(\theta|\boldsymbol{X})\nonumber\\
   =&\log g_{\text{order}}(v,T)-\sum_{i=1}^N\frac{1}{2}\log(2\pi\sigma^2(t_i-t_{i-1}))-\frac{1}{2\sigma^2(t_i-t_{i-1})}\sum_{i=1}^N \Big((S_{t_i}-S_{t_{i-1}})-\theta f(v) (t_i-t_{i-1}))\Big)^2 \nonumber\\
\end{align*}
which can be reduced to $h(\boldsymbol{S},T,v)+g(S_T-S_0, T,v, \theta)$ for some $h,g$ (i.e., isolate the term with $\theta$). Using Theorem \ref{factor}, we can see that $(S_T-S_0,v,T)$ is sufficient. Thus, for this model, price trajectory $\boldsymbol S$ is not needed for the estimation of $\theta$, the difference between the end price and start price would suffice (in fact $S_T$ would suffice since we assume $S_0$ is known)}.
\end{example}


In the next two sections, we examine sampling schemes from price trajectory in two popular market impact models: namely the Almgren Chriss Model and the Propagator Model. In particular, given Lemma~\ref{data processing}, which establishes the connection between sufficient statistics and the Fisher information, we are able to find some insight in designing efficient statistical experiments.

\section{The Almgren-Chriss Model}\label{s3}

The Almgren-Chriss model remains one of the most popular and widely adopted market impact models since its introduction \cite{almgren2001optimal} and the paper \cite{almgren2005direct} is dedicated to parameter estimation problems for the model:
\begin{align}\label{acdynamic}
     S_t=&S_0+S_0(g(v)t+h(v))+S_0\sigma \int_0^t dW_s, \text { when $t\leq T$} \nonumber\\
     S_t=&S_0+S_0g(v)T+S_0\sigma \int_0^t dW_s, \text{ when $t>T$. }
\end{align}
In Almgren-Chriss model, $g$ is the permanent impact function and $h$ is the temporary impact function. In \cite{almgren2005direct}, $h$ and $g$ are parameterized by power law:
\begin{align}\label{power}
    g(v;\theta)=&\gamma v^\alpha\nonumber\\
    h(v;\theta)=&\eta v^\beta,
\end{align}
for $\theta=(\gamma,\eta,\alpha,\beta)$. Although $\alpha$ can generally vary between $[0,1]$, it was argued in \cite{almgren2005direct,huberman2004price} that the linearity of permanent impact function $g$ (i.e., $\alpha=1$) is desirable for a model to preclude price manipulation strategy. Moreover, $\beta=0.5$ corresponds to a plausible square-root temporary impact function for $h$. However, during the fitting process, these are not taken as restrictions. There are several practical modifications in the model estimation section of \cite{almgren2005direct}:
\begin{enumerate}
    \item In order to facilitate the estimation of the permanent impact in the Almgren-Chriss model, the price trajectory $\boldsymbol{S}$ includes an additional price point $S_{T_{\text{post}}}$ where $T_{\text{post}}>T$; this can also be seen from the ``piece-wise" dynamic in \eqref{acdynamic}. \cite{almgren2005direct} indicated that the choice of $T_{\text{post}}$ being 30 minutes after $T$ works well. In this way, in the experiment design \eqref{statexp}, we can fix a constant $\tau_{\text{delay}}$ so that $$T_{\text{post}}= T+\tau_{\text{delay}},$$ for $\tau_{\text{delay}}=0.077$ (i.e., 30 minutes delay in a day with 6.5 trading hours). In such manner, we remove the randomness of choosing $T_{\text{post}}$ given $T$, similar to the way we use fixed $\tau_i$ to choose $t_i=\tau_i T$.
    \item The impact functions $g$ and $h$ are further scaled by volatility $\sigma$. For a single stock, this can be subsumed in the coefficient $\gamma,\eta$ in \eqref{power}. 
    \item A \textit{liquidity factor} in the form of shares outstanding with power law exponent is inserted to characterize the fact that the market impact is not just based on $(T,v)$, but also on the stock's liquidity condition. For a single stock, this can also be subsumed in $\gamma,\eta$ in \eqref{power}.
\end{enumerate}

We will again discuss these modifications in detail in the simulation section where we reproduce and compare with the simulation results in \cite{almgren2005direct}. In summary, these modifications facilitate a cross-sectional description of market impact, although we focus on the single stock analysis in this paper. Finally, for our discussion on \eqref{acdynamic}, aside from assumptions in section \ref{theorystat}, we do not put restrictions on the exact form of $h(\cdot ;\theta)$, $g(\cdot ;\theta)$, i.e., we allow them to take in forms other than the power law parametric form \eqref{power}. With the affine transformation $P_t=\frac{S_t-S_0}{S_0}$, \cite{almgren2005direct} estimated $\theta$ using two quantities termed as \textit{permanent impact} $I$ and \textit{realized impact} $J$:
\begin{align}
    I=&\frac{S_{T_\text{post}}-S_0}{S_0}=P_{T_\text{post}}, \nonumber\\
    J=&\frac{\int_0^T S_{t}dt/T-S_0}{S_0}=\frac{1}{T}\int_0^T P_tdt,
\end{align}
where, under the model \eqref{acdynamic}, conditional on $(v,T)$ the joint distribution of $I$ and $J$ follow a Gaussian one  \eqref{gassexp}:
\begin{align}\label{iandj}
   & \begin{pmatrix}I\\J-I/2\end{pmatrix} \sim\mathcal N
        \begin{pmatrix}
        \mu_{I,J}(\theta,T,v),
          \Sigma_{I,J}(T)
    \end{pmatrix}\nonumber\\
    &\text { where } \mu_{I,J}(\theta,T,v)\triangleq\begin{pmatrix}Tg(v;\theta)\\h(v;\theta)\end{pmatrix},  \Sigma_{I,J}(T)\triangleq\sigma^2\begin{bmatrix}
 T_\text{post}& -\frac{T_\text{post}-T}{2} \\
-\frac{T_\text{post}-T}{2}  & \frac{T_\text{post}}{4}-\frac{T}{6}
\end{bmatrix}.
  \end{align}
 A partial derivation of the above can be found in \cite{almgren2005direct}, but we also include a full derivation in the Appendix for completeness. Then, in the parameter estimating phase (section 4.2 of \cite{almgren2005direct}), \cite{almgren2005direct} uses the metaorder data to fit the normalized residuals of $(I,J-I/2)$ using Gauss-Newton optimization algorithm, since this is a non-linear least square problem for the $\mu$ given by \eqref{power}. This fits exactly in our frame of \eqref{statexp}, as the maximum likelihood estimation \eqref{mlebasic} for the Gaussian experiment based on $I,J$ in \eqref{gassexp} is equivalent to solving the following non-linear least squares problem:
\begin{align}\label{sumofsquare}
    &\max_{\theta\in\Theta}\sum_{i=1}^n l(\theta|I_i, J_i)\ratio\Leftrightarrow \nonumber\\
    &\min_{\theta\in\Theta}\sum_{i=1}^n\Big(\begin{bmatrix}
           I_i \\
           J_i-I_i/2
         \end{bmatrix}-\mu_{I,J}(\theta, T_i, v_i)\Big)^T\Sigma^{-1}_{I,J}(T_i)\Big(\begin{bmatrix}
           I_i \\
           J_i-I_i/2
         \end{bmatrix}-\mu_{I,J}(\theta, T_i, v_i)\Big),
\end{align}
based on $n$ total orders samples $(I_i, J_i)_{i\in[n]}$, executed under $(T_i,v_i)_{i\in[n]}$. This equivalence allows us to connect the parameter estimation method in \cite{almgren2005direct} with the asymptotic theory of statistical estimation in Section \ref{theorystat}. It follows from Lemma \ref{data refine}, Lemma \ref{jacob} and Proposition \ref{fishtvtv}, if the Jacobian of $\mu_{I,J}(\theta, T,v)$ w.r.t $\theta$ is $\mathcal J_{I,J}(\theta, T,v)$, then the Fisher information for experiments based on $I,J$ is
\begin{align}\label{FI-IJ}
    \mathcal I_{I,J}(\theta)=\mathcal I_{I,J-I/2}(\theta)=&\mathbb E_{(v,T)\sim G_{\text{order}}}\bigg[\mathcal J^T_{I,J}(\theta, T,v)\cdot\Sigma_{I,J}^{-1}(T)\cdot\mathcal J_{I,J}(\theta, T,v)\bigg]\nonumber\\
    =&\mathbb E_{(v,T)\sim G_{\text{order}}}\bigg[\mathcal J^T_{I,J}(\theta, T,v)\cdot\sigma^{-2}\begin{bmatrix}
 \frac{1}{T}\frac{T_\text{post}/4-T/6}{T_\text{post}/3-T/4}& \frac{1}{T}\frac{T_\text{post}/2-T/2}{T_\text{post}/3-T/4} \\
\frac{1}{T}\frac{T_\text{post}/2-T/2}{T_\text{post}/3-T/4}  &  \frac{1}{T}\frac{T_\text{post}}{T_\text{post}/3-T/4}
\end{bmatrix}\cdot\mathcal J_{I,J}(\theta, T,v)\bigg].
\end{align}
\begin{example}\label{almgrenfit}
 For fitting the power law model \eqref{power} in Almgren-Chriss model with $\theta=(\gamma,\eta,\alpha,\beta)$, one can calculate $$\mathcal J_{I,J}(\theta, T,v)=\begin{bmatrix}
 Tv^\alpha& 0 &T\gamma v^{\alpha}\ln(v)&0\\
0  &  v^\beta&0&\eta v^{\beta}\ln(v)
\end{bmatrix},$$ which means that, based on \eqref{FI-IJ}, $\mathcal I_{I,J}(\theta)\in\mathbb R^{4\times4}$ does not have full rank (hence singular) if $(v,T)\sim \delta_{(v_0,T_0)}$ (i.e., a point mass). In such case, the optimization in \eqref{sumofsquare} could also be under-determined. However, if one only tries to determine the exponent of the power law, then $\theta=(\alpha,\beta)$ and $\mathcal J_{I,J}(\theta, T,v)=\begin{bmatrix}
  T\gamma_0v^{\alpha}\ln(v)&0\\
0&\eta_0v^{\beta}\ln(v)
\end{bmatrix}$ for some given $(\gamma_0,\eta_0)$. In summary, the non-singularity of $ \mathcal I_{I,J}(\theta)$ largely depends on the variability of $(v,T)\sim G_{\text{order}}$ for multi-dimensional $\theta$.
\end{example}

In a case-by-case fashion, we prove the following in the appendix, so that our results from previous sections can be applied. 
\begin{lemma}[Regularity Check]\label{powerregular}
    Consider the statistical experiment \eqref{statexp} based on Almgren-Chriss model with power law \eqref{power} impact and $I,J$ in \eqref{iandj}. 
    {Under Assumptions~\ref{modelspec} and~\ref{orderindepG}, Assumptions \ref{ras3} and \ref{nonsigfishh} are satisfied provided that the marginal distribution of $G_{\text{order}}$ on $v$ has a support with infinite cardinality.}
\end{lemma}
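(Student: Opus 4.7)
The plan is to verify the five parts of Assumption \ref{ras3} and the single requirement of Assumption \ref{nonsigfishh} separately, using the explicit formulas $\mu_{I,J}(\theta,T,v)=(T\gamma v^\alpha,\eta v^\beta)^T$ and $\Sigma_{I,J}(T)$ recorded in \eqref{iandj}, together with the bounded supports $v\in[v_L,v_U]$ and $T\in[T_L,T_U]$ with $v_L,T_L>0$ from Assumption \ref{orderindepG}. The infinite-cardinality hypothesis on the marginal of $v$ will be invoked only at the identifiability and nonsingularity steps.

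The smoothness conditions dispatch quickly. Since $v\geq v_L>0$, both $(\gamma,\alpha)\mapsto T\gamma v^\alpha$ and $(\eta,\beta)\mapsto \eta v^\beta$ are real-analytic in $\theta$, establishing part~3 of Assumption~\ref{ras3}. On any compact $\theta$-neighborhood, jointly with $(v,T)\in[v_L,v_U]\times[T_L,T_U]$, continuity yields uniform bounds on $\mu_{I,J}$ and $\nabla_\theta \mu_{I,J}$; the latter gives local Lipschitzness via the mean value theorem, verifying part~4. For part~2, I would compute $\det\Sigma_{I,J}(T)/\sigma^4$ in closed form and check continuity and strict positivity on $[T_L,T_U]$ (using $T_{\text{post}}=T+\tau_{\text{delay}}>T$ together with positivity of both diagonal entries); compactness then delivers the uniform lower bound $\epsilon>0$ on the smaller eigenvalue. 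For part~1, under $\mathbb P_\theta$ the statistic $(I,J-I/2)$ is Gaussian with mean and covariance bounded uniformly in $(v,T)$, so $\mathbb E_\theta[l(\theta|\boldsymbol X)]$ is finite after bounding the quadratic term and the $\log|\Sigma_{I,J}(T)|$ and $\log g_{\text{order}}$ contributions on the compact support.

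The substantive step is identifiability (part~5) and nonsingularity of $\mathcal I_{I,J}(\theta)$ in Assumption~\ref{nonsigfishh}, where the infinite-cardinality hypothesis is essential. For identifiability, if $\mu_{I,J}(\theta_1,T,v)=\mu_{I,J}(\theta_2,T,v)$ for $G_{\text{order}}$-a.e.\ $(v,T)$, then $\gamma_1 v^{\alpha_1}=\gamma_2 v^{\alpha_2}$ for infinitely many $v>0$; taking logarithms turns this into an affine equation in $\log v$ that holds on an infinite set, forcing $\gamma_1=\gamma_2$ and $\alpha_1=\alpha_2$, and analogously $\eta_1=\eta_2$, $\beta_1=\beta_2$. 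For nonsingularity, by \eqref{FI-IJ} and positive definiteness of $\Sigma_{I,J}^{-1}(T)$ it suffices to show that $c^T\mathcal I_{I,J}(\theta)c=0$ implies $\mathcal J_{I,J}(\theta,T,v)c=0$ $G_{\text{order}}$-a.s., which implies $c=0$. Writing $c=(c_\gamma,c_\eta,c_\alpha,c_\beta)^T$ and using the Jacobian from Example~\ref{almgrenfit}, the two rows of $\mathcal J_{I,J}c=0$ factor as $Tv^\alpha(c_\gamma+\gamma c_\alpha\log v)=0$ and $v^\beta(c_\eta+\eta c_\beta\log v)=0$. Since $T,v>0$, each affine function of $\log v$ must vanish on an infinite set and hence identically, giving $c=0$. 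Continuity of $\theta\mapsto\mathcal I_{I,J}(\theta)$ follows from continuity of $\mathcal J_{I,J}$ in $\theta$ and dominated convergence on the compact support.

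The main obstacle, and the place the argument is most delicate, is the nonsingularity step: it silently relies on the nondegeneracy $\gamma,\eta\neq 0$ at the evaluation point (without which the $\alpha$- or $\beta$-columns of $\mathcal J_{I,J}$ would vanish, as flagged in Example~\ref{almgrenfit}), so I would make this mild assumption explicit on $\theta^\star$. Everything else reduces to continuity and compactness arguments and does not depend on the specific power-law form beyond analyticity.
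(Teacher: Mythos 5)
Your proposal is correct and follows essentially the same route as the paper's proof: the identifiability and nonsingularity steps are handled identically, by reducing $\mathcal J_{I,J}(\theta,T,v)c=0$ (resp.\ equality of means) to affine functions of $\log v$ vanishing on a set of infinite cardinality. The one technical divergence is Assumption 3.2: you obtain the uniform eigenvalue bound by continuity and compactness of $T\in[T_L,T_U]$, whereas the paper factors $\Sigma_{I,J}(T)=\sigma^2 T\,\Sigma_\tau$ with $\tau=T_{\text{post}}/T-1$, computes the smallest eigenvalue of $\Sigma_\tau$ explicitly, and bounds it below by its value at $\tau=0$, yielding the concrete constant $\sigma^2 T_L\lambda_1(0)$; both are valid, the paper's being more explicit and yours shorter. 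Your observation that the nonsingularity (and in fact the identifiability of $\alpha,\beta$) silently requires $\gamma,\eta\neq 0$ is a genuine and correct catch --- the paper's parenthetical ``$x_1+x_3\ln(v)=0$ for all $v$ iff $x_1=x_3=0$'' drops the factor of $\gamma$ from the actual equation $x_1+x_3\gamma\ln(v)=0$, which obscures exactly this degeneracy; making the nondegeneracy of $\theta^\star$ explicit, as you propose, is the right fix.
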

{Lemma~\ref{powerregular} provides a sufficient condition for Assumptions 3 (3.5 in particular) and 4 to hold, although it is not necessary. Note that the condition on $G_{\text{order}}$ is satisfied for any distribution with a continuous density. } By checking the proof, one can also establish the result for discrete distribution $G_{\text{order}}$ with carefully chosen support whose cardinality is larger than the dimension of $\theta$.

\subsection{Sufficient Statistic from a Discrete Price Trajectory }\label{sus}

From Lemma \ref{powerregular}, in view of Section \ref{theorystat} and the regularity check of experiment based on \eqref{power}, we want to see if we can design experiments with Fisher information larger than $\mathcal I_{I,J}$. We first investigate this for the canonical price trajectory data of the form $\boldsymbol S_{\text{full}}=\{S_{t_i}\}_{i\in[N+1]}$ on a given grid $\{\Delta t, 2\Delta t,..., N\Delta t=T, T_{\text{post}}\}$ with $\Delta t=T/N$ and $N$ is a fixed number. Notably, this is also the setup for \cite{almgren2001optimal}, when they consider the optimal execution problem with $N$-trading period.

In this example, again taking the transform $P_t=\frac{S_t-S_0}{S_0}$, from \eqref{acdynamic}, we can see that the likelihood for the experiment based on $\boldsymbol{S}_{\text{full}}$, conditional on $(v,T)$, under Almgren-Chriss model is:
\begin{align}\label{likeli}
    &p_\theta(\boldsymbol{S}_{\text{full}}|v,T)\nonumber\\
    \propto& \exp\bigg(-\frac{(P_{t_1}-g(v;\theta)\Delta t-h(v;\theta))^2}{2\sigma^2\Delta t}-\Big(\sum_{i=2}^N\frac{(P_{t_i}-P_{t_{i-1}}-g(v;\theta)\Delta t)^2}{2\sigma^2\Delta t}\Big)-\frac{(P_{T_{\text{post}}}-P_T+h(v;\theta))^2}{2\sigma^2(T_{\text{post}}-T)}\bigg) \nonumber\\
    \propto & f_1(\boldsymbol{S}_{\text{full}},T,v)f_2\Big(g(v;\theta)P_T+h(v;\theta)(\frac{P_{t_1}}{\Delta t}-\frac{P_{T_{\text{post}}}-P_T}{T_{\text{post}-T}}), \theta\Big),
\end{align}
for some $f_1, f_2$. Using Lemma \ref{factor}, we can see that $(P_{t_1}, P_T, P_{T_{\text{post}}})$ is a sufficient statistic for $\theta$. In fact, $(P_{t_1}, P_T, P_{T_{\text{post}}})$ is sufficient, but not \textit{minimal sufficient}, as $(P_T,\frac{P_{t_1}}{\Delta t}-\frac{P_{T_{\text{post}}}-P_T}{T_{\text{post}-T}})$, a non-invertible mapping of $(P_{t_1}, P_T, P_{T_{\text{post}}})$, is also sufficient. For the definition of \textit{minimal sufficient}, see e.g. \cite{keener2010theoretical} or Theorem \ref{minisuff} below. However, the observation $\frac{P_{t_1}}{\Delta t}-\frac{P_{T_{\text{post}}}-P_T}{T_{\text{post}-T}}$ is somewhat artificial to this example. Consequently, from Lemma \ref{data processing} we have:
\begin{equation}\label{333}
    \mathcal I_{\boldsymbol{S}_{\text{full}}}(\theta)=\mathcal I_{P_{t_1}, P_T, P_{T_\text{post}}}(\theta).
\end{equation}
Although the construction of $t_1$ is artificial and related to the choice of $N$. However, since the choice of $N$ can be arbitrarily large, it is reasonable to suspect that the insight from \eqref{333} can be translated to generic price trajectory data $\boldsymbol S$. Moreover, since the post-order price $I=P_{T_\text{post}}$, can we actually then substitute the VWAP cost $J$ for two price points $P_{t}$ (for some $t$ in ``early" stages) and $P_{T}$ along the price trajectory data for a more asymptotically efficient estimate? Indeed, in the next section, we shall see these insights in several important ways.

\subsection{Sampling Strategy along the Price Trajectory}\label{samplestrat}
Based on the previous insight, given a experiment design based on partial trajectory data $\boldsymbol S=\{P_{t_i}\}_{i\in[N]} \cup \{P_{T_{\text{post}}}\}$ with $t_N=T$, one might ask:
\begin{enumerate}
    \item Does sampling more than three points along the trajectory (i.e., $N>2$) increase the Fisher information? 
    \item If sampling more than three points does not improve the asymptotic efficiency, is simply sampling two points $P_T$ and $P_{T_{\text{post}}}$ good enough?
    \item For sampling three points, in addition to $P_T$ and $P_{T_{\text{post}}}$, how should we pick the extra point $P_t$ for some $t\in(0,T)$?
\end{enumerate}
To answer these questions, suppose we have the following grid of observations $$\boldsymbol{S}=\{P_{t_1},P_{t_2},..., P_{t_{N-1}}, P_T, P_{T_{\text{post}}}\}.$$
Then writing out the density $p_\theta(\boldsymbol{S}|v,T)$ as in \eqref{likeli}, we have 
\begin{equation}\label{suff}
     p_\theta(\boldsymbol{S}|v,T)\propto f_1(\boldsymbol{S})f_2\Big(g(v;\theta)P_T+h(v;\theta)(\frac{P_{t_1}}{t_1}-\frac{P_{T_{\text{post}}}-P_T}{T_{\text{post}-T}}), \theta\Big)
\end{equation}
for some $f_1,f_2$. Following the same reasoning as in the previous section, we see that 
\begin{equation*}
 \mathcal I_{\boldsymbol{S}}(\theta)=\mathcal I_{P_{t_1},P_T,P_{T_{\text{post}}}}(\theta),
\end{equation*}
where $t_1=\min\{t : P_t\in\boldsymbol{S}\}$.
Thus, we have the following corollary.
\begin{coro}\label{coromain}
    For the Almgren-Chirss model and experiment based on $\boldsymbol S=\{P_{t_i}\}_{i\in[N]} \cup \{P_{T_{\text{post}}}\}$, sampling more than three points on the trajectory does not increase the Fisher information, as:
    \begin{equation}\label{3point}
 \mathcal I_{\boldsymbol{S}}(\theta)=\mathcal I_{P_{t_1},P_T,P_{T_{\text{post}}}}(\theta).
\end{equation}
\end{coro}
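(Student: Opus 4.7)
The plan is to verify by direct expansion of the conditional log-likelihood that $(P_{t_1}, P_T, P_{T_{\text{post}}})$ is a sufficient statistic for $\theta$ in the experiment built from $\boldsymbol S$, and then appeal to the equality case of the data processing inequality (Lemma \ref{data processing}) to promote that sufficiency into equality of Fisher information matrices. This parallels the argument already sketched between \eqref{likeli} and \eqref{333} for the finer grid, and the main task is to check that nothing in the generalization from that special grid to the arbitrary partition $\{t_1,\ldots,t_{N-1},T,T_{\text{post}}\}$ breaks the factorization.

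Concretely, I would first write $p_\theta(\boldsymbol S \mid v,T)$ using the independent-increments structure of \eqref{acdynamic}: $P_{t_1}\sim\mathcal N(g(v;\theta)t_1 + h(v;\theta),\sigma^2 t_1)$; for $2\le i\le N$, $P_{t_i}-P_{t_{i-1}}\sim\mathcal N(g(v;\theta)(t_i-t_{i-1}),\sigma^2(t_i-t_{i-1}))$, whose mean depends only on $g$ since $h$ cancels between consecutive in-trade observations; and $P_{T_{\text{post}}}-P_T\sim\mathcal N(-h(v;\theta),\sigma^2(T_{\text{post}}-T))$. Expanding all the Gaussian exponents, the quadratic-in-$\theta$ terms carry no data, and the linear-in-$\theta$ cross terms coming from the interior increments telescope via $\sum_{i=2}^{N}(P_{t_i}-P_{t_{i-1}}) = P_T - P_{t_1}$. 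Collecting everything leaves the coefficient of $g(v;\theta)$ proportional to $P_T$ and the coefficient of $h(v;\theta)$ proportional to $\frac{P_{t_1}}{t_1} - \frac{P_{T_{\text{post}}}-P_T}{T_{\text{post}}-T}$, exactly matching the argument of $f_2$ in \eqref{suff}.

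Once this factorization $p_\theta(\boldsymbol S\mid v,T) = f_1(\boldsymbol S, v, T)\, f_2\bigl(P_{t_1},P_T,P_{T_{\text{post}}};v,T,\theta\bigr)$ is in place, Lemma \ref{factor} gives that $(P_{t_1},P_T,P_{T_{\text{post}}})$ is sufficient for $\theta$ conditional on $(v,T)$. Because $G_{\text{order}}$ is $\theta$-free by Assumption \ref{orderindepG}, multiplying by $g_{\text{order}}(v,T)$ preserves the factorization on the full sample space, so $(v,T,P_{t_1},P_T,P_{T_{\text{post}}})$ is sufficient in the experiment \eqref{statexp} based on $\boldsymbol S$. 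Invoking the equality clause of Lemma \ref{data processing} then delivers \eqref{3point}.

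There is no real obstacle; the proof is essentially a bookkeeping exercise on Gaussian exponents. The only point that deserves care is confirming that the temporary-impact term $h$ cancels for every interior increment and survives only at the two boundaries $t_1$ and $T_{\text{post}}$—this is what both forces the telescoping collapse and produces the specific linear combination of $P_{t_1}$ and $P_{T_{\text{post}}}-P_T$ that couples to $h(v;\theta)$. Once this cancellation is recorded, the conclusion follows immediately from the two lemmas cited.
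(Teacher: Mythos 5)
Your proposal is correct and follows essentially the same route as the paper: the paper obtains exactly the factorization in \eqref{suff} from the independent-increments expansion of the Gaussian likelihood and then invokes Lemma \ref{factor} together with the equality case of Lemma \ref{data processing}, just as you do. Your explicit bookkeeping of the telescoping sum and the cancellation of $h$ on interior increments is the detail the paper leaves implicit via ``following the same reasoning as in the previous section,'' so nothing is missing or different in substance.
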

This answers the first question. To answer the second question, one can calculate (the derivation is left in the Appendix):
\begin{equation}\label{twops}
    \mathcal I_{P_T,P_{T_\text{post}}}(\theta)=\mathbb E_{(v,T)\sim G_{\text{order}}}\bigg[\mathcal J^T_{I,J}(\theta, T,v)\cdot\sigma^{-2}\begin{bmatrix}
 \frac{1}{T}& \frac{1}{T} \\
\frac{1}{T} &  \frac{T_\text{post}}{T}\frac{1}{(T_\text{post}-T)}
\end{bmatrix}\cdot\mathcal J_{I,J}(\theta, T,v)\bigg].
\end{equation}
Comparing \eqref{twops} with \eqref{FI-IJ}, we note that, in general, neither $I_{P_T,P_{T_\text{post}}}(\theta)\succcurlyeq I_{I,J}(\theta)$ nor $ I_{I,J}(\theta)\succcurlyeq I_{P_T,P_{T_\text{post}}}(\theta)$ can be established. Thus, one cannot definitively say sampling two points is more efficient than using VWAP cost $J$ and post-order price $I$.

So far, we have shown that sampling two price points $P_T$ and $P_{T_\text{post}}$  is not optimal while sampling more than one point along price trajectory  $\{P_t\}_{t\in(0,T)}$ is not necessary. This naturally brings us to the third question above. We summarize the answer as our first main theorem.
\begin{theorem}\label{acfinal}
Under assumptions \ref{modelspec}-\ref{nonsigfishh}, we sample $(P_{t}, P_T,P_{T_\text{post}})$, as long as $t$ is chosen early enough so that 
\begin{equation*}
    \frac{t}{T} \leq  \frac{1}{4},
\end{equation*}
then, under the Almgren-Chriss model \eqref{acdynamic}, we have
\begin{equation}\label{thmmain}
    \mathcal I_{P_{t},P_T,P_{T_{\text{post}}}}(\theta) \succcurlyeq \mathcal I_{I,J}(\theta).
\end{equation}
Moreover, \eqref{thmmain} holds for generic forms of $g$ and $h$ and all distributions $G_{\text{order}}$ of $(v,T)$.
\end{theorem}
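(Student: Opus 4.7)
The plan is to reduce \eqref{thmmain} to a pointwise $2\times 2$ linear-algebra inequality in $(v,T)$ and then average over $G_{\text{order}}$. The key observation is that the mean vectors of both statistics are linear functions of the common two-dimensional quantity $m(\theta,T,v) \triangleq (T g(v;\theta),\, h(v;\theta))^T$: directly $\mu_{I,J-I/2}=m$, while from \eqref{acdynamic}, $\mathbb E P_t = (t/T)\,Tg + h$, $\mathbb E P_T = Tg + h$, and $\mathbb E P_{T_\text{post}} = Tg$. Setting
\[
A \triangleq \begin{bmatrix} t/T & 1 \\ 1 & 1 \\ 1 & 0 \end{bmatrix},
\]
the mean of $(P_t,P_T,P_{T_\text{post}})$ equals $Am$, so by the chain rule its Jacobian in $\theta$ equals $A\,\mathcal J_{I,J}(\theta,T,v)$. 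By Lemma \ref{jacob} and Proposition \ref{fishtvtv}, both Fisher informations then take the form $\mathbb E_{(v,T)\sim G_{\text{order}}}\bigl[\mathcal J_{I,J}^T\,M(T)\,\mathcal J_{I,J}\bigr]$ with the same outer factor. Since $X\succcurlyeq Y$ implies $\mathcal J_{I,J}^T X \mathcal J_{I,J}\succcurlyeq \mathcal J_{I,J}^T Y \mathcal J_{I,J}$ and expectation preserves the PSD order, it suffices to establish the pointwise inequality
\[
A^T \Sigma^{-1}_{P_t,P_T,P_{T_\text{post}}}(T)\,A \;\succcurlyeq\; \Sigma^{-1}_{I,J}(T)
\]
for every $T$ in the support of $G_{\text{order}}$, noting that $\tau=t/T$ is held fixed as explained in Section \ref{theorystat}.

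Next, I would compute both sides explicitly. Using $\mathrm{Cov}(W_s,W_u)=\min(s,u)$, the covariance of $(P_t,P_T,P_{T_\text{post}})$ is $\sigma^2 T$ times a matrix depending only on the dimensionless $\lambda\triangleq t/T$ and $\kappa\triangleq T_\text{post}/T$; its determinant factors cleanly as $\lambda(\kappa-1)(1-\lambda)$, making the $3\times 3$ cofactor inversion routine. Sandwiching with $A$ collapses the $3\times 3$ object back to a clean $2\times 2$, and I expect the identity
\[
A^T \Sigma^{-1}_{P_t,P_T,P_{T_\text{post}}} A \;=\; \frac{1}{\sigma^2 T}\begin{bmatrix} 1 & 1 \\ 1 & \tfrac{1}{\lambda}+\tfrac{1}{\kappa-1} \end{bmatrix},
\]
which I then compare with $\Sigma^{-1}_{I,J}$ as given in \eqref{FI-IJ}. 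Call the difference $\Delta$.

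The final step is to verify $\Delta\succcurlyeq 0$, which for a $2\times 2$ symmetric matrix reduces to $\Delta_{11}\ge 0$ and $\det\Delta\ge 0$. The $(1,1)$ entry simplifies to $(\kappa-1)/\bigl[12\sigma^2 T(\kappa/3-1/4)\bigr]$, nonneg because $\kappa>1$. The heart of the proof is the determinant, where I expect the algebraic miracle
\[
144\,(\kappa/3-1/4)^2\,\lambda\,\sigma^4 T^2\cdot\det\Delta \;=\; (4\kappa-3)(\kappa-1)(1-4\lambda),
\]
which is nonneg precisely when $\lambda\le 1/4$, yielding the sharp threshold in the statement. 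The main obstacle is executing this factorization without error: individually the entries of $A^T\Sigma^{-1}_{P_t,P_T,P_{T_\text{post}}}A$ and of $\Sigma^{-1}_{I,J}$ are messy rational functions of $\lambda$ and $\kappa$, and cancellations across all three contributions must be tracked carefully. The cleanest route is to clear denominators by multiplying through by $144 D^2\lambda$ with $D\triangleq\kappa/3-1/4$, using $12D=4\kappa-3$ as the common building block; after expansion, the $\lambda$-independent piece equals $4\kappa^2-7\kappa+3=(4\kappa-3)(\kappa-1)$ and the $\lambda$-coefficient equals $-4(4\kappa^2-7\kappa+3)$, producing the announced factor $(1-4\lambda)$. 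Nothing in the argument uses the specific functional forms of $g$ or $h$ beyond the differentiability already assumed in Assumption \ref{ras3}, nor any property of $G_{\text{order}}$ beyond Assumption \ref{orderindepG}, so the conclusion is generic as claimed.
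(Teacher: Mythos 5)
Your proposal is correct, and it lands on exactly the same $2\times 2$ comparison as the paper's proof: both reduce \eqref{thmmain} to showing that $A^T\Sigma^{-1}A - \Sigma^{-1}_{I,J}\succcurlyeq 0$ pointwise in $(v,T)$, with the same inner matrix $\sigma^{-2}\bigl[\begin{smallmatrix}1/T & 1/T\\ 1/T & 1/t+1/(T_{\text{post}}-T)\end{smallmatrix}\bigr]$, and then average over $G_{\text{order}}$. I verified your two claimed identities: the sandwich $A^T\Sigma^{-1}_{P_t,P_T,P_{T_\text{post}}}A$ does collapse to that matrix (easiest to see via the independent increments $P_t$, $P_T-P_t$, $P_{T_\text{post}}-P_T$), and the determinant does factor as $144\,(\kappa/3-1/4)^2\lambda\,\sigma^4T^2\det\Delta=(4\kappa-3)(\kappa-1)(1-4\lambda)$, which is the paper's condition $(\tau_2-4)(4\tau_1^2-7\tau_1+3)\geq 0$ rewritten with $\tau_2=1/\lambda$ and $4\tau_1^2-7\tau_1+3=(4\tau_1-3)(\tau_1-1)$.

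The route differs in two respects, both in your favor. First, the paper reaches the $2\times 2$ form by a sufficiency argument: it factors the likelihood to show $(P_T,\,P_{t}/t-(P_{T_{\text{post}}}-P_T)/(T_{\text{post}}-T))$ is sufficient, then applies an invertible $2\times2$ map to obtain a Gaussian with mean exactly $(Tg,h)^T$. You instead write the three-point mean directly as $A\,(Tg,h)^T$ with a $3\times2$ matrix $A$ and invoke the chain rule plus Lemma \ref{jacob}; this skips the factorization step entirely and is arguably more transparent. Second, the paper verifies positive semidefiniteness by checking all three of $[B]_{11}$, $[B]_{22}$, and $\det B$, where the $[B]_{22}$ step requires a case analysis on the discriminant of a quadratic in $\tau_1$. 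You observe that $\Delta_{11}=(\kappa-1)/[12\sigma^2T(\kappa/3-1/4)]$ is \emph{strictly} positive for $\kappa>1$, so $\Delta_{11}>0$ together with $\det\Delta\geq 0$ already forces $\Delta_{22}\geq 0$ and hence PSD; this eliminates the paper's most laborious subcase. The only thing to make explicit in a final write-up is that strictness of $\Delta_{11}>0$ is what licenses the two-minor check (for a $2\times2$ matrix with $\Delta_{11}=0$, the two conditions alone would not suffice), but your computed value supplies this.
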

\begin{proof}
The proof of Theorem \ref{acfinal} is left in the Appendix \ref{proofofthmmain}.
\end{proof}

\begin{remark}
Notably, also shown in the proof, the ratio $\frac{1}{4}$ in Theorem \ref{acfinal} is somewhat tight, in the sense that for $\frac{t}{T}>\frac{1}{4}$, there might exist some form of $h$ and $g$ for which \eqref{thmmain} is no longer true. This could be of interest in its own right, as the significance of number $\frac{1}{4}$ is not immediately clear from observing \eqref{acdynamic}.
\end{remark}

 Theorem \ref{acfinal} suggests that, at least for the Almgren-Chriss dynamics \eqref{acdynamic}, as long as we sample $P_t$ early enough (e.g., any price sampled within the first 25 percent of the filled order), then MLE estimated using $(P_{t}, P_T,P_{T_\text{post}})$ would outperform the estimation method using $I,J$ in \cite{almgren2005direct} asymptotically, for any distributions of $(v,T)$, rendering Theorem \ref{acfinal} applicable for metaorders across different magnitudes and for generic form of $h$ and $g$ (specifically the original power law models in \cite{almgren2005direct}). Perhaps interestingly, this result seems to suggest earlier stages of an order are more informative for calibrating impact models. Finally, we use the data based on Section 4.1, 4.2 and Table 3 of \cite{almgren2005direct} to provide an example of the level of improvement based on Theorem \ref{acfinal}. More extensive simulation studies verifying Theorem \eqref{acfinal} are provided at the end of the section.

\begin{example}
\textup{Consider estimating the power law exponent of \eqref{power} in \eqref{acdynamic}. Suppose the underlying true parameters are $\theta^\star=(\gamma^\star,\eta^\star,\alpha^\star,\beta^\star)=(0.314, 0.142, 0.891, 0.600)$. We set the hyper-parameters of the metaorders to be $(X,v,T,T_{\text{post}},\sigma)=(0.1, 0.5, 0.2, 0.275, 0.0157)$. The data is from \cite{almgren2005direct}, which can be viewed as one specific instance from $G_{\text{order}}$, i.e., a point mass. We set $t=0.1\cdot T$ and $(\gamma^\star, \eta^\star)=(0.314, 0.142)$, so we only estimate $(\alpha,\beta)$. This avoids non-singular issues from $G_{\text{order}}$ being a point mass. Then, one can calculate $\mathcal I_{P_t,P_T,P_{T_{\text{post}}}}(\alpha^\star,\beta^\star)$ and $\mathcal I_{I,J}(\alpha^\star, \beta^\star)$, which leads to
\begin{equation}\label{exminv}
    \mathcal I^{-1}_{P_t,P_T,P_{T_{\text{post}}}}(\alpha^\star,\beta^\star)=10^4\begin{bmatrix}
 3.504 \cdot 10^4& -2.259 \cdot 10^2 \\
-2.259 \cdot 10^2  & 1.845 \cdot 10
\end{bmatrix},
\mathcal I^{-1}_{I,J}(\alpha^\star,\beta^\star)=10^4\begin{bmatrix}
 4.438 \cdot 10^4& -4.942 \cdot 10^2 \\
-4.942 \cdot 10^2  & 3.811 \cdot 10
\end{bmatrix}.
\end{equation}
The closed-form expression of $\mathcal I_{P_t,P_T,P_{T_{\text{post}}}}(\theta)$ is calculated in Appendix \ref{proofofthmmain}.
In view of \eqref{mleinv} and \eqref{asymvar}, using $(P_t,P_T,P_{T_{\text{post}}})$ over $(I,J)$ implies an asymptotic variance reduction (i.e., equivalent to relative efficiency, \cite{van2000asymptotic}) of 21\% for estimating $\alpha$ ( calculated by $\frac{4.438-3.504}{4.438}$, or equivalently a 21\% increase in sample-efficiency), 51\% for estimating $\beta$ and 18.5\% for estimating the average VWAP cost in bps, given by $c_{\text{vwap}}(\alpha,\beta)=\frac{T}{2}\gamma v^{\alpha}+\eta v^{\beta}$ and $\nabla_{\alpha,\beta} c_{\text{vwap}}(\alpha^\star, \beta^\star)=[\frac{T}{2}\gamma v^{\alpha}\log(\alpha), \eta v^{\beta}\log(\beta)]$. On the other hand, also following Remark \ref{almgrenfit}, given $(\alpha^\star, \beta^\star)=(0.891, 0.600)$, suppose ones wants to estimate $(\gamma,\eta)$. Similarly one can show a 20.6\% variance reduction for estimating $\gamma$, 51.5\% for $\eta$. In the context of  \cite{schied2009risk}, a model of the form \eqref{power} with $\alpha^\star=\beta^\star=1$ and a utility $u(R)=-\exp(-AR)$ for some $A>0$ is assumed (e.g., we let $A=5$ below). Then the optimal adaptive liquidation strategy (notice this is a sell-program so $x_0=X$) is shown to be of $x_t=X\exp(-t\sqrt{\frac{\sigma^2 A}{2\eta}})$. One can check if $\eta$ is estimated around $\eta^\star$ with a 10\% error, then the optimal liquidation at $T$ would be off-target by around 6.88\%, whereas a 5\% estimated error on $\eta$ would reduce that error to around 3.36\%, an improvement close to 50\%.}
\end{example}

\subsection{Non-VWAP Executions }\label{nonvwappp}

In this section, we use the framework from \cite{almgren2001optimal} to briefly discuss the estimation for non-VWAP orders under \eqref{acdynamic}. In particular, given full grid observation $\boldsymbol{S}_{\text{full}}=\{S_{t_i}\}_{i\in [N]}$ and a sequence of trading rate $\{v_i\}_{i\in[N]}$ so that a trading rate of $v_i$ is executed during interval $(t_{i-1}, t_i]$ and  $\frac{T}{N}\sum_{i=1}^N v_i=X$. We assume trading rates and trading period is all fixed and we only investigate the sufficiency issue here. Thus, under \eqref{acdynamic} and $P_t=\frac{S_t-S_0}{S_0}$ we have 
\begin{equation}
    P_{t_i}=P_{t_{i-1}}+ g(v_i; \theta)\Delta t+h(v_i; \theta)-h(v_{i-1};\theta)+\sigma \int_{t_{i-1}}^{t_i}dW_s.
\end{equation}
Note that $\{v_i\}_{i\in[N]}$ needs not to have $N$ distinct values (i.e., $|\{v_i\}_i|< N$) as the same trading rate is allowed to prevail over several periods of $\Delta t$. To account for such strategy, suppose there are $N'$ distinct ($N'\leq N$) constant trading rates: $\{v'_{i'}\}_{i'\in[N']}$ with $v'_{i'}\neq v'_{j'}$ for all $i'\neq j'$ and each $v'_{i'}$ corresponds to some union of trading intervals where the trading rate is $v'_{i'}$. We represent such a union by a sequence of disjoint, non-adjacent intervals $I_{i'}=\cup_k (t^{s,i'}_k, t^{e,i'}_k]$. For example, if $v'_1=0.5$ and the order is executed with a trading rate of 0.5 during $(t_0,t_1]$, $(t_1,t_2]$ and $(t_5,t_6]$, then $I_{1}=(t_0,t_2] \cup (t_5, t_6]$. Note that the disjoint and non-adjacent property makes sure that 1) the representation of $I_{i'}$ is unique and 2) for any $t$ being $t_k^{s,i'}$ or $t_k^{e,i'}$ in the representation of $I_{i'}$, then 
\begin{enumerate}
    \item either $t=0$ or $t=T$
    \item or the interval $(t-\Delta t, t]$ and $(t, t+\Delta]$ has two different trading rates.
\end{enumerate}
In other words, if we view the time before $t=0$ and after $t=T$ as having 0 trading rate, then any $t$ being $t_k^{s,i'}$ or $t_k^{e,i'}$ in the representation of $I_{i'}$ implies $t$ connects two periods with different trading rates. In this setting, the likelihood can then be written as
\begin{align}\label{nonvwap}
    &p_\theta(\boldsymbol S_{\text{full}}|v,T) \nonumber\\
    \propto& \exp\Bigg(-\sum_{i=1}^N\frac{\Big((P_{t_i}-P_{t_{i-1}})-(g(v_i; \theta)\Delta t+h(v_i;\theta)-h(v_{i-1};\theta))\Big)^2}{2\sigma^2\Delta t}-\frac{(P_{T_{\text{post}}}-P_T+h(v_N;\theta))^2}{2\sigma^2(T_{\text{post}}-T)}\Bigg)\nonumber\\
    =&\exp\Bigg(-\sum_{i'\in[N']}\sum_{\{i: v_i=v'_{i'}\}}\frac{\Big((P_{t_i}-P_{t_{i-1}})-(g(v_i; \theta)\Delta t+h(v_i;\theta)-h(v_{i-1};\theta))\Big)^2}{2\sigma^2\Delta t}-\frac{(P_{T_{\text{post}}}-P_T+h(v_N;\theta))^2}{2\sigma^2(T_{\text{post}}-T)}\Bigg)\nonumber\\
    =&f_1 (\boldsymbol{S}_{\text{full}})\cdot\prod_{i'\in[N']}f^g\Big(\sum_{(t_k^{s,i'},t_k^{e,i'}]\in I_{i'}} (P_{t_k^{e,i'}}-P_{t_k^{s,i'}}) ;g(v'_{i'};\theta)\Big)\cdot\prod_{i'\in[N']}f^h\Big(\sum_{(t_k^{s,i'},t_k^{e,i'}]\in I_{i'}} (P_{t_k^{s,i'}+\Delta t}-P_{t_k^{s,i'}})\nonumber\\
    &+\sum_{(t_k^{s,i'},t_k^{e,i'}]\in I_{i'}} (P_{t_k^{e,i'}}-P_{t_k^{e,i'}+\Delta t}){\mathbf 1_{\{t_k^{e,i'}<T\}}}+(P_{T}-P_{T_{\text{post}}})\frac{\Delta t}{T_{\text{post}}-T}\mathbf 1_{\{t_k^{e,i'}=T\}} ;h(v'_{i'};\theta)\Big)\cdot f_2(\theta).
    \end{align}
    for some $f_1, f_2, f^g, f^h$. Here an indicator function $\mathbf 1_{A}(x)=1$ if $x\in A$ and 0 otherwise. In view of Theorem \ref{factor}, a sufficient statistic for \eqref{nonvwap} is $\{P_t: t=T_{\text{post}} \text{ or } t\in\{t_k^{s,i'},t_k^{s,i'}+\Delta t,t_k^{e,i'},t_k^{e,i'}+\Delta t \} \text{ for some } i'\in[N'], k\in[|I_{i'}|]\}.$ In other words,  the union of $P_t$'s that is either a \textit{node}, which connects two different trading rate periods, or immediately follows such a \textit{node}, constitutes a sufficient statistic of $\boldsymbol S_{\text{full}}$. This seems to suggest the price points more closely following trading rate changes are more important to the estimation. Using this characterization, we can also recover the sufficiency $P_{t_1}, P_{T}$ and $P_{T_{\text{post}}}$ in the VWAP case ($P_0=0$ is known). 

\subsection{Simulation} 
In this section, we verify the results from corollary \ref{coromain} and theorem \ref{acfinal}. For a fair comparison, we recover, to the best of our abilities, the IBM stock simulation result in Table 3 from \cite{almgren2005direct}. Several practical modifications in \cite{almgren2005direct} are carried out, so impact equations in (7) and (8) of \cite{almgren2005direct} become: 

\begin{equation*}
    \begin{aligned}
        & & \frac{I}{\sigma} & = \gamma_A T \cdot \text{sgn}(X_A) \left| \frac{X_A}{V_AT} \right|^{\alpha} \left( \frac{\Theta_A}{V_A} \right)^{\delta_A} + \langle\text{noise}\rangle \\
        & & \frac{1}{\sigma}(J - \frac{I}{2}) & = \eta_A \cdot \text{sgn}(X) \left| \frac{X_A}{V_AT} \right|^{\beta} + \langle\text{noise}\rangle 
    \end{aligned}
\end{equation*}
whereas our equation based on \eqref{iandj} gives
\begin{equation} \label{eq:fengpei}
    \begin{aligned}
        & & I & = \gamma T \cdot v^\alpha  + \langle\text{noise}\rangle \\
        & & J - \frac{I}{2} & = \eta \cdot v^\beta + \langle\text{noise}\rangle.
    \end{aligned}
\end{equation}
Here $\Theta_A$ is called \textit{shares outstanding} , $V_A$ is \textit{average daily volume} and $\frac{\Theta_A}{V_A}$ is called \textit{inverse turnover}. All concepts are in \cite{almgren2005direct} to model \textit{liquidity} factor into the market impact, which is not necessary for a single stock analysis. To avoid confusion, the parameters from Table 3 in \cite{almgren2005direct} which are different from ours are subscripted by $A$. Their transformation with our model parameters can be constructed as
\begin{equation*}
    v = \left| \frac{X_A}{V_AT}\right|, \gamma = \sigma \gamma_A \left( \frac{\Theta_A}{V_A} \right)^{\delta_A}, \eta = \sigma \eta_A
\end{equation*}
Following Table 3 of \cite{almgren2005direct}, the initial hyper-parameters for IBM stocks are given as 
\begin{equation*}
    V = 6,561,000, \Theta = 1,728,000,000, \sigma = 0.0157, T_{\text{post}} - T = 0.5 / 6.5 = 0.077 \text{ and } T = 0.2.
\end{equation*}
 With these transformations, we can reproduce values in Table 3 in Almgren for IBM using equation (\ref{eq:fengpei}) and our version of the parameters, up to a negligible error (i.e., no discrepancy up to the forth digit). The results are summarized in Table \ref{Tab:Tcr}.
\\

    \begin{minipage}[b]{0.4\textwidth}
    \centering
\scalebox{0.8}{\begin{tabular}[t]{lc|ccc}
		\toprule
		&   & \textbf{IBM}     \\
		\midrule
		Average Daily Volume & $V$ & 6.561M  \\
		Shares outstanding & $\Theta$ &   1.728B  \\
		Inverse turnover & $\Theta / V$ &   263.37  \\
		Daily volatility (\%) & $\sigma$ &  1.57\% \\
		Normalized trade size & $X/V$ &  0.1  \\
		\midrule
		Normalized permanent & $I/\sigma$ &   0.1265 \\
		Permanent price impact (bp) & $I$ &   19.86  \\
		\midrule
		Trade duration (days) & $T$ &  0.2 \\
		Normalized temporary & $K/\sigma$ &  0.0934 \\
		Temporary Impact cost (bp) & $K$ &  14.67 \\
		\midrule
		Realized cost (bp) & $J$ & 24.60\\
		\bottomrule
	\end{tabular}}

\captionof{table}{\small{Table 3 recreation from \cite{almgren2005direct}.}}
\label{Tab:Tcr}
\end{minipage}
\begin{minipage}[b]{0.46\textwidth}
    \centering
\includegraphics[width=0.9\textwidth]{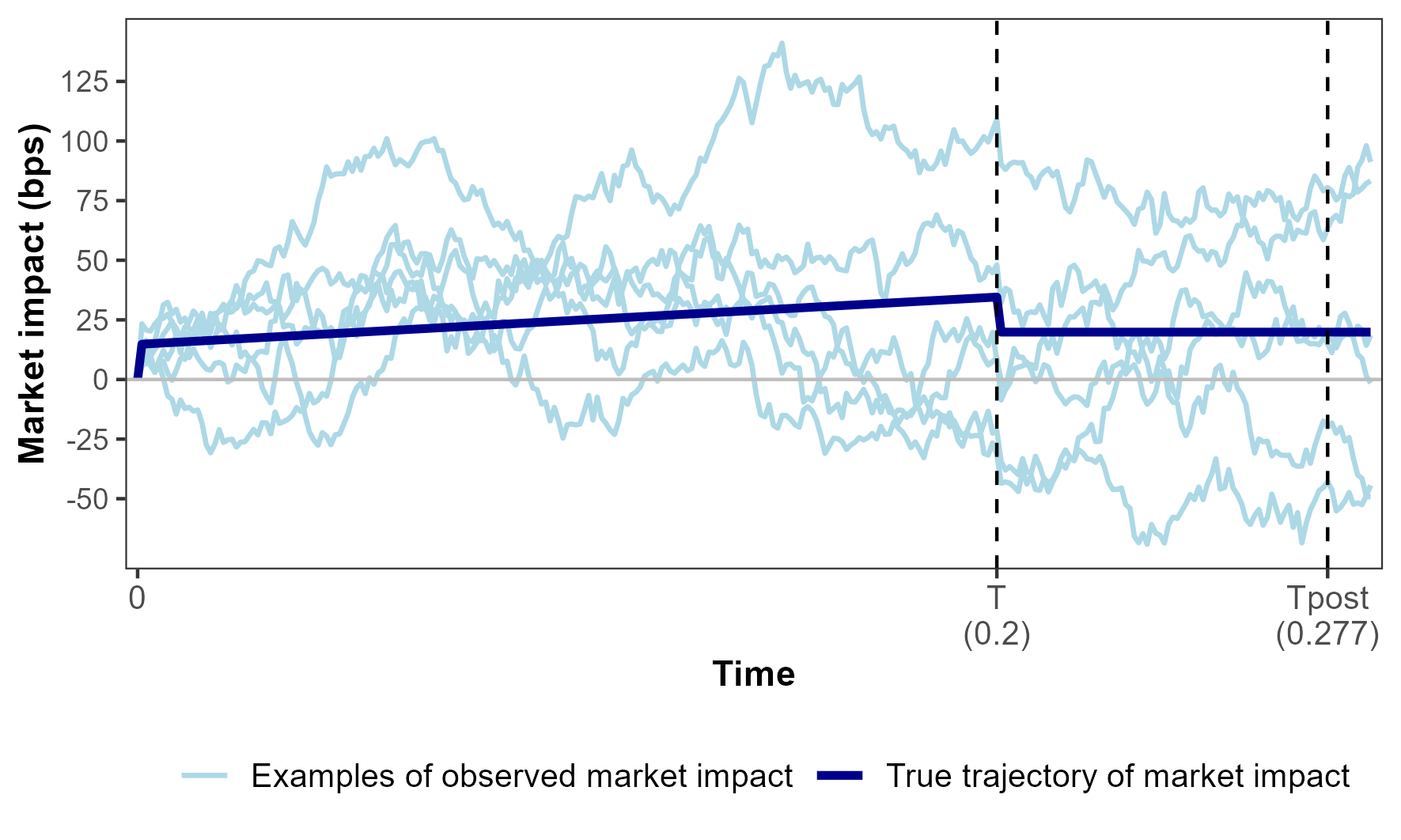}
\captionof{figure}{\small{Simulate Price Trajectory for Table \ref{Tab:Tcr}. Blue line represents expected average.}}
    \end{minipage}

For simulation, we let $X_A/V_A$ follow a uniform distribution on an interval from 0.05 to 0.15 with $T=0.2$ fixed. This would imply our parameter $v = \frac{X_A}{V_AT}$ follows a uniform distribution from 0.25 to 0.75. This specifies $G_{\text{order}}$. The ground truth parameters are set following Table 1 as $\alpha^\star = 1, \beta^\star = 0.6, \gamma^\star = \sigma \gamma_A \left( \frac{\Theta}{V} \right)^{\delta_A} = 0.314 \cdot 0.0157 \cdot 4.0285 = 0.01986, \eta^\star = \sigma \eta_A = 0.142 \cdot 0.0157 = 0.002229$. In the simulation, we fit all four parameters $\theta=(\gamma,\eta,\alpha,\beta)$ in \eqref{power}. Each path is generated by the Euler–Maruyama method with $\Delta = 0.001T$ and the stochastic integral in $J$ is approximated by the discrete sum. We consider five experiments to verify the findings in the previous sections:
\begin{enumerate}
    \item Using $(I,J)$, referred to as ``Almgren approach $(I,J)$"
    \item Using $(P_T, P_{T_{\text{post}}})$, referred to as ``Two points $(T,T_{\text{post}})$"
    \item Using $(P_{0.1T}, P_T, P_{T_{\text{post}}})$, referred to as ``Three points $(0.1T, T,T_{\text{post}})$", with $\frac{t}{T}=0.1\leq 0.25$.
    \item Using $(P_{0.5T}, P_T, P_{T_{\text{post}}})$, referred to as ``Three points $(0.5T, T,T_{\text{post}})$", with $\frac{t}{T}=0.5\geq 0.25$
    \item Using $(P_{0.1T}, P_{0.5T}, P_T, P_{T_{\text{post}}})$, referred to as ``Four points $(0.1T, 0.5T, T,T_{\text{post}})$".
\end{enumerate}
The results are presented in Table \ref{Tab:Tcr2222}. For each method, we report the experiment results for metaorder size $n=500$, $n=1000$ and $n=10000$. For each sample size $n$, we repeat the experiment 1000 times so that we can report 1) average estimates of each parameter denoted by ``\textbf{Avg estimate} $\hat\theta$, across 1000 simulations,  2) the theoretical standard deviation of each parameter, based on empirical Fisher information, denoted by ``\textbf{Theoretical} SD($\hat\theta$)" and 3) the empirical standard deviation of each parameter, based on bootstrap from 1000 simulations, denoted by ``\textbf{Empirical} SD($\hat\theta$)". The best method, for each $n$, is recorded in bold. As we can see, consistent with the results in section \ref{samplestrat}: 

\begin{enumerate}
\item The ``Three points method" with $t/T\leq 0.25$ outperform the Almgren approach, especially in terms of standard deviation, as indicated by theorem \ref{acfinal}.

\item The ``Two points method" and ``Three points method" with $t/T\geq 0.25$ do not visibly outperform the Almgren approach and 

\item The ``Three points method" perform almost exactly the same as the ``Four points method", as indicated by corollary \ref{coromain}.

\end{enumerate}

Finally, the \textbf{Theoretical} SD($\hat\theta$) provides a good estimate for \textbf{Empirical} SD($\hat\theta$), especially for large $n$, which can be useful for inference.
 
\begin{center}
\scalebox{0.65}{\begin{tabular}[t]{l|rrrr|rrrr|rrrr|}
 \toprule
	\multicolumn{1}{c}{\textbf{ }} & \multicolumn{4}{c}{\textbf{Avg estimate $\hat{\theta}$}} & \multicolumn{4}{c}{\textbf{Theoretical $SD(\hat{\theta})$}} & \multicolumn{4}{c}{\textbf{Empirical $SD(\hat{\theta})$}} \\
	\multicolumn{1}{c}{ } & \multicolumn{4}{c}{{\tiny (average over 1000 simulations)}} & \multicolumn{4}{c}{{\tiny (avg. of hessian implied SD)}} & \multicolumn{4}{c}{{\tiny (SD of estimate over 1000 sim)}} \\
	Method & $\alpha$ & $\beta$ & $\gamma$ & $\eta$ & $\alpha$ & $\beta$ & $\gamma$ & $\eta$ & $\alpha$ & $\beta$ & $\gamma$ & $\eta$\\
	\midrule
	\addlinespace[0.3em]
	\multicolumn{13}{l}{\textbf{$\bm{n=500}$}}\\
	\hspace{1em}Almgren apporach $(I, J)$ & 1.0641 & 0.6209 & 0.022222 & 0.002285 & 0.7777 & 0.3327 & 0.010810 & 0.000520 & 0.7825 & 0.3337 & 0.011847 & 0.000533\\
	\hspace{1em}Two points: $(T, T_{\text{post}})$ & 1.0694 & 0.6334 & 0.022755 & 0.002356 & 0.7752 & 0.4935 & 0.012370 & 0.000784 & 0.8197 & 0.4814 & 0.023695 & 0.000781\\
	\hspace{1em}Three points: $(0.1T, T, T_{\text{post}})$ & \textBF{1.0464} & \textBF{0.6161} & \textBF{0.021577} & \textBF{0.002267} & \textBF{0.6835} & \textBF{0.2282} & \textBF{0.009126} & \textBF{0.000355} & \textBF{0.6891} & \textBF{0.2235} & \textBF{0.010007} & \textBF{0.000357}\\
	\hspace{1em}Three points: $(0.5T, T, T_{\text{post}})$ & 1.0609 & 0.6248 & 0.022008 & 0.002320 & 0.7449 & 0.4164 & 0.010123 & 0.000656 & 0.7420 & 0.4160 & 0.010897 & 0.000669\\
	\hspace{1em}Four Point: $(0.1T, 0.5T, T, T_{\text{post}})$ & \textBF{1.0464} & \textBF{0.6161} & \textBF{0.021577} & \textBF{0.002267} & \textBF{0.6835} & \textBF{0.2282} & \textBF{0.009126} & \textBF{0.000355} & \textBF{0.6891} & \textBF{0.2235} & \textBF{0.010007} & \textBF{0.000357}\\
	\addlinespace[0.3em]
	\multicolumn{13}{l}{\textbf{$\bm{n=1000}$}}\\
	\hspace{1em}Almgren apporach $(I, J)$ & 1.0294 & 0.6215 & 0.020860 & 0.002273 & 0.5240 & 0.2324 & 0.006782 & 0.000362 & 0.5120 & 0.2316 & 0.006791 & 0.000367\\
	\hspace{1em}Two points: $(T, T_{\text{post}})$ & 1.0286 & 0.6257 & 0.020847 & 0.002312 & 0.5198 & 0.3401 & 0.006719 & 0.000537 & 0.5095 & 0.3329 & 0.006757 & 0.000537\\
	\hspace{1em}Three points: $(0.1T, T, T_{\text{post}})$ & \textBF{1.0372} & \textBF{0.6021} & \textBF{0.020831} & \textBF{0.002245} & \textBF{0.4628} & \textBF{0.1594} & \textBF{0.005990} & \textBF{0.000247} & \textBF{0.4676} & \textBF{0.1561} & \textBF{0.006152} & \textBF{0.000245}\\
	\hspace{1em}Three points: $(0.5T, T, T_{\text{post}})$ & 1.0249 & 0.6262 & 0.020774 & 0.002296 & 0.4995 & 0.2891 & 0.006450 & 0.000454 & 0.4948 & 0.2844 & 0.006514 & 0.000457\\
	\hspace{1em}Four Point: $(0.1T, 0.5T, T, T_{\text{post}})$ & \textBF{1.0372} & \textBF{0.6021} & \textBF{0.020831} & \textBF{0.002245} & \textBF{0.4628} & \textBF{0.1594} & \textBF{0.005990} & \textBF{0.000247} & \textBF{0.4676} & \textBF{0.1561} & \textBF{0.006152} & \textBF{0.000245}\\
	\addlinespace[0.3em]
	\multicolumn{13}{l}{\textbf{$\bm{n=10000}$}}\\
	\hspace{1em}Almgren apporach $(I, J)$ & 1.0111 & 0.5994 & 0.020039 & 0.002224 & 0.1604 & 0.0728 & 0.002020 & 0.000112 & 0.1634 & 0.0734 & 0.002035 & 0.000114\\
	\hspace{1em}Two points: $(T, T_{\text{post}})$ & 1.0110 & 0.6006 & 0.020037 & 0.002234 & 0.1601 & 0.1063 & 0.002016 & 0.000164 & 0.1633 & 0.1052 & 0.002033 & 0.000164\\
	\hspace{1em}Three points: $(0.1T, T, T_{\text{post}})$ & \textBF{1.0102} & \textBF{0.6000} & \textBF{0.020007} & \textBF{0.002230} & \textBF{0.1421} & \textBF{0.0503} & \textBF{0.001787} & \textBF{0.000078} & \textBF{0.1431} & \textBF{0.0505} & \textBF{0.001773} & \textBF{0.000076}\\
	\hspace{1em}Three points: $(0.5T, T, T_{\text{post}})$ & 1.0117 & 0.5996 & 0.020034 & 0.002232 & 0.1540 & 0.0903 & 0.001939 & 0.000139 & 0.1581 & 0.0919 & 0.001952 & 0.000142\\
	\hspace{1em}Four Point: $(0.1T, 0.5T, T, T_{\text{post}})$ & \textBF{1.0102} & \textBF{0.6000} & \textBF{0.020007} & \textBF{0.002230} & \textBF{0.1421} & \textBF{0.0503} & \textBF{0.001787} & \textBF{0.000078} & \textBF{0.1431} & \textBF{0.0505} & \textBF{0.001773} & \textBF{0.000076}\\
	\bottomrule
\end{tabular}}
 \captionof{table}{\small{Simulation results with true parameter $(\alpha^\star,\beta^\star,\gamma^\star,\eta^\star) = (1.0000, 0.6000, 0.01986, 0.002229)$.}}
 \label{Tab:Tcr2222}
\end{center}

\section{The Propagator Models}\label{s3p2}
As discussed in \cite{zarinelli2015beyond}, the linear permanent market impact predicted by the Almgren-Chriss model can deviate from the market impact trajectory for non-VWAP executions and fail to reproduce the concavity of the market impact curve \cite{zarinelli2015beyond}. The propagator models \cite{bouchaud2003fluctuations,gatheral2013dynamical,obizhaeva2013optimal}, on the other hand, can consistently recover features of market impact curve, including the \textit{square-root law}, by allowing non-linear \textit{instantaneous} function $f(\cdot)$ and a \textit{decay kernel} $G(\cdot)$ to capture the \textit{transient} component of market impact. The word \textit{transient} implies the trade impact is neither permanent nor temporary, but decays over time\cite{moro2009market,curato2017optimal}. As discussed in \cite{bouchaud2003fluctuations, busseti2012calibration}, such a transient assumption is consistent with real data and is in line with empirical findings on market efficiency and strong concavity of impact. In the framework of \eqref{main}, we have\footnote{The equation \eqref{propdynamic} is typically referred to as the continuous-time propagator model or the Gatheral (JG) model. The discrete-time propagator was first proposed by \cite{bouchaud2003fluctuations}.}
\begin{equation}\label{propdynamic}
    S_t=S_0+\int_0^t f(v) G(t-s)ds+\sigma\int_0^t dWs.
\end{equation}
Notable functional forms of instantaneous impact function $f$ and decay kernel $G$ \eqref{propdynamic} includes: 

\begin{enumerate}
\item power-law $f(v)\propto v^{\delta}$ and power-law decay $G(s)\propto s^{-\gamma}$.\footnote{Here $\propto$ indicates in proportional to.} As we shall see, this formulation recovers the square-root law when $\delta=\gamma=\frac{1}{2}$. When $\delta=1$, this corresponds to the transient linear price impact where the optimal execution strategy can be explicitly solved using Fredholm integral equation of the first kind \cite{gatheral2012transient}. For non-linear $f$, the optimal execution problem is harder to solve and generally requires discretization and iterative optimization  \cite{dang2017optimal,curato2017optimal}. When $G(s)=\mathbf 1_{\{0\}}(s)$, this recovers the temporary impact function $h$ of \eqref{acdynamic} in \cite{almgren2001optimal}. 

\item linear $f(v)\propto v$ and exponential decay $G(s)\propto e^{-\rho s}$. This is one of the first propagator models with transient impact proposed by \cite{obizhaeva2013optimal}. The original formulation focused on the modeling of shape/dynamics of the limit order book (LOB), which is later extended by \cite{alfonsi2010optimal} for generally-shaped LOB. The connection/equivalence of price impact reversion and volume impact reversion is further discussed in  \cite{alfonsi2010optimal1,gatheral2011exponential}. 

\item logarithmic $f(v) \propto \log(v/v_0)$ and $G(s)\propto l_0(l_0+s)^{-\gamma}$ or $G(s)\propto (l_0^2+s^2)^{-\gamma/2}$. Here $\gamma\approx\frac{1-\alpha}{2}$ is related to the exponent of auto-correlation among trade signs studied in \cite{bouchaud2003fluctuations}. The logarithmic impact has also been supported empirically to fit the data across different magnitudes better over the square root function. \cite{zarinelli2015beyond}. 

\end{enumerate}

There are many other forms of $f$ and $G$ (e.g., Gaussian kernel, trigonometric kernel in \cite{gatheral2012transient} as examples of where one can solve Fredholm equation) and propagator models have been extensively studied (e.g., \cite{gatheral2010no,gatheral2011exponential,gatheral2012transient,curato2017optimal,bouchaud2003fluctuations,bouchaud2006random,alfonsi2010optimal1}) to characterize conditions of the model which prevent price manipulation strategy or guarantee the stability of trades \cite{gatheral2010no} in optimal execution strategy. The various forms of $f$ and $G$ are  typically modeled separately. As a result, their regularity conditions in assumption \ref{modelspec}-\ref{nonsigfishh} need to be checked in a case-by-case basis, although they are satisfied for common propagator models.

\subsection{Sufficient Statistics for a Discrete Price Trajectory}

In this section, we fix $(v,T)$ by letting $G_{\text{order}}= \delta_{v,T}$ be a point mass to simplify the discussion. In the same spirit as Section \ref{sus}, we first study the experiment based on $\boldsymbol S_{\text{full}}=\{S_{t_i}\}_{i\in[N]}$ on a given grid $\{\Delta t, 2\Delta t,..., N\Delta t=T\}$. Note that in this setting we no longer include a $S_{T_{\text{post}}}$ with $T_{\text{post}}>T$ to measure the permanent impact, i.e., the height of plateau that peak impact relaxes to height, as its determination might be difficult \cite{zarinelli2015beyond}.  Formally, we have the following characterization of separability before we can state the theorem based on $\boldsymbol S_{\textup{full}}$, from a parameter estimation point of view as in Section \ref{sus},

\begin{definition}
The form of $f$ and $G$ in \eqref{propdynamic} is separable if they are modeled by separate parameters, i.e., there exist parameter spaces $\Theta_f$ and $\Theta_G$ such that $\Theta=\Theta_f \times \Theta_G$ and $f(\cdot; \theta)=f(\cdot, \Theta_f)$, $G(\cdot; \theta)=G(\cdot; \Theta_G)$ for all $\theta\in\Theta$.
\end{definition}

\begin{theorem}\label{minisuff}
Fix $(v,T)$ and consider \eqref{gassexp}. With a full-grid observations $\boldsymbol S_{\textup{full}}=\{S_{t_i}\}_{i\in[N]}$, for power-law kernel $G(s)\propto s^{-\gamma}, G(s)\propto l_0(l_0+s)^{-\gamma}$ where $\gamma\in\Gamma\subseteq(0,1)$ contains some open set or exponential kernel $G(s)\propto e^{-\rho s}$ where $\rho\in\Gamma\subseteq \mathbb R^{+}$ contains some open set, assume the instantaneous impact $f$ is positive (i.e., $f(v;\theta)\neq 0$ for $v>0$) and is separable from from the kernel $G$, then the \textbf{unique} sufficient statistic (up to a one-to-one transformation) for propagator model \eqref{propdynamic} is the full trajectory data $\boldsymbol{S}_{\textup{full}}$.
\end{theorem}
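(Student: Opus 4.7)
The plan is to establish that $\boldsymbol{S}_{\textup{full}}$ is minimal sufficient via the Lehmann-Scheff\'e characterization: a statistic $T$ is minimal sufficient if and only if $T(x)=T(y)$ is equivalent to $p_\theta(x)/p_\theta(y)$ being $\theta$-free. Since the identity map on $\boldsymbol{S}_{\textup{full}}$ is trivially sufficient, it is enough to show that whenever $x\neq y$ the likelihood ratio is non-constant in $\theta$; this simultaneously precludes any reduction and identifies $\boldsymbol{S}_{\textup{full}}$ as the unique sufficient statistic up to a bijection.

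First I would write out the Gaussian log-likelihood using \eqref{gassexp}. Conditional on the fixed $(v,T)$, $\boldsymbol{S}_{\textup{full}}$ is multivariate Gaussian with mean $\mu_i(\theta)=f(v;\theta_f)H_i(\theta_G)$, where $H_i(\theta_G)\triangleq\int_0^{t_i}G(s;\theta_G)\,ds$, and $\theta$-independent covariance $\sigma^2\Sigma$ with $\Sigma_{ij}=\min(t_i,t_j)$. Expanding the quadratic form, the log-likelihood ratio collapses to
$$
\log\frac{p_\theta(x)}{p_\theta(y)}=\mu(\theta)^{T}\Sigma^{-1}(x-y)+c(x,y),
$$
where $c(x,y)$ is $\theta$-free. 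Writing $w\triangleq\Sigma^{-1}(x-y)$ and invoking separability, the $\theta$-dependent piece factors as $f(v;\theta_f)\sum_i H_i(\theta_G)w_i$. For the product to be constant jointly in $(\theta_f,\theta_G)$, positivity of $f$ together with identifiability from Assumption \ref{modelspec} (which forces $f(v;\cdot)$ to be non-constant on $\Theta_f$) implies $\sum_i H_i(\theta_G)w_i=0$ for every $\theta_G$ in the open subset of $\Gamma$.

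The remaining task is to show that this vanishing forces $w=0$, which by invertibility of $\Sigma$ gives $x=y$. For the exponential kernel, $H_i(\rho)=(1-e^{-\rho t_i})/\rho$ reduces the identity to $\sum_i w_i(1-e^{-\rho t_i})=0$ on an open $\rho$-interval; linear independence of $\{1,e^{-\rho t_1},\ldots,e^{-\rho t_N}\}$ for distinct positive $t_i$'s (a Chebyshev / exponential-sums argument) gives $w_i=0$. For $G(s;\gamma)=s^{-\gamma}$, setting $\alpha=1-\gamma$ reduces the condition to $\sum_i w_i t_i^{\alpha}=0$ on an open $\alpha$-interval, and the classical linear independence of $\{t_i^{\alpha}\}$ for distinct positive $t_i$ yields $w_i=0$. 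The kernel $G(s;\gamma)=l_0(l_0+s)^{-\gamma}$ is handled analogously by passing to $l_0+t_i$ and absorbing the additional $l_0^{\alpha}$ contribution from the lower integration limit into the same linear-independence argument.

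The main obstacle I anticipate lies in the linear-independence step: the open-set hypothesis on $\Gamma$ is indispensable, since finitely many pointwise constraints on $w$ would not in general determine it, and one must justify real-analytic extension to pass to the Vandermonde-type independence cleanly. A secondary subtlety is the removable singularity at $\gamma=1$ in the power-law case, which can be handled by restricting the open set away from $\gamma=1$ or by continuous extension. Finally, one should verify that non-constancy of $f(v;\cdot)$ on $\Theta_f$ truly follows from Assumption \ref{modelspec} when $G_{\textup{order}}=\delta_{v,T}$; otherwise the argument only delivers that $\sum_i H_i(\theta_G)w_i$ is a $\theta_G$-free constant, which nevertheless suffices by taking appropriate limits (e.g.\ $\rho\to\infty$ in the exponential case) to force it to be zero.
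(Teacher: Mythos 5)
Your proposal is correct, and its analytic core coincides with the paper's: both arguments reduce minimal sufficiency of $\boldsymbol S_{\textup{full}}$ to showing that the kernel-integral functions of the decay parameter, together with the constant function, are linearly independent over the open subset of $\Gamma$, and both then upgrade minimal sufficiency to uniqueness up to bijection by the same two-way-function argument. The difference is in the statistical lemma you invoke. The paper rewrites the likelihood in exponential-family form with natural statistic the increments $S_{t_i}-S_{t_{i-1}}$ and natural parameter $\eta_i(\theta)=\frac{f(v;\theta)}{\sigma^2\Delta t}\int_{t_{i-1}}^{t_i}G(s;\theta)\,ds$, and then cites the convex-hull-dimension criterion for minimal sufficiency in exponential families (Theorem 3.19 of Keener); you instead apply the Lehmann--Scheff\'e likelihood-ratio characterization directly to the multivariate Gaussian, arriving at the condition $\sum_i H_i(\theta_G)w_i=\text{const}$ with $w=\Sigma^{-1}(x-y)$. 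Since $H_i=\int_0^{t_i}G$ is an invertible linear transform of the paper's $\Delta G_i=\int_{t_{i-1}}^{t_i}G$, the two independence conditions are equivalent, so the routes differ only in packaging; yours is marginally more self-contained (no exponential-family machinery), while the paper's diagonalizes the problem via independent increments and avoids manipulating $\Sigma^{-1}$. Where you gesture at ``classical linear independence of $\{t_i^{\alpha}\}$'' and exponential sums, the paper supplies the actual argument: it extends $\gamma\mapsto\int_{t_{i-1}}^{t_i}s^{-\gamma}ds$ analytically to $(-\infty,1)$, rules out the identically-zero alternative by a $\gamma\to-\infty$ asymptotic comparison after dividing by the dominant power, and then uses that the zero set of a nontrivial real-analytic function has measure zero while $\Gamma$ contains an open set. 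You correctly flag this as the step needing care (including the affine constant, handled in the paper by carrying the $w_0$ term, and in your sketch by the $\rho\to\infty$ limit), so nothing essential is missing — but that appendix-level argument is the part you would still have to write out.
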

\begin{proof}
A sufficient statistic is minimally sufficient if it can be represented as a function of every other sufficient statistic, i.e., $\boldsymbol X$ is minimally sufficient if for every other sufficient statistic $\boldsymbol Y$, there exists some function $f$ such that $\boldsymbol X=f(\boldsymbol Y)$ (a.e. for all $\theta$). Thus, if we can show $\boldsymbol{S}_{\text{full}}$ is minimal sufficient, then for any sufficient statistic $\boldsymbol X$, we know both $\boldsymbol X$ is a function of $\boldsymbol{S}_{\text{full}}$ (by definition of a statistic) and $\boldsymbol{S}_{\text{full}}$ is a function of $\boldsymbol X$ (by minimal sufficiency of $\boldsymbol{S}_{\text{full}}$), thus proving $\boldsymbol{S}_{\text{full}}$ is the unique sufficient statistic, up to a one-to-one mapping.
Under \eqref{main}, we have $\mu(t,v;\theta)=f(v;\theta)\int_0^t G(s;\theta)ds$. Then we can write, as in \eqref{likeli}:
\begin{align}\label{mini}
   p_\theta(\boldsymbol{S}_{\text{full}}|v,T)\propto& \exp\Big(-\sum_{i=1}^N\frac{(S_{t_i}-S_{t_{i-1}}-(\mu(t_i,v;\theta)-\mu(t_{i-1},v;\theta)))^2}{2\sigma^2\Delta t}\Big) \nonumber\\
   =& \exp\Big(-\sum_{i=1}^N\frac{(S_{t_i}-S_{t_{i-1}}-f(v;\theta)\int_{t_{i-1}}^{t_i} G(s;\theta)ds)^2}{2\sigma^2\Delta t}\Big)\nonumber\\
   =& f_1(\boldsymbol{S}_{\text{full}}) \exp\Big(\Big(\sum_{i=1}^N(S_{t_i}-S_{t_{i-1}})\cdot \frac{f(v;\theta)}{\sigma^2 \Delta t}\int_{t_{i-1}}^{t_i} G(s;\theta)ds\Big) -f_2(\theta)\Big).
\end{align}
for some $f_1,f_2$. It is known that (see e.g., \cite{lehmann2006theory} or Theorem 3.19 \cite{keener2010theoretical}) that, for an exponential family with density
\begin{equation}\label{expfam}
  p_\theta(x)=\exp(\eta(\theta)\cdot T(x)-B(\theta))h(x),  
\end{equation}
where the $\eta(\theta), T(x)\in\mathbb R^k$ and $\eta(\theta)\cdot T(x)$ denote their dot product, the statistic $T(x)\in\mathbb R^K$ is minimal sufficient if (1) $\{T_i(x)\}_{1\leq i \leq k}$ is linearly independent and (2) the dimension of the convex hull of $\{\eta(\theta)\}_{\theta\in\Theta}$ has dimension $k$. The convex hull of a set $S$ is $\{\sum_{i=1}^n t_ix_i: t_i\geq0, x_i\in S, \forall i, \sum_{i=1}^n t_i=1, n>0\}$. An equivalent condition for (2) is: there exists $\{\theta_0,\theta_1,...,\theta_k\}\subseteq\Theta$ such that $\{\eta(\theta_i)-\eta(\theta_0)\}_{1\leq i \leq k}$ is linearly independent. Now, compare \eqref{mini} with \eqref{expfam}, we let $T(\boldsymbol S_{\textup{full}})\in\mathbb R^N$ with $T_i(S_{\textup{full}})=S_{t_i}-S_{t_{i-1}}$ for $i\in[N]$ and let $\eta(\theta)\in\mathbb R^N$ with $\eta_i(\theta)=\frac{f(v;\theta)}{\sigma^2 \Delta t}\int_{t_{i-1}}^{t_i}G(s;\theta)ds$ for $i\in[N]$. Clearly $\{T_i\}_{i\in[N]}$ are linearly independent, as there are no linear constraints among the price increments $S_{t_i}-S_{t_{i-1}}$. Thus (1) is satisfied. For (2), since $f(v;\theta)\neq 0$ and is separable from $G$, it suffices to show the convex hull of $\{\eta(\theta_G)\}_{\theta_G\in\Theta_G}\subseteq \mathbb R^N$ has dimension $N$ for the various kernel $G(\cdot)$ mentioned above, where we redefine $\eta_i(\theta_G)=\int_{t_{i-1}}^{t_i}G(s;\theta_G)ds$ using $G$ only. This is left in the Appendix \ref{condim}. Finally, the minimal sufficiency of $T(\boldsymbol S_{\textup{full}})=\{S_{t_i}-S_{t_{i-1}}\}_{i\in[N]}$ is equivalent to the minimal sufficiency of $\boldsymbol S_{\textup{full}}$ since $S_0$ is given.
\end{proof}

Theorem \ref{minisuff} shows, when fitting a general propagator model, only the entire price trajectory provides the whole information, not any summary statistics (e.g., cost of VWAP, or the three points as in Theorem \eqref{acfinal} for Almgren-Chriss model). To see more clearly why this should be the case, consider a simplified example.

\begin{example}\label{fullpathexp}
\textup{Suppose we are estimating a one-dimensional parameter $\theta\in\Theta\subseteq\mathbb R$ from \eqref{main} and $(v,T)$ is fixed. We can estimate $\theta$ using either the average cost of VWAP (up to a constant factor $ v$):
\begin{equation}\label{Jexam}
    J=\frac{1}{T}\int_0^T S_tdt \sim\mathcal N(\frac{1}{T}\int_0^T \mu_{\theta}(t,v)dt, \frac{\sigma^2 T}{3}),
\end{equation}
or the entire path $\boldsymbol S_{\text{full}}$. Using Lemma \ref{jacob}, the Fisher information $\mathcal I_J(\theta)$ is 
\begin{equation*}
    \mathcal{I}_J(\theta)=\frac{3}{\sigma^2 T^3}\Big(\int_0^T\frac{\partial \mu_\theta(t,v)}{\partial\theta}dt\Big)^2,
\end{equation*}
whereas, using Lemma \ref{jacob} and the independence of $\{S_{t_i}-S_{t_{i-1}}\}_{i\in[N]}$ (multivariate Gaussian), the Fisher information $\mathcal I_{\boldsymbol S_{\text{full}}}(\theta)$ is (let $\Delta t\downarrow 0$ and assume boundary condition $\frac{\partial \mu_\theta(0,v)}{\partial\theta}=0$):
\begin{equation}\label{asymfull}
 \lim_{\Delta t \rightarrow 0}\mathcal I_{\boldsymbol S_{\text{full}}}(\theta)=\frac{1}{\sigma^2}\int_0^T(\frac{\partial^2 \mu_\theta(t,v)}{\partial\theta\partial t})^2dt.
\end{equation}
 Let $h(t;\theta,v)\triangleq\frac{\partial \mu_\theta(t,v)}{\partial\theta}$, $h'(t;\theta,v)\triangleq \frac{dh(t;\theta,v)}{dt}$ and assume $h'(0;\theta,v)=0$. Then $\lim_{\Delta t\rightarrow 0}\mathcal I_{\boldsymbol S_{\text{full}}}(\theta)\geq \mathcal{I}_J(\theta)$ then immediately follows from the inequality:
 \begin{align*}
    \Big(\int_0^T h(t;\theta,v) dt\Big)^2=&\Big(\int_0^T \int_0^t h'(s;\theta,v)ds dt\Big)^2\nonumber\\
    =&\Big(\int_0^T \int_s^T h'(s;\theta,v)dt ds\Big)^2\nonumber\\
    =&\Big(\int_0^T (T-s) h'(s;\theta,v) ds\Big)^2 \leq \frac{T^3
    }{3} \int_0^T (h'(s;\theta,v))^2 ds,
 \end{align*}
where the second equality follows from the Fubini-Tonelli theorem and the last inequality follows from Cauchy–Schwarz inequality.}
\end{example}

\subsection{Estimation of Impact Function $f$}
In the previous section, Theorem \ref{minisuff} addresses the joint estimation of $f$ and $G$, but the proof also works for the estimation of $G$ given $f$. However, Theorem \ref{minisuff} can be adjusted for estimation of impact function $f$ given $G$, which typically can be calibrated from other methods (\cite{bouchaud2003fluctuations,busseti2012calibration}).  As discussed in section 2.1 of \cite{curato2017optimal}, one of the ``major attractions of the propagator models to practitioners" is that given $G(\cdot)$ and a large data collection of ``VWAP-like executions", the expected cost of VWAP $v\int_0^T S_tdt-XS_0$:
\begin{equation}\label{propcost}
    c_{\text{vwap}}(\theta)= vf(v;\theta)\int_0^T\int_0^tG(s)dsdt,
\end{equation}
can be used to estimate $f(v)$ and reflect the performance of execution algorithms. In this section, based on Theorem \ref{minisuff}, we quantify the efficiency of using VWAP cost to calibrate $f$, as well as that of using trajectory data, in the limit case when $\Delta t\rightarrow 0$. 

For fixed $(T,v)$, the information matrix based on VWAP cost alone only has rank 1, so for simplicity, we first assume we are calibrating an impact function  $f(v;\theta)$ characterized by a one-dimensional parameter $\theta\in\Theta\subseteq \mathbb R$. As in Remark \ref{tandv}, the general case could depend on the distribution of $(T,v)\sim G_{\text{order}}$ and we avoid this dependence to isolate the effect. For concreteness, one can suppose a power law impact $f(v)=v^\delta$ and we want to estimate $\delta\in\Gamma\subseteq (0,1]$. Interestingly, although considered unrealistic for certain cases (\cite{gatheral2010no}), it is recently argued in \cite{jusselin2020no} that power-law impact is the only impact function consistent with the no-arbitrage condition. 

\begin{definition}\label{disJ}
Fix $\boldsymbol S_{\text{full}}=\{S_{t_i}\}_{i\in\mathcal [N]}$ be price trajectory data. Let $\boldsymbol S_{\text{partial}}=\{S_{t}\}_{t\in\mathcal K}$ be a partial price trajectory data with $\mathcal K=\{t_0^{\mathcal K},t_1^{\mathcal K},t_2^{\mathcal K},...,t_{N_{\mathcal K}}^{\mathcal K}\}\subseteq [0,T]$ and $0=t_0^{\mathcal K}\leq t_1^{\mathcal K} \leq t_2^{\mathcal K}\leq...\leq t_{N_{\mathcal K}}^{\mathcal K}\leq T$. Let $J_{\text{discrete}}=\sum_{i=1}^{N_{\mathcal K}}S_{t_i}\Delta t$ be the discrete VWAP cost while $J=\int S_tdt$ is the true VWAP cost.
\end{definition}

\begin{theorem}\label{calif}
In the same setting as Theorem \ref{minisuff}, suppose the decay kernel $G(\cdot)$ is fixed and the impact function $f(v;\theta)$ is parameterized by some $\theta\in\Gamma\subseteq \mathbb R$ where $\Gamma$ contains some open set. Then the sufficient statistics for estimating $f$ in \eqref{propdynamic} is $\phi(\boldsymbol S_{\text{full}})\triangleq\sum_{i=1}^N (S_{t_i}-S_{t_{i-1}})\int_{t_{i-1}}^{t_i} G(s)ds$, which implies
\begin{equation}\label{40}
    \mathcal I_{\boldsymbol S_{\text{full}}}(\theta)=\mathcal I_{\phi(\boldsymbol S_{\text{full}})}(\theta).
\end{equation}
Moreover, in the limit as $\Delta t\rightarrow 0$, we have 
\begin{equation}\label{41}
    \mathcal I_{\boldsymbol S_{\text{full}}}(\theta)\rightarrow \frac{(\frac{\partial f}{\partial \theta})^2}{\sigma^2}\int_0^T G^2(t)dt,
\end{equation}
and 
\begin{equation}\label{43}
    \mathcal I_{J_{\text{discrete}}}(\theta)\rightarrow \mathcal I_{J}(\theta) =\frac{(\frac{\partial f}{\partial \theta})^2}{\sigma^2}\cdot\frac{3}{T^3}\Big(\int_0^T G(t)(T-t)dt\Big)^2.
\end{equation}
Finally, we have 
\begin{equation}\label{42}
    \mathcal I_{\boldsymbol S_{\text{partial}}}(\theta)\rightarrow \frac{(\frac{\partial f}{\partial \theta})^2}{\sigma^2}\Big(\sum_{i=1}^{N_{\mathcal K}}\frac{(\int_{t_{i-1}^{\mathcal K}}^{t_{i}^{\mathcal K}}G(t)dt)^2}{t_{i}^{\mathcal K}-t_{i-1}^{\mathcal K}}\Big).
\end{equation}
\end{theorem}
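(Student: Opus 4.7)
My plan is to exploit the independent Gaussian increment structure of $S_t$ under the propagator model. Conditional on $(v,T)$, the increments $\{S_{t_i}-S_{t_{i-1}}\}_{i\in[N]}$ are independent with $S_{t_i}-S_{t_{i-1}}\sim\mathcal N\!\big(f(v;\theta)\int_{t_{i-1}}^{t_i}G(s)ds,\;\sigma^2\Delta t\big)$. For the sufficiency claim \eqref{40}, I would re-examine the exponential-family factorization already derived in the proof of Theorem \ref{minisuff}, namely
\begin{equation*}
p_\theta(\boldsymbol S_{\text{full}}|v,T)\propto f_1(\boldsymbol S_{\text{full}})\exp\!\Big(\tfrac{f(v;\theta)}{\sigma^2\Delta t}\sum_{i=1}^{N}(S_{t_i}-S_{t_{i-1}})\!\!\int_{t_{i-1}}^{t_i}\!\!\!G(s)ds\;-\;f_2(\theta)\Big).
\end{equation*}
Because $f(v;\theta)$ is now scalar in $\theta$, the entire $\theta$-dependence is packaged into the single statistic $\phi(\boldsymbol S_{\text{full}})$. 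Applying Lemma \ref{factor} gives sufficiency, and Lemma \ref{data processing} then yields the Fisher-information equality in \eqref{40}.

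For the formulas \eqref{41} and \eqref{42}, I would apply Lemma \ref{jacob} blockwise: since the increments are independent Gaussians with known variance, the Fisher information is additive and equals
\begin{equation*}
\mathcal I_{\boldsymbol S_{\text{full}}}(\theta)=\sum_{i=1}^{N}\frac{(\partial f/\partial\theta)^2\big(\int_{t_{i-1}}^{t_i}G(s)ds\big)^2}{\sigma^2\Delta t}.
\end{equation*}
For the partial trajectory \eqref{42} this is immediate: the same derivation applies with $\Delta t$ replaced by $t_i^{\mathcal K}-t_{i-1}^{\mathcal K}$, and no limit is needed. For the full-trajectory limit \eqref{41}, I would use a mean-value / Riemann-sum argument: write $\int_{t_{i-1}}^{t_i}G(s)ds=G(\xi_i)\Delta t$ for some $\xi_i\in[t_{i-1},t_i]$, so that the sum reduces to $\tfrac{(\partial f/\partial\theta)^2}{\sigma^2}\sum_i G(\xi_i)^2\Delta t$, which converges to $\tfrac{(\partial f/\partial\theta)^2}{\sigma^2}\int_0^T G^2(t)dt$ under square-integrability of $G$.

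For the VWAP result \eqref{43}, the main task is to compute the law of $J=\int_0^T S_tdt$. By Fubini's theorem,
\begin{equation*}
\mathbb EJ=f(v;\theta)\int_0^T\!\!\int_0^t G(s)ds\,dt=f(v;\theta)\int_0^T G(s)(T-s)ds,
\end{equation*}
while $\mathrm{Var}(J)=\sigma^2\int_0^T\int_0^T\min(s,t)\,ds\,dt=\sigma^2 T^3/3$. Applying Lemma \ref{jacob} to the scalar Gaussian $J$ gives \eqref{43} directly. The convergence $\mathcal I_{J_{\text{discrete}}}(\theta)\to\mathcal I_J(\theta)$ follows because $J_{\text{discrete}}=\sum_i S_{t_i}\Delta t$ is a Gaussian whose mean-derivative and variance converge to those of $J$ as $\Delta t\to 0$, and the Fisher information of a univariate Gaussian with known variance depends continuously on these two quantities.

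The step I expect to be most delicate is the Riemann-sum convergence underlying \eqref{41}: for kernels singular at the origin (e.g.\ $G(s)\propto s^{-\gamma}$ with $\gamma\ge 1/2$), the integral $\int_0^T G^2(t)dt$ can diverge, and one must either restrict the class of kernels (as in the power-law case $\gamma<1/2$, the exponential case, and the logarithmic case $G(s)\propto l_0(l_0+s)^{-\gamma}$ where $l_0>0$ bounds $G$ away from the singularity) or interpret the limit as $+\infty$. A supporting remark addressing when the limit is finite should accompany the argument, but no new machinery beyond Lemma \ref{jacob} and elementary Gaussian calculus is required.
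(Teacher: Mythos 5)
Your proposal is correct, and for two of the four claims it takes a genuinely more direct route than the paper. For \eqref{40} and \eqref{41} you follow essentially the paper's argument: the exponential-family factorization plus Lemmas \ref{factor} and \ref{data processing} give sufficiency of $\phi(\boldsymbol S_{\text{full}})$, and the additive Fisher information of the independent Gaussian increments yields $\tfrac{(\partial f/\partial\theta)^2}{\sigma^2}\sum_i\Delta G_i^2/\Delta t$, whose Riemann-sum limit is \eqref{41} (the paper reads this off from the one-dimensional law of $\phi(\boldsymbol S_{\text{full}})$ itself, but the two computations coincide). Where you diverge is \eqref{43} and \eqref{42}: the paper's primary derivation runs both through the chain rule for Fisher information (Lemma \ref{mutual}), writing $\mathcal I_J=\mathcal I_{\phi(\boldsymbol S_{\text{full}})}-\mathbb E[\mathcal I_{\phi(\boldsymbol S_{\text{full}})|J}]$ and, for the partial trajectory, an analogous subinterval-by-subinterval conditional decomposition, each requiring an explicit bivariate Gaussian conditional law. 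You instead compute the marginal law of $J$ directly (mean by Fubini, variance $\sigma^2T^3/3$) and apply Lemma \ref{jacob}, and for \eqref{42} you simply note that the partial-grid increments are themselves independent Gaussians so the information is the sum of the per-increment informations with no limit required. The paper itself concedes in a closing sentence that $\mathcal I_J$ "can be directly calculated using Proposition \ref{fishtvtv}," so your route is sanctioned; it is shorter and avoids the conditional-Gaussian bookkeeping, at the cost of not exhibiting the information loss $\mathbb E[\mathcal I_{\phi(\boldsymbol S_{\text{full}})|J}]$ explicitly, which is the quantity the paper's decomposition makes visible. Your caveat about \eqref{41} for kernels with $\int_0^T G^2$ divergent (power law with $\gamma\geq 1/2$) is a legitimate observation that the paper does not address; restricting the kernel class or reading the limit as $+\infty$ is the right resolution and does not affect the finite-$\Delta t$ statements or \eqref{42} and \eqref{43}.
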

\begin{proof}
The proof is left in Appendix \ref{dsfds}.
\end{proof}

As a straightforward result of Theorem \ref{calif}, we can compare the Fisher information for calibrating $f$ of using two points $(S_t,S_T)$ versus $J$.
\begin{coro}\label{twovscwap}
In the setting of Theorem \ref{calif}, we have 
\begin{equation}
    \mathcal I_{S_t,S_T}(\theta)= \frac{(\frac{\partial f}{\partial \theta})^2}{\sigma^2}\Big(\frac{(\int_0^t G(s)ds)^2}{t}+\frac{(\int_t^T G(s)ds)^2}{T-t}\Big).
\end{equation}
Consequently, we have $\mathcal I_{S_t,S_T}(\theta) -\mathcal I_J(\theta) \geq 0$ when 
\begin{equation}\label{comppppp}
    \Big(\frac{(\int_0^t G(s)ds)^2}{t}+\frac{(\int_t^T G(s)ds)^2}{T-t}\Big) \geq \frac{3}{T^3} \Big(\int_0^T G(s)(T-s)ds\Big)^2.
\end{equation}
\end{coro}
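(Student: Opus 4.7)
\textbf{Proof plan for Corollary \ref{twovscwap}.} My plan is to observe that this is essentially a direct specialization of Theorem \ref{calif}, so the main task is to carefully instantiate the partial-trajectory Fisher information formula \eqref{42} at the right partition, and then carry out the comparison with the VWAP-based Fisher information \eqref{43} by subtraction.

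First I would derive the explicit expression for $\mathcal{I}_{S_t,S_T}(\theta)$. Take $\boldsymbol{S}_{\text{partial}} = \{S_0, S_t, S_T\}$, corresponding to the partition $\mathcal{K} = \{t_0^{\mathcal{K}}, t_1^{\mathcal{K}}, t_2^{\mathcal{K}}\} = \{0, t, T\}$ in Definition \ref{disJ}, with $N_{\mathcal{K}} = 2$. Since $S_0$ is known (a constant), this yields the same Fisher information as the two-point experiment $(S_t, S_T)$. Plugging this partition into \eqref{42} of Theorem \ref{calif} gives
\begin{equation*}
\mathcal{I}_{S_t,S_T}(\theta) = \frac{(\partial f/\partial\theta)^2}{\sigma^2}\left(\frac{\left(\int_0^{t} G(s)\,ds\right)^2}{t - 0} + \frac{\left(\int_{t}^{T} G(s)\,ds\right)^2}{T - t}\right),
\end{equation*}
which is the first claim. (One can alternatively verify this directly from Lemma \ref{jacob} applied to the bivariate Gaussian $(S_t, S_T)$ with independent increments, since the mean vector equals $f(v;\theta)\bigl(\int_0^t G,\, \int_0^T G\bigr)$ and the $2\times 2$ covariance has a simple diagonalizing change of variables $(S_t, S_T - S_t)$ with variances $\sigma^2 t$ and $\sigma^2(T-t)$.)

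Second, I would recall from \eqref{43} that
\begin{equation*}
\mathcal{I}_J(\theta) = \frac{(\partial f/\partial\theta)^2}{\sigma^2}\cdot\frac{3}{T^3}\left(\int_0^T G(s)(T-s)\,ds\right)^2.
\end{equation*}
Subtracting, the common positive prefactor $(\partial f/\partial\theta)^2/\sigma^2$ factors out, so the sign of $\mathcal{I}_{S_t,S_T}(\theta) - \mathcal{I}_J(\theta)$ is determined entirely by the bracketed quantity. Hence $\mathcal{I}_{S_t,S_T}(\theta) \geq \mathcal{I}_J(\theta)$ is equivalent to the stated inequality \eqref{comppppp}, completing the argument.

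There is no real obstacle here; the only delicate point is being careful that the partition in \eqref{42} includes the boundary points $0$ and $T$ so that the sum has exactly two terms, and that $S_0$ being known means $\boldsymbol{S}_{\text{partial}} = \{S_0, S_t, S_T\}$ carries the same Fisher information as $(S_t, S_T)$. Everything else is algebraic bookkeeping on top of Theorem \ref{calif}.
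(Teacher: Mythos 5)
Your proposal is correct and matches the paper's intended derivation: the paper presents this corollary as a direct specialization of Theorem \ref{calif}, obtained by instantiating \eqref{42} at the partition $\{0,t,T\}$ and comparing with \eqref{43} after cancelling the common positive prefactor $(\partial f/\partial\theta)^2/\sigma^2$. Your parenthetical sanity check via Lemma \ref{jacob} on the increments $(S_t, S_T-S_t)$ is a nice confirmation that the expression is exact rather than merely a limit, but the main argument is the same as the paper's.
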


For calibrating $f$, using both Theorem \ref{calif} and Corollary \ref{twovscwap}, we can compute optimal sampling points $\boldsymbol S$ that maximize $\mathcal I_{\boldsymbol S}$ or compare its efficiency with an estimation based on VWAP cost. In Theorem \ref{acfinal}, we see that, for the Almgren-Chriss model, a sample $S_t$ early enough with $\frac{t}{T}\leq \frac{1}{4}$ can outperform the VWAP based method. For the propagator model, the comparison between $(S_t, S_T)$ and $J$ depends on the specific decay kernel $G$ as in \eqref{comppppp}. However, as we shall see in the following examples, the result for the propagator model bears a resemblance to the result for the Almgren-Chirss model in Theorem \ref{acfinal}.
\begin{example}\label{321comp}
\textup{For the decay kernel $G(s)=s^{-\gamma}$ with $\gamma=0.4$ \cite{bouchaud2003fluctuations}, based on \eqref{comppppp}, we have $\mathcal I_{S_t,S_T}(\theta)\geq \mathcal I_J(\theta)$ when $2.11\cdot 10^{-4} \leq \frac{t}{T} \leq 0.279$. For different values of $\gamma$, one can check $\mathcal I_{S_t,S_T}(\theta)\geq \mathcal I_J(\theta)$ when the condition in Table \ref{tab:my_labelgamma} is satisfied.
\begin{table}[htb]
    \centering
    \scalebox{0.85}{
    \begin{tabular}{|l|c|c|c|c|c|r|}
	\hline
	\text{} & $\gamma=0.35$ &$\gamma=0.45$&$\gamma=0.5$ &$\gamma=0.55$ &$\gamma=0.65$&$\gamma=0.75$ \\
	\hline
    $\tau= t/T$ & ${8.97\cdot 10^{-4}\leq \tau\leq 0.369}$ &${9.41\cdot 10^{-7}\leq \tau\leq 0.252}$&$\tau\leq \frac{1}{4}$&$\tau\leq 0.257$&$\tau\leq 0.279$&$\tau\leq 0.301$\\
	\hline
\end{tabular}}
    \caption{The range of values of $\tau=\frac{t}{T}$ where $\mathcal I_{S_t,S_T}(\theta)\geq \mathcal I_J(\theta)$. }
    \label{tab:my_labelgamma}
\end{table}
As shown in Table \ref{tab:my_labelgamma}, similar to Theorem \ref{acfinal}, the comparison of $\mathcal I_{S_t,S_T}(\theta)$ and $\mathcal I_J(\theta)$ only relies on $\tau=\frac{t}{T}$. In general, a smaller value of $\tau$ would help $(S_t,S_T)$ outperform $J$, as long as it is not unreasonably/unrealistically small (i.e. on the order of $10^{-4}$ or less for certain $\gamma$). Perhaps more interestingly, the cut-off point for being ``small" is around $\frac{1}{4}$ just as in Theorem \ref{acfinal}, although the precise value also changes with $\gamma$ (the exact value of $\frac{1}{4}$ is achieved for $\gamma=0.5$).}
\end{example}

\begin{example}\label{expyixia}
 For $G(s)=e^{-\rho s}$, the comparison of \eqref{comppppp} depends on specific values of $t$ and $T$, not just their ratio $\tau$. However, in the case where $t,T\rightarrow \infty$ but $\frac{t}{T}\rightarrow \tau$, we have $\mathcal I_{S_t,S_T}(\theta)\geq \mathcal I_J(\theta)$ as long as $$\tau \leq \frac{1}{3},$$ regardless of the value of $\rho$. The derivation is left in Appendix \ref{asdas}. 
\end{example}

\subsection{Sampling Strategy for Propagator Models: A Numerical Study} \label{subsec:path_numerical}
Theorem \ref{minisuff} shows that the full price trajectory $\boldsymbol S_{\text{full}}$ is the only sufficient statistic for a general propagator model. However, it does not specify how much information is lost when we only have partial observations $\boldsymbol S$, neither suggests a consistent sampling strategy over the cost-of-VWAP based estimation, as in Section \ref{samplestrat}. Unfortunately, unlike Theorem \ref{acfinal}, the optimal sampling strategy for \eqref{propdynamic} depends on the specific values of $t$ and $T$ (not just their ratios) and the specific form of $G$ and $f$. Thus, in section, we aim to gain some insights into questions (1) and (2) through numerical examples. We conduct numerical studies to compare different sampling strategies against the VWAP-cost-based estimation methods, in the flavor of Section \ref{samplestrat}. 
\subsubsection{Example 1: Power Law}
In the first example, we take the power-law kernel $G(s)=s^{-\gamma}$ with $\gamma=0.4$ \cite{bouchaud2003fluctuations,busseti2012calibration} and power-law impact $f(v)=v^{\delta}$ with $\delta=0.6$ \cite{almgren2005direct} (hence $\mu(v,t; \theta)= \frac{v^\delta t^{1-\gamma}}{1-\gamma}$). We shall compare the asymptotic variance for calibrating $\delta$, under different estimation methods. In this example, we assume $T=0.1$, $v=0.3$, $\sigma=1$ $\gamma^\star=0.4$ and $\delta^\star=0.6$. Assuming we are only estimating $\delta$ (i.e., all other parameters are given), then as computed in \eqref{asymfull}, as $\Delta t\downarrow0$, one can check the entry for $\delta$ in the Fisher information matrix $[\mathcal I_{\boldsymbol S_{\text{full}}}]_{\delta,\delta}$ is
\begin{equation*}
    [\mathcal I_{\boldsymbol S_{\text{full}}}]_{\delta,\delta} \rightarrow \frac{1}{\sigma^2}\int_0^T(\frac{\partial^2 \mu_\theta(t,v)}{\partial\delta\partial t})^2dt=\int_0^T t^{-2\gamma}v^{2\delta} \ln^2(v) dt=1.078.
\end{equation*}
We compare estimation methods based on the VWAP cost $J=v\int_0^T S_tdt-XS_0\sim\mathcal N(\frac{v^{1+\delta}T^{2-\gamma}}{(1-\gamma)(2-\gamma)},\frac{\sigma^2v^2T^3}{3})$ with methods based on price points $S_i$'s sampled from trajectory. Again, as in \eqref{Jexam}, we have
\begin{equation*}
    [\mathcal{I}_J]_{\delta,\delta}=\frac{3}{\sigma^2 T^3}\Big(\int_0^T\frac{\partial \mu_\theta(t,v)}{\partial\delta}dt\Big)^2=0.702,
\end{equation*}
which recovers just over 65.1\% information for estimating $\delta$ (i.e., $0.651\approx \frac{0.702}{1.078}$), which implies the asymptotic variance for estimating $\delta$ using $J$ is approximately $1.536\approx 1/0.651$ times larger than the asymptotic variance for estimating $\delta$ using $\boldsymbol S_{\text{full}}$. Next, based on results from Theorem \ref{acfinal}, we pick the following representative sampling times $t_1=\frac{T}{8}, t_2=\frac{T}{4}, t_3=\frac{5T}{8}$ in addition to $T$ and compare their Fisher information ratio  $\frac{[\mathcal I]_{\delta,\delta}}{[\mathcal I_{\boldsymbol S_{\text{full}}}]_{\delta,\delta}}$. The results are summarized in Table \ref{tab:my_label}.
\begin{table}[ht]
    \centering
    \scalebox{0.85}{
    \begin{tabular}{|l|c|c|c|c|c|c|r|}
	\hline
	\text{} & $S_{t_1},S_T$ & $S_{t_2},S_T$ &$S_{t_3},S_T$&$ S_{t_1},S_{t_2},S_T$ &$ S_{t_1},S_{t_3},S_T$&$ S_{t_2},S_{t_3},S_T$&$ S_{t_1},S_{t_2}, S_{t_3},S_T$\\
	\hline
    ${[\mathcal I]_{\delta,\delta}}/{[\mathcal I_{\boldsymbol S_{\text{full}}}]_{\delta,\delta}}$ & \textbf{0.689} & 0.657&0.595&\textbf{0.700}&\textbf{0.698}&\textbf{0.661}&\textbf{0.704}\\
	\hline
\end{tabular}}
\caption{Comparison of Fisher information for calibrating  power-law impact.
}
    \label{tab:my_label}
\end{table}

The entries with bold numbers indicate the method has a better performance for estimating $\delta$ than using $J$ (i.e., ratio greater than $0.651$). As expected, we see that adding more points always increases the accuracy of estimation, no matter which points you already include. In this example, similar to Theorem \ref{acfinal}, the inclusion of the earliest sample point $S_{t_1}$ brings most improvement (just the two points with $S_{t_1}$ and $S_T$ can outperform $J$) and any combination of three points can outperform the VWAP cost $J$. The method based on four points recovers about 70\% of the $\mathcal I_{\boldsymbol S_{\text{full}}}$. However, we do note that $\mathcal I_{\boldsymbol S_{\text{full}}}$ here are computed based on the ideal, limiting case where $\Delta t\downarrow0$ and $N\rightarrow \infty$ (the number observations approach infinity). Moreover, we note that this depends on the particular choice of points. For example, if we let $t_1=\frac{T}{4}, t_2=\frac{T}{2}, t_3=\frac{3T}{4}$ and $t_4=T$, then we would have $\frac{[\mathcal I]_{\delta,\delta}}{[\mathcal I_{\boldsymbol S_{\text{full}}}]_{\delta,\delta}}=0.662$, whereas if we set  $t_1=\frac{T}{20}, t_2=\frac{T}{10}, t_3=\frac{T}{3}$ and $t_4=T$, we would have $\frac{[\mathcal I]_{\delta,\delta}}{[\mathcal I_{\boldsymbol S_{\text{full}}}]_{\delta,\delta}}=0.749$ (here we see again the improvement related to the inclusion of earlier sample points). 

We again verify our results on simulation. The results are summarized in Table \ref{tab:well_specified}. The performance is evaluated using the squared error between the true cost and the estimated cost,
and the parenthesis shows the standard error of the mean error.
$J$ is the model based on the average cost of VWAP, 
others are based on the points along the price trajectory.
$t_1=\frac{T}{8}, t_2=\frac{T}{2}, t_3=\frac{5T}{8}$ and $t_4=T$. In the simulation study, we use similar settings as the above theoretical analysis with the power-law kernel $G(s)=s^{-\gamma}$ and power-law impact $f(v)=v^{\delta}$.
We set $T=0.1$, $v=0.3$, $\sigma=0.2$, $\gamma=0.4$ and $\delta=0.6$. The diffusion process is discretized into $1024$ time bins.
The data-generating model and the estimation model are in the same parametric format and only $\delta$ is estimated.
The first model is based on the average cost of VWAP as in \eqref{Jexam}.
Other models rely on the price trajectory as discussed in section \ref{samplestrat} with a subset or all points at 
$t_1=\frac{T}{8}, t_2=\frac{T}{2}, t_3=\frac{5T}{8}$ and $t_4=T$.
Each estimation has 300 samples, and the simulation is repeated 40,000 times.
The performance is evaluated using the squared error of the cost with the estimated $\hat\delta$.
Bold numbers in Table \ref{tab:well_specified} indicate the method has a better performance than using $J$, after a standard two-sample t-test.
The conclusion agrees with the theoretical analysis that the cost-based method is inferior to the price trajectory-based methods if earlier price trajectory points are selected (for example $t_1$ or $t_2$).
In addition, if more trajectory points are chosen or the earlier the points are selected, the performance will be better.
\begin{table}[ht]
\centering
\scalebox{0.85}{
\begin{tabular}{|l|c|}
\hline
Estimator &  mean error $\times 10^{-7}$ (SE $\times 10^{-9}$)  \\ \hline
$ J$ &  $3.980$ ($2.807$)  \\ \hline

$ S_{t_1}, S_T$ & $\textbf{3.769}$ ($2.648$) \\ \hline
$S_{t_2}, S_T$ & $\textbf{3.930}$ ($2.771$) \\ \hline
$S_{t_3}, S_T$ & $4.323$ ($3.054$) \\ \hline 

$S_{t_1}, S_{t_2}, S_T$ & $\textbf{3.706}$ ($2.604$)\\ \hline
$S_{t_1}, S_{t_3}, S_T$ & $\textbf{3.716}$ ($2.611$) \\ \hline
$S_{t_2}, S_{t_3}, S_T$ & $\textbf{3.905}$ ($2.752$) \\ \hline

$S_{t_1}, S_{t_2}, S_{t_3}, S_T$ & $\textbf{3.684}$ ($2.589 $) \\ \hline
\end{tabular}}
\caption{\small{Numerical comparison of models with  power-law impact using simulation.}
}
\label{tab:well_specified}
\end{table}

\subsubsection{Example 1: Exponential Kernel}
Next, we consider the setting of \cite{obizhaeva2013optimal}, where $f(v)\propto v$ and $G(s)\propto e^{-\rho s}$ so that $\mu(t,v;\theta)=\frac{cv}{\rho}(1-e^{-\rho t})$ for some $c$. For this example, we assume $T=0.1$, $v=0.3$, $\sigma=1$ $c^\star=0.01$ and $\rho^\star=0.15$. In this example, we estimate $c$ and $\rho$ jointly and we can compute $\mathcal I_{\boldsymbol S_{\text{full}}}$ now as (again we let $\Delta t\downarrow 0$): $$[\mathcal I_{\boldsymbol S_{\text{full}}}]_{\theta_i,\theta_j}=\frac{1}{\sigma^2}\int_0^T(\frac{\partial^2 \mu_\theta(t,v)}{\partial\theta_i\partial t})(\frac{\partial^2 \mu_\theta(t,v)}{\partial\theta_j\partial t})dt.$$
For joint estimation, we no longer simply compare the ratio of specific entries in $\mathcal I$ against $\mathcal I_{\boldsymbol S_{\text{full}}}$ as in the last example. Instead, we compute the matrix 2-norm (i.e., the spectral norm) of $\mathcal I^{1/2}_{\boldsymbol S_{\text{full}}}\cdot\mathcal I^{-1}\cdot\mathcal I^{1/2}_{\boldsymbol S_{\text{full}}}$. To see why, note that 
\begin{equation*}
    \|\mathcal I^{1/2}_{\boldsymbol S_{\text{full}}}\cdot\mathcal I_X^{-1}\cdot\mathcal I^{1/2}_{\boldsymbol S_{\text{full}}}\|_2 \leq1+ \epsilon \text{ }\Leftrightarrow\text{} \boldsymbol w^T \mathcal I_X^{-1}\boldsymbol w \leq (1+\epsilon) \boldsymbol w^T \mathcal I^{-1}_{\boldsymbol S_{\text{full}}}\boldsymbol w, \text{ } \forall \boldsymbol w
\end{equation*}
which, in view of \eqref{asymvar}, implies the asymptotic variance for estimating any function of $\theta$ using $X$ cannot be more than $(1+\epsilon)$ times more than the asymptotic variance estimated using $\boldsymbol S_{\text{full}}$ (i.e., smaller value of $\|\mathcal I^{1/2}_{\boldsymbol S_{\text{full}}}\cdot\mathcal I^{-1}\cdot\mathcal I^{1/2}_{\boldsymbol S_{\text{full}}}\|_2$ indicates a more accurate estimate, the optimal estimation based on $\boldsymbol S_{\text{full}}$ has $\|\mathcal I^{1/2}_{\boldsymbol S_{\text{full}}}\cdot\mathcal I^{-1}_{\boldsymbol S_{\text{full}}}\cdot\mathcal I^{1/2}_{\boldsymbol S_{\text{full}}}\|_2=1$). Moreover, same as the last example, we also compare with the VWAP cost $J$ on the asymptotic variance for estimating $c$ (i.e., calibrating impact function $f$. Here, as before, we use the ratio $\frac{[\mathcal I]_{c,c}}{[\mathcal I_{\boldsymbol S_{\text{full}}}]_{c,c}}$ to compare against the benchmark $\frac{[\mathcal I_J]_{c,c}}{[\mathcal I_{\boldsymbol S_{\text{full}}}]_{c,c}}=0.754$. ). We do not compute the joint estimation of $c,\rho$ under VWAP cost $J$ as the Fisher information based on $J$ alone has rank one. Again, we pick the following representative sampling times $t_1=\frac{T}{8}, t_2=\frac{T}{4}, t_3=\frac{5T}{8}$ in addition to $T$. The results are summarized in Table \ref{tab:my_label1}. 
\begin{table}[ht]
    \centering
    \scalebox{0.75}{
    \begin{tabular}{|l|c|c|c|c|c|c|r|}
	\hline
	\text{} & $S_{t_1},S_T$ & $S_{t_2},S_T$ &$S_{t_3},S_T$&$ S_{t_1},S_{t_2},S_T$ &$ S_{t_1},S_{t_3},S_T$&$ S_{t_2},S_{t_3},S_T$&$ S_{t_1},S_{t_2}, S_{t_3},S_T$\\
	\hline
    $\|\mathcal I^{1/2}_{\boldsymbol S_{\text{full}}}\cdot\mathcal I^{-1}\cdot\mathcal I^{1/2}_{\boldsymbol S_{\text{full}}}\|_2$ & 3.025 & 1.769 &1.426&1.732&1.219&1.137&1.122\\
	\hline
	${[\mathcal I]_{c,c}}/{[\mathcal I_{\boldsymbol S_{\text{full}}}]_{c,c}}$ & \textbf{1-1.258e-5} & \textbf{1-8.184e-6}&\textbf{1-5.590e-6}&\textbf{1-7.962e-6}&\textbf{1-3.376e-6}&\textbf{1-2.274e-6}&\textbf{1-2.051e-6}\\
	\hline
\end{tabular}}
    \caption{Comparison of Fisher information for calibrating  \cite{obizhaeva2013optimal}.}
    \label{tab:my_label1}
\end{table}

The numerical results from Table \ref{tab:my_label1} are in line with the findings from the first example, where adding more points always increases the accuracy of estimation. However, for this model, the inclusion of earlier points $S_{t_1}$ does not improve the estimation as much as the inclusion of later points $S_{t_3}$. This is due to the fact that, we are not just calibrating impact function $f$, but also the decay kernel $G$. As we can see, in this example the asymptotic variance of estimation (for any function of $\theta$) using four points method is within 12.2\% of the optimal one. On the other hand, for calibrating the impact function $f$, all sampling-based methods in the table are close to optimal (i.e. close to 1) and considerably outperform $J$ (i.e., 75.4\%).

\begin{remark}
Although we have shown that the full price trajectory data $\boldsymbol S_{\text{full}}$ is most efficient and adding more price points $S_t$ increases the efficiency, there seems to be, based on previous examples, a ``diminishing return" effect to adding price points, as the increase in efficiency tends to decrease in each addition. However, the quantification of such an effect or the characterization of efficiency for $n$ given points requires further research. 
\end{remark}

\subsection{Square-root Law: A Special Case}

A well-known, widely-used rule-of-thumb to produce a pre-trade estimate of cost is the \textit{square-root law}. As stated in \cite{toth2011anomalous}, the square-root law provides a remarkably good fit on empirical data (although it is argued in \cite{zarinelli2015beyond} that the logarithmic function provides a better fit across more orders of magnitude than the square root function). It is suggested that the square-root law remains a robust statistical phenomenon across a spectrum of traded instruments/markets and is roughly independent of trading period, order type, trade duration/rate, and stock capitalization \cite{zarinelli2015beyond}. In particular, under the framework \eqref{main}, the square-root law can be stated in terms of the impact \cite{zarinelli2015beyond}:
\begin{equation}\label{sq1}
    \mu(T,v)\propto (vT)^{\frac{1}{2}}=X^{\frac{1}{2}},
\end{equation}
 or the average cost per-share \cite{gatheral2010no}:
\begin{equation}\label{sq2}
    \frac{c_{\text{vwap}}}{X}\triangleq\frac{\mathbb E[v\int_0^TS_tdt-XS_0]}{X}\propto X^{\frac{1}{2}}.
\end{equation}
The original statement uses $X/V_D$ but we have scaled $X$ by $V_D$ (i.e., $V_D=1$, see remarks following \eqref{tandv}). As noted in \cite{zarinelli2015beyond}, under \eqref{sq1}, the market impact no longer depends on the specific trading rate $v$ or trade duration $T$, but only on their product $X=vT$. This phenomenon can be seen as a special case of propagator model under the power-law impact $f(v)\propto v^\delta$ and power-law decay $G(s)=s^{-\gamma}$, with the constraint that $\delta+\gamma=1$ (see \cite{gatheral2010no}). Under such model, \eqref{sq1} is consistent 
\begin{equation}\label{sq1prop}
    \mu(T,v;\theta)\propto \frac{v^{\delta}T^{1-\gamma}}{1-\gamma}=\frac{1}{\delta} (vT)^\delta
\end{equation}
and \eqref{sq2} is consistent with 
\begin{equation}\label{sq2prop}
   \frac{c_{\text{vwap}}}{X}=\frac{v\int_0^T\mathbb E[S_t]dt-XS_0}{X}\propto\frac{1}{\delta(1+\delta)} (vT)^{\delta},
\end{equation}
when $\delta=\gamma=\frac{1}{2}$, as a special case of $\gamma+\delta=1$. From an estimation point of view, if we want to calibrate such a propagator model, where the impact only depends on the product $vt$, by estimating $\delta$ under the constraint $\gamma+\delta=1$, we can compare the estimation method using sampled points $(S_{t}, S_T)$ based on \eqref{sq1prop} or using the cost of the VWAP $J=v\int_0^TS_tdt-XS_0$ based on \eqref{sq2prop}. Leveraging techniques from sections before, we have the following Corollary. Again we observe the power of an early sample around 20\%. Yet, just as in Table \ref{tab:my_labelgamma}, sampling too early  is not sufficiently helpful.
\begin{coro}\label{corosq}
Under the propagator model $f(v)\propto v^\delta$ and $G(s)\propto s^{\delta-1}$, if $\delta \leq 0.8$ and $0.005\leq vT\leq 1$, then the Fisher information matrix (one dimension, estimating $\delta$ only) satisfies 
\begin{equation*}
    \mathcal I_{S_t,S_T}(\delta) \geq \mathcal I_{J}(\delta),
\end{equation*}
whenever $1.2\%\leq t/T \leq 22.2\%$.
\end{coro}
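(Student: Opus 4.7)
\textbf{Proof Plan for Corollary \ref{corosq}.} The plan is to reduce the matrix inequality to a scalar comparison in the dimensionless parameters $\tau = t/T$, $X = vT$, and $\delta$, and then verify it over the prescribed compact region. First I would specialize the setup: under the constraint $\gamma + \delta = 1$, the impact simplifies to $\mu(t,v;\delta) = (vt)^\delta/\delta$, so the one-dimensional score is
\[
\psi(t;\delta) \triangleq \partial_\delta \mu(t,v;\delta) = \frac{(vt)^\delta}{\delta}\Bigl(\ln(vt) - \tfrac{1}{\delta}\Bigr).
\]
Because $S_t - S_0$ and $S_T - S_t$ are independent Gaussian increments and $J$ is a scalar Gaussian with variance $\sigma^2 v^2 T^3/3$, Lemma~\ref{jacob} yields
\[
\mathcal I_{S_t,S_T}(\delta) = \frac{1}{\sigma^2}\!\left[\frac{\psi(t;\delta)^2}{t} + \frac{(\psi(T;\delta)-\psi(t;\delta))^2}{T-t}\right], \qquad \mathcal I_J(\delta) = \frac{3}{\sigma^2 T^3}\!\left(\int_0^T \psi(s;\delta)\,ds\right)^{\!2},
\]
and direct integration gives $\int_0^T \psi(s;\delta)\,ds = \tfrac{T X^\delta}{\delta(\delta+1)}\bigl[\ln X - \tfrac{2\delta+1}{\delta(\delta+1)}\bigr]$.

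Second, I would substitute $\tau = t/T$ and $B = \ln X - 1/\delta$, and factor out the common positive scalar $X^{2\delta}/(\sigma^2 T \delta^2)$ from $\mathcal I_{S_t,S_T}(\delta) - \mathcal I_J(\delta)$. What remains is a quadratic in $B$,
\[
F(\delta,\tau,B) \;=\; a(\delta,\tau)\,B^2 + b(\delta,\tau)\,B + c(\delta,\tau),
\]
with, e.g., $a(\delta,\tau) = \tau^{2\delta-1} + \tfrac{(1-\tau^\delta)^2}{1-\tau} - \tfrac{3}{(\delta+1)^2}$, and the corollary is equivalent to showing $F(\delta,\tau,B) \ge 0$ throughout the prescribed region. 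Note that the $B \to -\infty$ regime exactly recovers the inequality $a(\delta,\tau) \ge 0$, which is the specialization of Corollary~\ref{twovscwap} to the square-root decay kernel; the general inequality is therefore a finite-$B$ refinement of the analysis behind Table~\ref{tab:my_labelgamma}.

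Third, I would verify positivity in two stages. Analytically, I would first establish $a(\delta,\tau) \ge 0$ on the prescribed $(\delta,\tau)$-region, which parallels the computation behind the table entries in Example~\ref{321comp}. The remaining sign check reduces to a discriminant condition: either the explicit minimizer $B^\star(\delta,\tau) = -b/(2a)$ falls outside the feasible $B$-interval (determined by $X \in [0.005,1]$ and $\delta \le 0.8$), in which case checking $F$ at the two endpoints suffices, or one verifies $c - b^2/(4a) \ge 0$ pointwise. The main obstacle is that the thresholds $1.2\%$, $22.2\%$, and $0.005$ are clearly tight---suggesting the boundary of the feasible region grazes the zero set of $F$---so a clean purely analytical argument is unlikely to close. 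I therefore expect the proof to conclude with a rigorous numerical verification on a sufficiently fine grid in $(\delta,\tau,B)$, together with a Lipschitz/continuity argument based on uniform bounds on $\nabla F$ over the (bounded, away-from-singularity) region to extend grid-point inequalities to the full region.
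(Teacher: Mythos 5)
Your proposal matches the paper's proof in all essentials: both compute $\mathcal I_{S_t,S_T}(\delta)$ and $\mathcal I_J(\delta)$ explicitly via Lemma~\ref{jacob} (your score $\psi(t;\delta)=\frac{(vt)^\delta}{\delta}(\ln(vt)-\tfrac{1}{\delta})$ is exactly the paper's Jacobian entry, your independent-increment form of $\mathcal I_{S_t,S_T}$ agrees with the paper's $\mathcal J^T\Sigma^{-1}\mathcal J$, and your evaluation of $\int_0^T\psi$ reproduces the paper's $\mathcal I_J$), and both then settle the inequality by numerical verification over the prescribed region in $(\delta,\tau,vT)$. If anything, your reduction to a quadratic in $B=\ln(vT)-1/\delta$, the link of the leading coefficient to Corollary~\ref{twovscwap}, and the proposed grid-plus-Lipschitz certificate are more structured than the paper's final step, which simply invokes an ``ad-hoc numerical analysis'' without further justification.
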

\begin{proof}
See Appendix \ref{proofcorosq}.
\end{proof}

\section{Limitations}\label{limitation}
 In this section, we discuss some limitations and their practical implications on the aforementioned theorems. First,  theorems in Section \ref{s3} rely on Assumption \ref{modelspec}, which states, the stochastic law (or the underlying ``true" structure) governing the process of market impact can be correctly specified within the parameterized model. It is natural to discuss, when this assumption fails, how the results of previous theorems hold and what types of adjustments need to be made regarding the quantification of estimation ``efficiency".
\subsection{Model Misspecification}
Model misspecification occurs when the ``true" underlying process of market impact cannot be correctly specified by \eqref{main} (or a specific variant of \eqref{main}, e.g., Almgren Chriss model or the propagator model). In other words, given $(T,v)\sim G_{\text{order}}$, the joint distribution of the discretized price trajectory $\boldsymbol S\in\mathbb R^N$ follows distribution $F$; however, a $\theta^\star\in\Theta$ such that $F(\theta)$ equals $F$ does not exist, where $F(\theta^\star)$ is characterized by $S_t=S_0+\mu_{\theta^\star}(t,v) + \sigma W_t$ for all $t\in[0,T]$ and $(T,v)\sim G_{\text{order}}$. In this setting, under regularity conditions, it is shown in  \cite{white1982maximum,bishwal2007parameter} that the MLE (or QMLE, quasi-maximum likelihood estimator in this setting \cite{white1982maximum}) $\hat\theta_{\text{MLE}} = \mathop{\arg \max}_{\theta\in\Theta} \sum_{j} l(\boldsymbol{S}_j|\theta) $ estimates the parameter $\theta^\star_{\text{KL}}$, which minimizes the Kullback-Leibler (KL) divergence
\begin{align}\label{whitepaper}
    \theta^\star_{\text{KL}}=\mathop{\arg \min}_{\theta\in\Theta}D_{\text{KL}}(F\lVert F(\theta))
\end{align}
with consistency and asymptotic normality. As a widely used statistical distance measure, the KL divergence (also known as relative entropy) $D_{\text{KL}}(\mathbb P\lVert \mathbb Q)$ between two probability measure $\mathbb P$ and $\mathbb Q$ is defined as $$D_{\text{KL}}(\mathbb P\lVert \mathbb Q) = \int\log\frac{d\mathbb P}{d\mathbb Q}d\mathbb P,$$ where $\frac{d\mathbb P}{d\mathbb Q}$ is the Radon-Nikodym derivative of $\mathbb P$ w.r.t $\mathbb Q$. In other words, under model misspecification, the MLE estimates the parameters whose corresponding price trajectory distribution is the closest to the true one, in terms of the statistical distance measured by KL-divergence. Since $F$ is unknown, to gain some intuition on the property of $\theta^\star_{\text{KL}}$, we consider the following idealized example, where we observe the entire continuous path $\{S_t\}_{0\leq t \leq T}$.
\begin{example}
 Given $(v,T)$, suppose the true price process $\{S_t\}_{0\leq t \leq T}$ is governed by a diffusion process:
 \begin{equation}\label{realsde}
    dS_t= \mu^\star(S_t;t,v)dt + \sigma^\star(S_t;t,v)dW_t, \text{  } 0\leq t \leq T,
\end{equation}
for some $\mu^\star(\cdot),\sigma^\star(\cdot)$, which cannot be described by \eqref{main} for any $\theta\in\Theta$:
\begin{equation*}
    dS_t= \frac{\partial \mu_\theta(t,v)}{\partial t}dt + \sigma dW_t, \text{  } 0\leq t \leq T, \theta\in\Theta.
\end{equation*}
Then, using Girsanov theorem, \cite{mckeague1984estimation} showed the log-likelihood function of the process $\{S_t\}_{0\leq t \leq T}$, as an element of $C[0,T]$ ($C[0,T]$ represents the space of continuous function on $[0,T]$), can be written as 
\begin{align*}
    &l(\theta|\{S_t\}_{0\leq t \leq T})= \nonumber\\
    & -\frac{1}{2\sigma^2}\int_0^T \bigg(\frac{\partial \mu_\theta(t,v)}{\partial t}-\mu^\star(S_t;t,v)\bigg)^2dt +\frac{1}{2\sigma^2} \int_0^T\Big(\mu^\star(S_t;t,v)\Big)^2dt +\frac{1}{\sigma^2}\int_0^T{\frac{\partial \mu_\theta(t,v)}{\partial t}\sigma^\star(S_t;t,v)}dW_t.
\end{align*}
Given sufficient metaorder data, the sum of the log-likelihood can be maximized to obtain an optimal value $\hat\theta_{\text{MLE}}$ (if exists), which can be seen as an estimator for 
\begin{align*}
    \theta^\star =& \mathop{\arg \max}_{\theta\in\Theta}\mathbb E \Big[l(\{S_t\}_{0\leq t \leq T}|\theta)\Big] \nonumber\\
    =&\mathop{\arg \min}_{\theta\in\Theta}\int_0^T\mathbb E\bigg[\bigg(\frac{\partial \mu_\theta(t,v)}{\partial t}-\mu^\star(S_t;t,v)\bigg)^2\bigg]dt.
\end{align*}
Thus, in this example, under model misspecification, the $\theta^\star$ corresponds to a model whose drift term (i.e., $\frac{\partial \mu_\theta(t,v)}{\partial t}$, or the gradient of impact function $\mu_\theta(t,v)$ w.r.t $t$,), on average, best matches true drift term $\mu^\star(S_t; t,v)$ of $S_t$ in mean squared error over $[0,T]$. This example suggests the MLE from \eqref{main} seeks to recover an impact function that best describes the shape of the impact function by matching gradients (or how the gradient diminishes, in presence of concavity of impact function \cite{zarinelli2015beyond}), rather than the direct value the impact function. 
\end{example}

More importantly, how does model misspecification affect our discussion, largely based on the Fisher information, regarding the (asymptotic) ``efficiency" of MLE? When the model is correctly specified, as mentioned in \eqref{mleinv}, the asymptotic covariance matrix of $\hat\theta_{\text{MLE}}$ scaling with $\frac{1}{\sqrt{n}}$ (i.e., the number of metaorders) is the Fisher information matrix evaluated at the true model parameter $\theta^\star$,
\begin{equation*}
    \mathcal I(\theta^\star) =\mathcal A(\theta^\star)\triangleq\mathbb E\bigg[\Big(\frac{\partial l(\theta|\boldsymbol{S})}{\partial \theta}\Big)\Big(\frac{\partial l(\theta|\boldsymbol{S})}{\partial \theta}\Big)^T\bigg]
\end{equation*}
which can be equivalently defined in Hessian form, known as the information matrix equivalence theorem \cite{white1982maximum}, as
\begin{equation}\label{equivla}
    \mathcal A(\theta^\star) =\mathcal B(\theta^\star),
\end{equation}
where $[\mathcal B]_{ij}(\theta^\star)\triangleq-\mathbb E\Big[\frac{\partial^2 l(\theta|\boldsymbol{S})}{\partial \theta_i\partial \theta_j}\Big]$. Note both $\mathbb E$ above refer to expectation taken w.r.t. the true probability measure ($\mathbb E_{\theta^\star}$ when the model is correctly specified and is generally unknown under model misspecification). However, as shown in \cite{white1982maximum}, under model misspecification, \eqref{equivla} no longer holds and the (scaled) asymptotic variance as $\hat\theta_{\text{MLE}}$ approaches $\theta^\star_{\text{KL}}$ becomes
\begin{equation*}
 \mathcal C(\theta^\star_{\text{KL}}) = \mathcal B^{-1}(\theta^\star_{\text{KL}})\mathcal A(\theta^\star_{\text{KL}})\mathcal B^{-1}(\theta^\star_{\text{KL}}).
\end{equation*}
Moreover, \cite{white1982maximum} shows the discrepancies between empirical estimates of $\mathcal A(\theta^\star_{\text{KL}})$ 
and $\mathcal B(\theta^\star_{\text{KL}})$ can be used to
test model misspecification. However, this implies the conclusion drawn from our previous theorems no longer holds, as they rely on the comparison of the Fisher information of correctly specified models. In this case, it becomes less clear how to choose the appropriate models or consider estimation efficiency of MLE based on experiment design, as theoretical properties of $\mathcal A(\theta^\star_{\text{KL}})$ or $\mathcal B(\theta^\star_{\text{KL}})$ depends on the unknown, true distribution of the price trajectory. 

Thus, the presence of model misspecification will pose a limitation to the results of previous theorems. To better discuss the extent of such limitations and possible remedies, we revisit relevant concepts on model selection
from statistical learning/machine learning.

\subsection{Simulation}

In this section, we perform a simulation study for the model misspecification case. The estimator model follows the power-law kernel $G(s)=s^{-\gamma}$ and power-law impact $f(v)=v^{\delta}$ (so $\mu_\theta(t,v)= \frac{v^\delta t^{1-\gamma}}{1-\gamma}$),
where $\delta$ is the only parameter that needs to be estimated.
However, the sample-generating model does not have the same parametric format. The generating model has $G(s)=s^{-\gamma}$ and 
$f(v)= 1.5 \ln(1 + 0.5 v^2 + 0.7 v)$
(so $\mu^\star(t,v) = 1.5 \ln(1 + 0.5 v^2 + 0.7 v) \frac{t^{1-\gamma}}{1-\gamma}$).
We set $\sigma=0.2$, $\gamma=0.4$.
Each estimation has 300 samples, and each sample trajectory is assigned a random price trajectory length
$T\sim \mathrm{Uniform}(0.1, 0.15)$ and a random $v\sim \mathrm{Uniform}(0.3, 0.4)$. The simulation is repeated 40,000 times for each scenario.
The diffusion process is discretized into $1024$ time bins.
The first model is based on the average cost of VWAP as in \eqref{Jexam}.
Other models rely on the price trajectory as discussed in section \ref{samplestrat} with a subset or all points at 
$t_1=\frac{T}{8}, t_2=\frac{T}{2}, t_3=\frac{5T}{8}$ and $t_4=T$ relative to the corresponding path length $T$.
The error of estimation is the mean squared error of the cost across all sample paths.
Results are shown in Table \ref{tab:mis_specified}.
The conclusion is similar to the well-specified model in section \ref{subsec:path_numerical} and the simulation study therein that the cost-based method is worse than the price trajectory-based methods if earlier price trajectory points are selected (for example $t_1$ or $t_2$).
In addition, if more trajectory points are chosen or the earlier the points are selected, the performance will be better. The performance is evaluated using the squared error between the true cost and the estimated cost,
and the parenthesis shows the standard error of the mean error.
$X=J$ is the model based on the average cost of VWAP, 
others are based on the points along the price trajectory.
$t_1=\frac{T}{8}, t_2=\frac{T}{2}, t_3=\frac{5T}{8}$ and $t_4=T$.
Bold numbers indicate the method has a statistically significant better performance than using $J$ based on two sample t-tests.

\begin{table}[ht]
\centering
\scalebox{0.85}{
\begin{tabular}{|l|c|}
\hline
Estimator via $X$ &  mean error $\times 10^{-7}$ (SD $\times 10^{-9}$)  \\ \hline
$X = J$ &  $1.079$ ($4.831$)  \\ \hline

$X = S_{t_1}, S_T$ & $\textbf{1.049}$ ($4.612$) \\ \hline
$X = S_{t_2}, S_T$ & $\textbf{1.077}$ ($4.822$) \\ \hline
$X = S_{t_3}, S_T$ & $1.144$ ($5.294$) \\ \hline 

$X = S_{t_1}, S_{t_2}, S_T$ & $\textbf{1.038}$ ($4.538$) \\ \hline
$X = S_{t_1}, S_{t_3}, S_T$ & $\textbf{1.040}$ ($4.546$) \\ \hline
$X = S_{t_2}, S_{t_3}, S_T$ & $\textbf{1.073}$ ($4.788$) \\ \hline

$X = S_{t_1}, S_{t_2}, S_{t_3}, S_T$ & $\textbf{1.034}$ ($4.509$) \\ \hline
\end{tabular}}
\caption{Numerical comparison of models with misspecification.
}
\label{tab:mis_specified}
\end{table}

\subsection{Model Selection}
As in \eqref{whitepaper}, given $(T,v)\sim G_{\text{order}}$, we use $F$ to denote the true distribution of price trajectory $\boldsymbol S$ and $F(\theta)$ the one characterized under $S_t=S_0+\mu_{\theta}(t,v) + \sigma W_t$ for $S_t\in\boldsymbol S$ and $(T,v)\sim G_{\text{order}}$. Then, as discussed in the previous section, if we measure the risk as
\begin{equation*}
    R(\theta) = D_{\text{KL}}(F\lVert F(\theta)),
\end{equation*}
which quantifies the discrepancy between distributions fitted by the model and truth. Then, the model selection problem decomposes the risk as
\begin{align}\label{decompose}
    R(\theta)=&\underbrace{R(\theta)-\inf_{\theta\in\Theta}R(\theta)}_\text{estimation error} + \underbrace{\inf_{\theta\in\Theta}R(\theta)}_\text{approximation error}\nonumber\\
    =&\underbrace{R(\theta)-R(\theta^\star_{\text{KL}})}_\text{estimation error}+ \underbrace{R(\theta^\star_{\text{KL}})}_\text{approximation error}
\end{align}
In a correctly specified model, $\theta^\star_{\text{KL}} = \theta^\star$ and the estimation error is 0. The approximation error is directly related to the efficiency for estimating $\theta^\star$ and the results of our previous theorem, based on the Fisher information matrix, can be applied. When a model misspecification occurs, different factors/concerns can come into play and the landscape is less clear. For example,
\begin{itemize}
    \item If one is concerned with an accurate estimation of the price trajectory for various values of $(T,v)$, then the estimation using the price trajectory would increase the sample size, compared to using summary statistics, e.g., VWAP, which could lead to a positive effect on reducing estimation error. On the other hand, depending on how the distribution is misspecified, e.g., heavy tail versus light tail, the method using summary statistics could turn out to be more robust than using the entire trajectory. In this case, the previous comparison between using different summary statistics (VWAP or sampled price points) may or may not hold.
    \item If one is only concerned with an accurate estimation of summary statistics, e.g., VWAP. Then, this could lead to a reduced approximation error because KL divergence monotonically decreases under the non-invertible transformation (it is well known that KL divergence is invariant under invertible transformations, i.e., given random variable $X,Y$ and an invertible transformation $g(\cdot)$, we have $D_{\text{KL}}(X|Y)=D_{\text{KL}}(g(X)|g(Y))$. However, it can be shown that $D_{\text{KL}}(X|Y)\geq D_{\text{KL}}(g(X)|g(Y))$ when $g$ is non-invertible (see \cite{mena2018learning}). The transformation from the price trajectory $\boldsymbol S$ to summary statistics are typically non-invertible.). However, it could lead to an increase in approximation error due to reduced sample size or inefficiencies of estimation.
\end{itemize}

During practical implementation, we should take these issues into consideration. Typical methods include cross-validation upon different methods (or some hybrid of them) under the specified loss function or Bayesian inference based on carefully calibrated prior from previous data (or even online data, where one can update the posterior as the order is executed and adjust the execution). This is left for discussion and future research.

\bibliographystyle{apalike}
\bibliography{ref}

\newpage

\appendix
\section{Appendix}

\subsection{Asymptotic Inference}
\subsubsection{Proof of Basic Results and Review}
\begin{definition}[Differentiability in Quadratic Mean] A family of model $\{\mathbb P_\theta\}_{\Theta}$ with dominating measure $m$ is differentiable in quadratic mean (q.m.d.) if there exists a vector of measurable function $\dot{l}_\theta=(\dot{l}_{\theta,1},...,\dot{l}_{\theta,K})$ such that 
\begin{equation*}
    \int\bigg[\sqrt{p_{\theta+h}}-\sqrt{p_{\theta}}-\frac{1}{2}h^T\dot{l}_\theta\sqrt{p_{\theta}}\bigg]dm = o(\|h^2\|) \text{ as } h\rightarrow0.
\end{equation*}
    
\end{definition}

\begin{definition}[Regular Estimator]
    For a parametric family of probability measure $\{\mathbb P_{\theta}\}_\Theta$, a sequence of estimators $\{T_n\}_{n\geq0}$ is called $\textit{regular}$ at $\theta$ for estimating $\phi(\theta)$ if, for every $h$,
    \begin{equation*}
        \sqrt{n}\Big(T_n-\phi(\theta+\frac{h}{\sqrt{n}})\Big) \xrightarrow[\mathbb P_{\theta+\frac{h}{\sqrt{n}}}]{d} L_\theta,
    \end{equation*}
    i.e., the left hand side converges in distribution under $\mathbb P_{\theta+\frac{h}{\sqrt{n}}}$  to some random variable $L_\theta$ which do not depend on $h$.
\end{definition}

\begin{proof}[Proof of Proposition \ref{sdfadsfasdfas}]
    The proof mainly follows from basic results in \cite{van2000asymptotic} and \cite{bickel1993efficient}. Given assumption 3.3, we derive that $\theta\rightarrow\sqrt{p_\theta(\boldsymbol x)}$ is continuously for all $\boldsymbol x$. Assumption 4 ensures that $I_{\boldsymbol X}(\theta)$ is well-defined and continuous in $\theta$. Assumption 1 gives openness of $\Theta$. As as result, according to Lemma 7.6 in \cite{van2000asymptotic} or Proposition 1 in \cite{bickel1993efficient}, the model $\{\mathbb P_\theta\}_{\Theta}$ is q.m.d. with $\dot{l}_\theta$ given by $\frac{\dot{p}_\theta}{p_\theta}$. 

    Then, we can use Assumptions 2 and 3.4 to derive 
    \begin{equation*}
        |\log p_{\theta_1}(\boldsymbol x)-\log p_{\theta_2}(\boldsymbol x)|\leq L\|\theta_1-\theta_2\|,
    \end{equation*}
    for some $\theta_1,\theta_2$ in the neighborhood of $\theta^\star$. Assumption 3.1 gives consistency of $\hat\theta_n$ by Theorem 5.14 of \cite{van2000asymptotic}. These two results, in addition to assumption 3.5 and q.m.d. give \eqref{mleinv}, according to Theorem 7.12 in \cite{van2000asymptotic}. 

    Finally, \eqref{lammmmax} follows by Theorem 8.11 in \cite{van2000asymptotic}, using q.m.d. and assumption 4.
\end{proof}

The locally asymptotic minimax theorem is only one of many ways to argue the asymptotic optimality of MLE. For example, the  H\'ajek-LeCam convolution theorem and its variant show that for any \textit{regular} estimator, the MLE attains the ``best" limiting distribution asymptotically and achieves the lowest possible variance for any asymptotically regular sequence of the estimator. Moreover, any improvement over this limit distribution can only be made on a Lebesgue null set of parameters.  For a comprehensive review, see section 8 of \cite{van2000asymptotic}.

\subsubsection{Technical Lemma}\label{aiiii}
The appendix mainly consists of technical proofs and additional lemmas. The proof of the following support lemma can be found in \cite{zamir1998proof,zegers2015fisher}.
\begin{lemma}[Chain Rule for Fisher Information]\label{mutual} Let $X$ and $Y$ be two statistics of a given experiments $(\boldsymbol{S},\mathcal F, \{\mathbb P_\theta\}_{\theta\in\Theta})$, then 
\begin{align*}
    \mathcal I_{X,Y}(\theta)=\mathcal I_{X}(\theta)+ \mathcal I_{Y|X}(\theta),
\end{align*}
where $\mathcal I_{Y|X}(\theta)=\mathbb E_X[\mathcal I_{Y|X=x}(\theta)]$ is simply defined as
\begin{equation*}
    \mathcal I_{Y|X=x}(\theta)=\mathbb E_{\theta}\bigg[\Big(\frac{\partial l(Y|X=x,\theta)}{\partial \theta}\Big)\Big(\frac{\partial l(Y|X=x,\theta)}{\partial \theta}\Big)^T\bigg].
\end{equation*}
More generally, if we have $n$ statistics $X_1,X_2,...,X_n$, then 
\begin{equation*}
    \mathcal I_{X_1,X_2,...,X_n}(\theta)=\mathcal I_{X_1}(\theta)+\sum_{i=1}^{n-1}\mathcal I_{X_{i+1}|X_1,...,X_i}(\theta)
\end{equation*}
where $\mathcal I_{X_{i+1}|X_1,...,X_i}(\theta)=\mathbb E_{X_1,...,X_i}[\mathcal I_{X_{i+1}|X_1,...,X_i=x_1,...,x_i}(\theta)]$.
\end{lemma}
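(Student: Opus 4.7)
The plan is to prove the chain rule by exploiting the factorization $p_\theta(x,y) = p_\theta(x)\, p_\theta(y \mid x)$ and the fact that the score function has mean zero conditionally. Concretely, taking logarithms and differentiating with respect to $\theta$ yields the additive score decomposition
\begin{equation*}
    \nabla_\theta \log p_\theta(X,Y) \;=\; \underbrace{\nabla_\theta \log p_\theta(X)}_{=:U} \;+\; \underbrace{\nabla_\theta \log p_\theta(Y \mid X)}_{=:V}.
\end{equation*}
First I would expand $\mathcal I_{X,Y}(\theta) = \mathbb E_\theta[(U+V)(U+V)^T]$ into four terms. By definition, $\mathbb E_\theta[UU^T] = \mathcal I_X(\theta)$, and by the tower property together with the definition $\mathcal I_{Y|X}(\theta) = \mathbb E_X[\mathcal I_{Y|X=x}(\theta)]$ given in the statement, $\mathbb E_\theta[VV^T] = \mathcal I_{Y|X}(\theta)$.

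The crux is then to show the cross terms $\mathbb E_\theta[UV^T]$ and $\mathbb E_\theta[VU^T]$ vanish. For this, I would condition on $X$ and pull $U$, which is $\sigma(X)$-measurable, outside the inner expectation:
\begin{equation*}
    \mathbb E_\theta[UV^T] \;=\; \mathbb E_\theta\!\left[\, U\, \mathbb E_\theta[V^T \mid X]\, \right].
\end{equation*}
The conditional expectation of the conditional score is zero, because
\begin{equation*}
    \mathbb E_\theta[V \mid X = x] \;=\; \int \nabla_\theta \log p_\theta(y \mid x)\, p_\theta(y \mid x)\, dy \;=\; \nabla_\theta \int p_\theta(y \mid x)\, dy \;=\; \nabla_\theta 1 \;=\; 0,
\end{equation*}
where differentiation under the integral sign is justified by the regularity conditions underlying the definition of Fisher information (Assumption \ref{ras3} and Assumption \ref{nonsigfishh} in our setting, which ensure q.m.d.\ so that such interchange is valid). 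Combining the four pieces gives $\mathcal I_{X,Y}(\theta) = \mathcal I_X(\theta) + \mathcal I_{Y|X}(\theta)$.

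For the $n$-statistic generalization, I would proceed by induction on $n$. The base case is the two-variable identity just proved. For the inductive step, apply the chain rule to the pair $((X_1,\ldots,X_i),\, X_{i+1})$, which yields
\begin{equation*}
    \mathcal I_{X_1,\ldots,X_{i+1}}(\theta) \;=\; \mathcal I_{X_1,\ldots,X_i}(\theta) + \mathcal I_{X_{i+1} \mid X_1,\ldots,X_i}(\theta),
\end{equation*}
and then telescope. No new ideas are needed beyond the two-variable case. The only step I expect to require any care is the justification of exchanging the gradient and integral when showing $\mathbb E_\theta[V \mid X] = 0$; everything else is linear algebra on the outer products of scores. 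Since our Gaussian experiment in \eqref{statexp} is regular (q.m.d.) by the assumptions already in force, this justification is routine in our context.
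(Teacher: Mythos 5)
Your proof is correct. The paper does not prove this lemma itself---it defers to \cite{zamir1998proof,zegers2015fisher}---and your score-decomposition argument (joint score equals marginal score plus conditional score, cross terms vanishing because the conditional score has zero conditional mean under the regularity in force, then telescoping by induction) is exactly the standard argument used in those references, so nothing is missing.
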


\subsection{Proofs}
\subsubsection{The Proof of Lemma \ref{data processing}}
\begin{proof}[Proof of Lemma \ref{data processing}]
The rigorous proof can be found in Theorem 7.2 of \cite{ibragimov2013statistical}. It can be checked that the assumptions in Theorem 7.2 are all satisfied. For a more concise proof, similar to the proof in \cite{zamir1998proof}, we can use Lemma \ref{mutual} to obtain:
\begin{equation}\label{mutualS}
    \mathcal I_{\boldsymbol{X}}(\theta)+\mathbb E_{\boldsymbol{X}}[\mathcal I_{\phi(\boldsymbol{X})|\boldsymbol{X}=\boldsymbol x}(\theta)]=\mathcal I_{\boldsymbol{X},\phi(\boldsymbol{X})}(\theta)=\mathcal I_{\phi(\boldsymbol{X})}(\theta)+\mathbb E_{\phi(\boldsymbol{X})}[\mathcal I_{\boldsymbol{X}|\phi(\boldsymbol{X})=\phi(\boldsymbol x)}(\theta)].
\end{equation}
Since $\phi(\boldsymbol{X})$ is a statistic of $\boldsymbol{X}$, it is clear that the conditional distribution of $\phi(\boldsymbol{X})$ given $\boldsymbol{X}$ is independent of $\theta$, implying that $\mathcal I_{\phi(\boldsymbol{X})|\boldsymbol{X}}(\theta)=0$. Consequently, \eqref{mutualS} now becomes 
\begin{equation*}
    \mathcal I_{\boldsymbol{X}}(\theta)=\mathcal I_{\phi(\boldsymbol{X})}(\theta)+\mathbb E_{\phi(\boldsymbol{X})}[\mathcal I_{\boldsymbol{X}|\phi(\boldsymbol{X})=\phi(\boldsymbol x)}(\theta)].
\end{equation*}
Since $\mathcal I_{\boldsymbol{X}|\phi(\boldsymbol{X})}(\theta)\succcurlyeq {0}$ (any Fisher information is positive semi-definite as it is the covariance matrix of the \textit{score} function), we have shown that
\begin{equation*}
   \mathcal I_{\boldsymbol{X}}(\theta)\succcurlyeq\mathcal I_{\phi(\boldsymbol{X})}(\theta).
\end{equation*}
Now, if $\phi(\boldsymbol{X})$ is a sufficient statistic, then the conditional distribution of $\boldsymbol{X}$ given $\phi(\boldsymbol{X})$ is independent of $\theta$ which leads to $\mathcal I_{\boldsymbol{X}|\phi(\boldsymbol{X})}(\theta)=0$ and the equality of $\mathcal I_{\boldsymbol{X}}(\theta)=\mathcal I_{\phi(\boldsymbol{X})}(\theta)$. On the other hand, if $\mathcal I_{\boldsymbol{X}}(\theta)=\mathcal I_{\phi(\boldsymbol{X})}(\theta)$, we must have $E_{\phi(\boldsymbol{X})}[\mathcal I_{\boldsymbol{X}|\phi(\boldsymbol{X})=\phi(\boldsymbol s)}(\theta)]=0$, but this does not necessarily imply $\phi(\boldsymbol{X})$ is a sufficient statistic. An interesting counter-example where an insufficient statistic preserves Fisher information can be found in \cite{kagan2005sufficiency}. To complete the proof, an additional condition on the continuous differentiability of density is enough to  establish the sufficiency of $\phi(\boldsymbol{X})$, the detailed proof again is shown in \cite{kagan2005sufficiency}.
\end{proof}
\subsubsection{The Proof of Lemma \ref{data refine}}
\begin{proof}[Proof of Lemma \ref{data refine}] The first claim follows directly from Lemma \ref{mutual} since $\mathcal I_{\boldsymbol Y|\boldsymbol X=\boldsymbol x}(\theta) \succcurlyeq {0}$ and $\mathbb E_{\boldsymbol X}[\mathcal I_{\boldsymbol Y|\boldsymbol X=\boldsymbol x}(\theta)] \succcurlyeq {0}$. For the second claim, notice by Lemma \ref{mutual},
\begin{equation*}
    \mathcal I_{\boldsymbol Y}(\theta)+ \mathbb E_{\boldsymbol Y}[\mathcal I_{\boldsymbol X|\boldsymbol Y=\boldsymbol y}(\theta)]=\mathcal I_{\boldsymbol X,\boldsymbol Y}(\theta)=\mathcal I_{\boldsymbol X}(\theta)+ \mathbb E_{\boldsymbol X}[\mathcal I_{\boldsymbol Y|\boldsymbol X=\boldsymbol x}(\theta)].
\end{equation*}
Since $\boldsymbol Y=\phi(\boldsymbol X)$ and $\phi(\cdot)$ is a bijective mapping, it is easy to check that $\mathbb E_{\boldsymbol X}[\mathcal I_{\boldsymbol Y|\boldsymbol X=\boldsymbol x}(\theta)]=E_{\boldsymbol Y}[\mathcal I_{\boldsymbol X|\boldsymbol Y=\boldsymbol y}(\theta)]=0$ which implies $\mathcal I_{\boldsymbol X}(\theta)=\mathcal I_{\boldsymbol Y}(\theta)$.
\end{proof}

\subsection{The Joint Distribution of $(I,J)$ in Almgren-Chriss Model} 

In this derivation, we fix $(v,T)$ as constant and derive the distribution of $(I,J)$ conditional on $(v,T)$. Notice that, after the affine transformation $P_t=\frac{S_t-S_0}{S_0}$, we have 
\begin{align*}
     P_t=&g(v)t+h(v)+\sigma W_t, \text { when $t\leq T$} \nonumber\\
     P_t=&g(v)T+\sigma W_t, \text{ when $t>T$. }
\end{align*}
Following the notations in \cite{almgren2005direct}, we let 
$\bar S=\frac{\int_0^t S_t\dot{x}_tdt}{X}=\frac{\int_0^T S_tdt}{T}$. First, following \eqref{iandj}, we have
\begin{align*}
    I=P_{T_\text{post}}=Tg(v)+\sigma W_{T_{\text{post}}},  \implies &I\sim \mathcal N(Tg(v),\sigma^2T_\text{post})\nonumber\\
    J=\frac{\bar S-S_0}{S_0}=\frac{1}{T}\int_0^T P_tdt=\frac{T}{2}g(v)+h(v)+\sigma \frac{1}{T}\int_0^T W_tdt, \implies& J\sim \mathcal N \big(\frac{T}{2}g(v)+h(v),\sigma^2\frac{T}{3}\big),
\end{align*}
where we know $\int_0^T W_tdt=\int_0^T (T-t)dW_t\sim\mathcal N(0,\int_0^T (T-t)^2dt)=\mathcal N(0,\frac{T^3}{3})$ by It\^{o}'s Lemma. Since $I$ and $J$ are clearly jointly normal and we have known their mean and variance, we only need to calculate their correlation to determine the joint distribution. First, we can calculate $\text{Cov}(\int_0^T W_tdt, W_{T_\text{post}})$:
\begin{align*}
    \text{Cov}(\int_0^T W_tdt, W_{T_\text{post}})=&\text{Cov}(\int_0^T W_tdt, W_T+ W_{T_\text{post}}-W_T)\nonumber\\
    =&\text{Cov}(\int_0^T W_tdt, W_T)\nonumber\\
    =&\mathbb EW_T\int_0^T W_tdt=\int_0^T\mathbb E[W_TW_t]dt=\int_0^T tdt=\frac{T^2}{2}.
\end{align*}
On the other hand, again using It\^{o}'s Lemma, we have  $\text{Var}(\int_0^T W_tdt)=\frac{T^3}{3}$ and $\text{Var}(W_{T_\text{post}})=T_\text{post}$. Thus, it can be checked that $\text{Corr}(\int_0^T W_tdt, W_{T_\text{post}})=\sqrt{\frac{3T}{4T_\text{post}}}$. This allows characterizing the joint distribution of $I$ and $J$ completely, equivalent as in equation (2) of \cite{almgren2005direct}, by noting that there exist two independent standard normal random variables $\xi_1,\xi_2$ where
\begin{align}\label{bigau}
        W_{T_\text{post}}=&\sqrt{T_\text{post}}\xi_1  \nonumber\\
        \int_0^T W_tdt=&\frac{T^2}{2\sqrt{T_\text{post}}}\xi_1+  T\sqrt{\frac{T}{12}(4-3\frac{T}{T_\text{post}})} \cdot \xi_2
        \end{align}
which can then be shown to be equivalent as \eqref{iandj}.

\subsubsection{Proof of Lemma \ref{powerregular}}
Based on assumption \ref{modelspec}, \ref{orderindepG} on the lower and upper boundedness of $v$ and $T$, and the specific form of \eqref{power}, one can routinely check assumption 3.1,3.3,3.4 are satisfied. For assumption 3.4, we note that, by \eqref{iandj} the  $\Sigma_{I,J}(T)$ in Gaussian variable \eqref{gassexp} is given by
\begin{equation*}
     \Sigma_{I,J}(T)\triangleq\sigma^2\begin{bmatrix}
 T_\text{post}& -\frac{T_\text{post}-T}{2} \\
-\frac{T_\text{post}-T}{2}  & \frac{T_\text{post}}{4}-\frac{T}{6}
\end{bmatrix}.
\end{equation*}
Define the ratio $\tau = \frac{T_{\text{post}}}{T}-1$, we note $\tau$ is random but $\tau>0$. Then, one can write
\begin{equation*}
     \Sigma_{I,J}(T)\triangleq\sigma^2T\begin{bmatrix}
 1+\tau& -\frac{\tau}{2} \\
-\frac{\tau}{2}  & \frac{1+\tau}{4}-\frac{1}{6}
\end{bmatrix}\triangleq \sigma^2T\Sigma_\tau.
\end{equation*}
The smallest eigenvalue $\lambda_{1}(\tau)$ of $\Sigma_\tau$ can be explicitly computed by $a-\sqrt{a^2-b}$ where $a=\frac{5(1+\tau)}{8}-\frac{1}{12}$ is the mean of diagonal elements and $b=\frac{(1+\tau)^2}{4}-\frac{(1+\tau)}{6}-\frac{\tau^2}{4}$ is the determinant. For $\tau>0$, $\lambda_1(\tau)$ is lower bounded by $\lambda_1(0)$ which is a positive constant. Thus, by assumption \ref{orderindepG}, we have 
\begin{equation*}
    \Sigma_{I,J}(T)=\sigma^2T\Sigma_\tau\succcurlyeq \sigma^2T_L\lambda_1(0)\textbf{I}.
\end{equation*}
For assumption 3.5, the uniqueness of $\theta^\star$ can be violated only if there exists $\theta'=(\gamma',\eta',\alpha',\beta')$ such that
\begin{align}\label{powercheckuni}
    \gamma' v^{\alpha'} T=&\gamma^\star v^{\alpha^\star} T\nonumber\\
    \eta' v^{\beta'} = &\eta^\star v^{\beta^\star},
\end{align}
for all $(v,T)$, $G_{\text{order}}$ almost surely. This is impossible when the support of the marginal distribution of $v$ from $G_{\text{order}}$ has infinite cardinality (or finite and sufficiently large). For assumption, recall from \eqref{FI-IJ} that,
\begin{align*}
    \mathcal I_{I,J}(\theta)=&\mathbb E_{(v,T)\sim G_{\text{order}}}\bigg[\mathcal J^T_{I,J}(\theta, T,v)\cdot\sigma^{-2}\begin{bmatrix}
 \frac{1}{T}\frac{T_\text{post}/4-T/6}{T_\text{post}/3-T/4}& \frac{1}{T}\frac{T_\text{post}/2-T/2}{T_\text{post}/3-T/4} \\
\frac{1}{T}\frac{T_\text{post}/2-T/2}{T_\text{post}/3-T/4}  &  \frac{1}{T}\frac{T_\text{post}}{T_\text{post}/3-T/4}
\end{bmatrix}\cdot\mathcal J_{I,J}(\theta, T,v)\bigg],\\
\mathcal J_{I,J}(\theta, T,v)=&\begin{bmatrix}
 Tv^\alpha& 0 &T\gamma v^{\alpha}\ln(v)&0\\ 
0  &  v^\beta&0&\eta v^{\beta}\ln(v)
\end{bmatrix}.
\end{align*}

In order to show $\mathcal I_{I,J}(\theta)\succ 0$, it suffices to show that, for any $\boldsymbol x\in\mathbb R^K$ with $\|\boldsymbol x\|=1$, we have
\begin{equation*}
    \boldsymbol x^T \mathcal I_{I,J}(\theta) \boldsymbol x >0.
\end{equation*}
The non-singularity of $\Sigma_{I,J}$ and form of $\mathcal I_{I,J}(\theta)$ ensures $\boldsymbol x^T \mathcal I_{I,J}(\theta) \boldsymbol x \geq 0$ for all such $\boldsymbol x$. If there exists one such $\boldsymbol x$ that $\boldsymbol x^T \mathcal I_{I,J}(\theta) \boldsymbol x =0$, we must have 
\begin{equation*}
    \boldsymbol x^T \mathcal J_{I,J}(\theta, T,v)\Sigma_{I,J}^{-1}(T)\mathcal J_{I,J}(\theta, T,v)\boldsymbol x =0
\end{equation*}
for all $(v,T)\sim G_{\text{order}}$ almost surely. To see why, simply note that, for a random variable $X\geq 0$, $\mathbb E[X]=0$ if and only if $\mathbb P(X=0)=1$. Due to the non-singularity of $\Sigma_{I,J}(T)$, this requires $\mathcal J_{I,J}(\theta, T,v)\boldsymbol x=0$ for all $(v,T)\sim G_{\text{order}}$ almost surely:
\begin{align*}
    x_1 Tv^\alpha+x_3T\gamma v^\alpha \ln(v) =&0 \nonumber\\
    x_2 v^\beta + x_4\eta v^\beta \ln(v) =0 
\end{align*}
for all $(v,T)\sim G_{\text{order}}$ almost surely. This is not possible for any $\boldsymbol x=1$, when the support of marginal distribution of $v$ from $G_{\text{order}}$ has infinite cardinality (e.g., $x_1+x_3\ln(v)=0$ for all $v$ iff $x_1=x_3=0$).

\subsubsection{Derivation of $\mathcal I_{P_T,P_{T_\text{post}}}(\theta)$ in Almgren-Chriss Model}
In this derivation, we fix $(v,T)$ as constant. By Lemma \ref{data refine}, we have $\mathcal I_{P_T,P_{T_\text{post}}}(\theta)=\mathcal I_{P_{T_\text{post}},P_{T_\text{post}}-P_T}(\theta)$. To calculate $\mathcal I_{P_{T_\text{post}},P_{T_\text{post}}-P_T}(\theta)$, note that:
\begin{align*}
    \begin{pmatrix}P_{T_\text{post}}\\P_T-P_{T_\text{post}}\end{pmatrix} \sim\mathcal N
        \begin{pmatrix}
        \begin{pmatrix}Tg(v;\theta)\\h(v;\theta)\end{pmatrix},
          \sigma^2\begin{bmatrix}
 T_\text{post}& -(T_\text{post}-T) \\
-(T_\text{post}-T) & T_\text{post}-T
\end{bmatrix}
    \end{pmatrix}.
  \end{align*}
The rest follows from standard calculation based on Lemma \ref{jacob} and let $(v,T)\sim G_{\text{order}}$.

\subsubsection{Proof of Theorem \ref{acfinal}}\label{proofofthmmain}
\begin{proof}[Proof of Theorem \ref{acfinal}]

Throughout the proof, we assume $(v,T)$ is fixed, i.e., $G_{\text{order}}$ is a point mass. Results for generic $G_{\text{order}}$ follow trivially since, for any random variable, $X\geq Y$ implies $\mathbb E[X]\geq \mathbb E [Y]$. 

Following \eqref{suff}, we can see that, $(P_T,\frac{P_{t'_1}}{t'_1}-\frac{P_{T_{\text{post}}}-P_T}{T_{\text{post}-T}})$ is sufficient for experiment based on $\boldsymbol{S}$, thus we also have
\begin{equation}\label{2to3}
    \mathcal I_{\boldsymbol{S}}(\theta)=\mathcal I_{P_T,\phi_{\tau_1}(P_{t_1},P_T, P_{T_{\text{post}}})}(\theta).
\end{equation}
where $\phi_{\tau_1}(P_{t_1},P_T, P_{T_{\text{post}}})=\frac{P_{t_1}}{t_1}-\frac{P_{T_{\text{post}}}-P_T}{T_{\text{post}-T}}$ and $\tau_1$ is fixed for the experiment. Together with \eqref{3point}, we can see that, for $(P_t,P_T,P_{T_\text{post}})$,
\begin{equation}\label{3is2}
    \mathcal I_{P_t,P_T,P_{T_\text{post}}}(\theta)=\mathcal I_{P_T,\phi_\tau(P_t,P_T,P_{T_\text{post}})}(\theta).
\end{equation}
Now, we can check
\begin{align}\label{newmu}
    \begin{pmatrix}P_T\\t\cdot\phi_\tau(P_t,P_T,P_{T_\text{post}})\end{pmatrix} \sim\mathcal N
        \begin{pmatrix}
        \bar{\mu}(\theta),
          \bar{\Sigma},
    \end{pmatrix} 
    \end{align}
where
    \begin{align}
 \bar{\mu}(\theta)\triangleq\begin{pmatrix}Tg(v;\theta)+h(v;\theta)\\tg(v;\theta)+(1+\frac{t}{T_{\text{post}}-T})h(v;\theta)\end{pmatrix},  \bar\Sigma\triangleq\sigma^2\begin{bmatrix}
 T& t \\
t  & t+\frac{t^2}{T_{\text{post}}-T}
\end{bmatrix}.
  \end{align}
 Recall the definition in \eqref{iandj} where  
 \begin{align*}
 {\mu}(\theta)\triangleq\begin{pmatrix}Tg(v;\theta)\\h(v;\theta)\end{pmatrix}.
  \end{align*}
Now define the matrix 
\begin{align*}
   A\triangleq\begin{bmatrix}
 1& 1 \\
\frac{t}{T}  & 1+\frac{t}{T_{\text{post}}-T}
\end{bmatrix}.
\end{align*}
It is straightforward to check that for all $t\in(0,T)$ and $T_\text{post}>T$, $A$ is non-singular and 
\begin{equation*}
    \bar{\mu}(\theta)=A\mu(\theta).
\end{equation*}
Thus, if we define $U\in\mathbb R^2$ as
\begin{equation}
    U\triangleq A^{-1} \begin{bmatrix}
 P_T \\
t\cdot\phi_\tau(P_t,P_T,P_{T_\text{post}}) 
\end{bmatrix},
\end{equation}
we can check that 
\begin{equation}\label{udis}
    U\sim\mathcal N \big(\mu(\theta),A^{-1}\bar{\Sigma}(A^{-1})^T\big)
\end{equation}
Since $A$ is non-singular, we know from Lemma \ref{data refine} and \eqref{3is2} that
\begin{align}
  \mathcal I_U(\theta)=\mathcal I_{P_T,t\cdot\phi_\tau(P_t,P_T,P_{T_\text{post}})}(\theta)=\mathcal I_{P_t,P_T,P_{T_\text{post}}}(\theta).
\end{align}
Thus, it suffices to calculate $\mathcal I_U(\theta)$, for which we use Lemma \ref{jacob} on \eqref{udis} to obtain:
\begin{equation*}
  \mathcal I_{P_t,P_T,P_{T_{\text{post}}}}(\theta)=\mathcal I_U(\theta)=\mathcal J^T_{ I,J}(\theta) \big(A^T\bar\Sigma^{-1}A\big)\mathcal J_{ I,J}(\theta),
\end{equation*}
where we can carry out direct calculations to get 
\begin{equation*}
    A^T\bar\Sigma^{-1}A=\sigma^{-2}\begin{bmatrix}
 \frac{1}{T}& \frac{1}{T} \\
\frac{1}{T}  & \frac{1}{t}+\frac{1}{T_{\text{post}}-T}
\end{bmatrix}.
\end{equation*}
Recall from equation \eqref{FI-IJ} that
\begin{equation*}
    \mathcal I_{I,J}(\theta)=\mathcal J^T_{I,J}(\theta)\cdot\sigma^{-2}\begin{bmatrix}
 \frac{1}{T}\frac{T_\text{post}/4-T/6}{T_\text{post}/3-T/4}& \frac{1}{T}\frac{T_\text{post}/2-T/2}{T_\text{post}/3-T/4} \\
\frac{1}{T}\frac{T_\text{post}/2-T/2}{T_\text{post}/3-T/4}  &  \frac{1}{T}\frac{T_\text{post}}{T_\text{post}/3-T/4}
\end{bmatrix}\cdot\mathcal J_{ I,J}(\theta).
\end{equation*}
In order to show $\mathcal I_U(\theta)\succcurlyeq\mathcal I_{I,J}(\theta)$ for generic form of $h,g$, it suffices to show that
\begin{equation}\label{comp}
    \begin{bmatrix}
 \frac{1}{T}& \frac{1}{T} \\
\frac{1}{T}  & \frac{1}{t}+\frac{1}{T_{\text{post}}-T}
\end{bmatrix}\succcurlyeq \begin{bmatrix}
 \frac{1}{T}\frac{T_\text{post}/4-T/6}{T_\text{post}/3-T/4}& \frac{1}{T}\frac{T_\text{post}/2-T/2}{T_\text{post}/3-T/4} \\
\frac{1}{T}\frac{T_\text{post}/2-T/2}{T_\text{post}/3-T/4}  &  \frac{1}{T}\frac{T_\text{post}}{T_\text{post}/3-T/4}
\end{bmatrix}.
\end{equation}
Define $\tau_1\triangleq\frac{T_\text{post}}{T}$ and $\tau_2\triangleq\frac{T}{t}$ (notice $\tau_1,\tau_2>1$), showing \eqref{comp} is equivalent as showing:
\begin{equation*}
    B\triangleq\begin{bmatrix} 1-
 \frac{\tau_1/4-1/6}{\tau_1/3-1/4}& 1-\frac{\tau_1/2-1/2}{\tau_1/3-1/4} \\
1-\frac{\tau_1/2-1/2}{\tau_1/3-1/4}  & \tau_2+\frac{1}{\tau_1 -1}-\frac{\tau_1}{\tau_1/3-1/4}
\end{bmatrix} \succcurlyeq {0}.
\end{equation*}
To check the positive semi-definiteness of the above matrix, we use Sylvester's criterion, which reduces to checking:
\begin{equation}\label{check}
  \begin{cases}
			[B]_{11}=1-
 \frac{\tau_1/4-1/6}{\tau_1/3-1/4}\geq 0, & \text{}\\
            [B]_{22}=\tau_2+\frac{1}{\tau_1 -1}-\frac{\tau_1}{\tau_1/3-1/4} \geq 0 & \text{} \\
            \text{det}(B) \geq {0} & \text{}
		 \end{cases}.
		 \end{equation}
For $[B]_{11}$, it is easy to check that $[B]_{11}>0$ whenever $\tau_1>1$, which is satisfied by definition of $\tau_1\triangleq\frac{T_{\text{post}}}{T}$. For $\text{det}(B)$, we wanted to check for \begin{equation}\label{det}
  \text{det}(B)= (1-
 \frac{\tau_1/4-1/6}{\tau_1/3-1/4})(\tau_2+\frac{1}{\tau_1 -1}-\frac{\tau_1}{\tau_1/3-1/4})-(1-\frac{\tau_1/2-1/2}{\tau_1/3-1/4})^2 \geq 0.
\end{equation}
By multiplying a strictly positive term $12^2(\tau_1/3-1/4)^2$, one can check \eqref{det} is greater than 0 if and only if 
\begin{equation*}
    (\tau_2(\tau_1-1)+1)(4\tau_1-3)-12\tau_1(\tau_1-1)-(2\tau_1-3)^2 \geq 0
\end{equation*}
which simplifies to
\begin{equation*}
    (\tau_2-4)(4\tau_1^2-7\tau_1+3)\geq 0.
\end{equation*}
One can check $4\tau_1^2-7\tau_1+3 > 0$ for all $\tau_1>1$. Thus, $\text{det} (B) \geq 0 $ if and only if $\tau_2\geq 4$.
So far, the validity of \eqref{check} requires $\tau_1>1$ and $\tau_2\geq 4$. Finally, for $[B]_{22}$, by multiplying a strictly positive term $12(\tau_1-1) (\tau_1/3-1/4)$, it can be checked that $[B]_{22} \geq 0$ if and only if 
\begin{equation}\label{quad1}
    4(\tau_2-3)\tau_1^2+(16-7\tau_2)\tau_1+3(\tau_2-1)\geq 0.
\end{equation}
For $\tau_2\geq 4$, \eqref{quad1} is a quadratic function in $\tau_1$ with positive quadratic coefficient. The discriminant of \eqref{quad1} is 
\begin{equation}\label{discri}
    \Delta=(16-7\tau_2)^2-48(\tau_2-3)(\tau_2-1),
\end{equation}
which satisfies $\Delta<0$ (no roots) if and only if $4<\tau_2<28$. Thus, for $4<\tau_2<28$, $[B]_{22} > 0$ for all $\tau_1 >1$. When $\tau_2=4$, we have $\Delta=0$ and $[B]_{22} \geq 0$ for all $\tau_1 >1$. Thus,  On the other hand, when $\tau_2\geq 28$, \eqref{quad1} has two roots (possibly with multiplicity):
\begin{equation}\label{roots}
    \frac{7\tau_2-16 \pm \sqrt{\Delta}}{8\tau_2-24}.
\end{equation}
However, it can be checked that, when $\tau_2\geq 28$, $(\tau_2-8)^2-(\tau_2-4)(\tau_2-28)=16(\tau_2-3)>0$, thus we have
\begin{equation*}
    \sqrt{\Delta}=\sqrt{(\tau_2-4)(\tau_2-28)}\leq \sqrt{(\tau_2-8)^2} =\tau_2-8.
\end{equation*}
Consequently, for $\tau_2 \geq 28$, we have for \eqref{roots} that
\begin{equation*}
   \frac{7\tau_2-16 + \sqrt{\Delta}}{8\tau_2-24} \leq 1,
\end{equation*}
which implies the roots of \eqref{quad1} is no greater than 1 and $[B]_{22} > 0$ for $\tau_1>1$. Thus, in summary, \eqref{check} is satisfied as long as $\tau_1>1$ and $\tau_2\geq 4$. The rest follows by letting $(v,T)\sim G_{\text{order}}$.
\end{proof}

\subsubsection{Technical Proof in Theorem \ref{minisuff}}\label{condim}
Fix $(v,T)$. We prove the case for the kernel $G(s)\propto s^{-\gamma}$ and the case for $G(s)\propto l_0(l_0+s)^{-\gamma}$ or $G(s)\propto e^{-\rho s}$ is similar. For simplicity, we assume $G(s)=s^{-\gamma}$ so that $\eta_i(\gamma)=\int_{t_{i-1}}^{t_i}G(s;\gamma)ds=\frac{1}{1-\gamma}(t_i^{1-\gamma}-t_{i-1}^{1-\gamma})=\frac{(\Delta t)^{1-\gamma}}{1-\gamma}(i^{1-\gamma}-(i-1)^{1-\gamma})$. We want to show that the convex hull of $\{\eta(\gamma)\}_{\gamma\in\Gamma}$ has dimension $N$, or equivalently: there exists $\{\gamma_0,\gamma_1,...,\gamma_N\}\subseteq\Theta$ such that $\{\eta(\gamma_i)-\eta(\gamma_0)\}_{1\leq i \leq N}$ is linearly independent. Define $a_i(\gamma)\triangleq \frac{(\Delta t)^{1-\gamma}}{1-\gamma}(i^{1-\gamma}-(i-1)^{1-\gamma})$, we then want to show there exists $(\gamma_0,\gamma_1,...,\gamma_N)\in\Gamma^{N+1}$, such that the matrix
\begin{equation}\label{matrixgamma}
    A(\gamma_0,\gamma_1,...,\gamma_N)\triangleq \begin{bmatrix} 
    a_{1}(\gamma_1)-a_{1}(\gamma_0) & \dots  & a_{1}(\gamma_N)-a_{N}(\gamma_0)\\
    \vdots & \ddots & \vdots\\
    a_{N}(\gamma_1)-a_{N}(\gamma_0) & \dots  & a_{N}(\gamma_N)-a_{N}(\gamma_0)
    \end{bmatrix}
\end{equation}
has full rank $N$. To show this, it suffices to show there exists $(\gamma_0,\gamma_1,...,\gamma_N)\in\Gamma^{N+1}$ such that
\begin{equation}\label{detwant}
  \textup{det} (A (\gamma_0,\gamma_1,...,\gamma_N))\neq 0 .
\end{equation}
 To show this, note that the dimension of the convex hull of $\{\eta(\gamma)\}_{\gamma\in\Gamma}$ is also the dimension of the smallest affine set (affine hull) that contains $\{\eta(\gamma)\}_{\gamma\in\Gamma}$. There are two scenarios:
 
 \textbf{Scenario 1} If $\{\eta(\gamma)\}_{\gamma\in\Gamma}$ cannot be contained in a linear hyper-plane of dimension $N-1$, then the convex hull of $\{\eta(\gamma)\}_{\gamma\in\Gamma}$ has dimension $N$, therefore there must exist $(\gamma_0,\gamma_1,...,\gamma_N)\in\Gamma^{N+1}$ that satisfies \eqref{detwant}.
 
 \textbf{Scenario 2} If $\eta(\gamma)$ can be contained in a linear hyper-plane of dimension $N-1$, then there exists some $\boldsymbol w\in\mathbb R^{N+1}=(w_0,w_1,...,w_N)$ such that $\boldsymbol w \neq \boldsymbol 0$ and 
 \begin{equation}\label{keycontra}
    b(\gamma)\triangleq\sum_{i=1}^N w_i a_i(\gamma)-w_0=0
 \end{equation}
for all $\gamma\in\Gamma$. First, we note that $a_i(\gamma)$ is an analytic function on not just $\Gamma$, but also $(-\infty,1)$. It follows from \cite{mityagin2020zero} that the zero set of $b(\gamma)$:
\begin{equation*}
    B=\{\gamma\in(-\infty,1): b(\gamma)=0\}
\end{equation*}
either has Lebesgue measure 0 or is the entire $(-\infty,1)$. First, we assume, for the sake of contradiction, that $B=(-\infty,1)$, then
\begin{equation}\label{contr1}
    b(\gamma)=0 \Leftrightarrow \sum_{i=1}^N w_i(t_i^{1-\gamma}-t_{i-1}^{1-\gamma})=w_0(1-\gamma),
\end{equation}
for all $\gamma\in(-\infty,1)$.
Then we can define $i^\star=\max\{i: 0\leq i \leq N, w_i\neq 0\}$. Note that $i^\star$ is well-defined since $\boldsymbol w \neq \boldsymbol 0$. Moreover, $i^\star \geq 1$ since otherwise we would would have 
\begin{equation*}
    0=\sum_{i=1}^N w_i(t_i^{1-\gamma}-t_{i-1}^{1-\gamma})\neq w_0(1-\gamma),
\end{equation*}
which contradicts \eqref{contr1}. Thus, we can rewrite \eqref{contr1} as
\begin{equation}\label{contr2}
    \sum_{i=1}^{i^\star} w_i(t_i^{1-\gamma}-t_{i-1}^{1-\gamma})= w_0(1-\gamma),
\end{equation}
for all $\gamma\in(-\infty,1)$.Divide both side by $t_{i^\star}^{1-\gamma}$ and let $\gamma\downarrow-\infty$, the left side of \eqref{contr2} goes to $w_{i^\star}$ while the right side of \eqref{contr2} becomes
\begin{equation*}
    \lim_{\gamma\downarrow-\infty} \frac{w_0(1-\gamma)}{t_{i^\star}^{1-\gamma}},
\end{equation*}
which cannot converge to a constant other than 0 (i.e., not $w_{i^\star}$), regardless of the value of $t_{i^\star}$. Thus, we arrive at a contradiction and we conclude that $B$ cannot be $(-\infty,1)$. Thus, $B$ must have Lebesgue measure 0. However, we assumed that $\Gamma$ contains an open set, which implies the Lebesgue of $\Gamma$ on $\mathbb R$ is greater than 0. As a result, there exists $\gamma\in\Gamma$ (in fact there exists a set of positive measures with such points) such that $b(\gamma)\neq 0$, which contradicts \eqref{keycontra}. Thus, \textbf{Scenario 2} is not possible, we must only have \textbf{Scenario 1}, which completes the proof.

\subsubsection{Proof of Theorem \ref{calif}} \label{dsfds}

Fix $(v,T)$. Denote $\Delta S_i\triangleq S_{t_i}-S_{t_{i-1}}$ and $\Delta G_i=\int_{t_{i-1}}^{t_i}G(s)ds$. It follows from \eqref{mini} and Theorem \ref{factor} that $$\phi(\boldsymbol S_{\text{full}})\triangleq\sum_{i=1}^N (S_{t_i}-S_{t_{i-1}})\int_{t_{i-1}}^{t_i} G(s)ds=\sum_{i=1}^N \Delta S_i\Delta G_i,$$ 
is a sufficient statistic. Since $\phi(\boldsymbol S_{\text{full}}) \sim \mathcal N(f(v;\theta)\sum_{i=1}^N(\Delta G_i)^2, \sigma^2\Delta t\sum_{i=1}^N(\Delta G_i)^2)$, as $\Delta t\rightarrow 0$, it follows from Lemma \ref{jacob} that 
\begin{equation}\label{fullintee}
    \mathcal I_{\boldsymbol S_{\text{full}}}(\theta)=\mathcal I_{\phi(\boldsymbol S_{\text{full}})}(\theta)=\frac{(\frac{\partial f}{\partial \theta})^2}{\sigma^2}\sum_{i=1}^N\frac{\Delta G^2_i}{\Delta t}\rightarrow \frac{(\frac{\partial f}{\partial \theta})^2}{\sigma^2}\int_0^T G^2(t)dt,
\end{equation}
which proves \eqref{41}. Now, based on definition \ref{disJ}, we have
\begin{equation}\label{Jnew}
   J=\sum_{i=1}^{N}S_{t_i}\Delta t=\sum_{i=1}^{N}\Delta S_i(T-t_i+\Delta t).
\end{equation}  
Now, using Lemma \ref{mutual}, we have
\begin{equation}\label{Jfull}
    \mathcal I_J(\theta)=\mathcal I_{\phi(\boldsymbol S_{\text{full}})}(\theta)-\mathbb E_{x\sim J}[\mathcal I_{\phi(\boldsymbol S_{\text{full}})|J=x}(\theta)].
\end{equation}
However, based on \eqref{Jnew}, the joint distribution of $(\phi(\boldsymbol S_{\text{full}}), J)$ is 
\begin{align*}
    \begin{pmatrix}\phi(\boldsymbol S_{\text{full}})\\J\end{pmatrix}
    =&\begin{pmatrix}\sum_{i=1}^N \Delta S_i\Delta G_i\\\sum_{i=1}^{N}\Delta S_i(T-t_i+\Delta t)\end{pmatrix}\nonumber\\ \sim&\mathcal N
        \begin{pmatrix}
        f(v;\theta)\begin{pmatrix}\sum_{i=1}^N (\Delta G_i)^2\\\sum_{i=1}^N(T-t_i+\Delta t)\Delta G_i\end{pmatrix},
          \sigma^2\Delta t\begin{bmatrix}
 \sum_{i=1}^N (\Delta G_i)^2& \sum_{i=1}^N(T-t_i+\Delta t)\Delta G_i \\
\sum_{i=1}^N(T-t_i+\Delta t)\Delta G_i & \sum_{i=1}^N(T-t_i+\Delta t)^2
\end{bmatrix}
    \end{pmatrix}.
  \end{align*}
Then, standard Gaussian calculation gives conditional distribution:
\begin{align*}
    \phi(\boldsymbol S_{\text{full}})|J=x\sim &\mathcal N \Bigg(f(v;\theta)\sum_{i=1}^N (\Delta G_i)^2+\frac{\sum_{i=1}^N(T-t_i+\Delta t)\Delta G_i}{\sum_{i=1}^N(T-t_i+\Delta t)^2}\Big(x-f(v;\theta)\cdot\sum_{i=1}^N(T-t_i+\Delta t)\Delta G_i\Big),\nonumber\\
    &\sum_{i=1}^N (\Delta G_i)^2-\frac{(\sum_{i=1}^N(T-t_i+\Delta t)\Delta G_i)^2}{\sum_{i=1}^N(T-t_i+\Delta t)^2}\Bigg).
\end{align*} 
Now, based on Lemma \ref{jacob}, 
we have, as $\Delta t\rightarrow 0$,
\begin{align*}
    \mathcal I_{\phi(\boldsymbol S_{\text{full}})|J}(\theta)=&\frac{(\frac{\partial f}{\partial \theta})^2}{\sigma^2}\Big(\sum_{i=1}^N \frac{(\Delta G_i)^2}{\Delta t}-\frac{(\sum_{i=1}^N(T-t_i+\Delta t)\Delta G_i)^2}{\sum_{i=1}^N(T-t_i+\Delta t)^2\Delta t}\Big)\nonumber\\
    \rightarrow& \frac{(\frac{\partial f}{\partial \theta})^2}{\sigma^2}\Big(\int_0^T G^2(t)dt-\frac{(\int_0^T(T-t)G(t)dt)^2}{\int_0^t(T-t)^2dt}\Big)\nonumber\\
    =& \frac{(\frac{\partial f}{\partial \theta})^2}{\sigma^2}\Big(\int_0^T G^2(t)dt-\frac{3}{T^3}(\int_0^T(T-t)G(t)dt)^2\Big),
\end{align*}
which, combined with \eqref{Jfull} and \eqref{41}, gives \eqref{43}:
\begin{equation*}
    \mathcal I_{J}(\theta)\rightarrow \frac{(\frac{\partial f}{\partial \theta})^2}{\sigma^2}\cdot\frac{3}{T^3}\Big(\int_0^T G(t)(T-t)dt\Big)^2.
\end{equation*}
The fact that $\mathcal I_{J}(\theta)=\frac{(\frac{\partial f}{\partial \theta})^2}{\sigma^2}\cdot\frac{3}{T^3}\Big(\int_0^T G(t)(T-t)dt\Big)^2$ can be directly calculated using Proposition \ref{fishtvtv}, and note that by Fubini theorem that $\int_0^T\int_0^t G(s)ds=\int_0^T G(t)(T-t)dt$.

Finally, to prove \eqref{42}, we first write $\boldsymbol S_{\text{partial}}$ as $\boldsymbol S$ for ease of notation. Note that it follows from Lemma \ref{data refine} that 
\begin{equation}\label{kkkk}
    \mathcal I_{\boldsymbol S}(\theta)=\mathcal I_{S_{t_1^{\mathcal K}}-S_{t_0^{\mathcal K}},S_{t_2^{\mathcal K}}-S_{t_1^{\mathcal K}},...,S_{t_{N_{\mathcal K}}^{\mathcal K}}-S_{t_{{N_{\mathcal K}}-1}^{\mathcal K}}}(\theta).
\end{equation}
Now denote $\Delta S_i^{\mathcal K}=S_{t_i^{\mathcal K}}-S_{t_{i-1}^{\mathcal K}}$ for $1\leq i \leq {N_{\mathcal K}}$. Since $\Delta S_i^{\mathcal K}$ is independent with $\Delta S_j^{\mathcal K}$ for $i\neq j$, it follows that \eqref{kkkk} becomes \begin{equation}\label{ttp4}
    \mathcal I_{\boldsymbol S}(\theta)=\mathcal I_{\{\Delta S_i^{\mathcal K}\}_{1\leq i \leq n}}(\theta)=\sum_{i=1}^{N_{\mathcal K}} \mathcal I_{\Delta S_i^{\mathcal K}}(\theta).
\end{equation}
On the other hand, fix $1\leq i \leq {N_{\mathcal K}}$, define $T_i=t_{i}^{\mathcal K}-t_{i-1}^{\mathcal K}$, $\Delta_i=\frac{T_i}{N}$ and $\Delta S_{ij}=j\Delta_i$ so that we have 
\begin{equation*}
    \Delta S_i^{\mathcal K}=\sum_{j=1}^N \Delta S_{ij}.
\end{equation*}
We further define similarly $\Delta G_{ij}$ and
\begin{equation*}
    \phi^{\mathcal K}_{i}(\boldsymbol S)\triangleq\sum_{j=1}^N\Delta S_j\Delta G_j,
\end{equation*}
which, similar as \eqref{fullintee}, can be shown to satisfy as $N\rightarrow \infty$,
\begin{equation}\label{ttp1}
    \mathcal I_{\phi^{\mathcal K}_{i}(\boldsymbol S)}(\theta)=\frac{(\frac{\partial f}{\partial \theta})^2}{\sigma^2}\int_{t_{i-1}^{\mathcal K}}^{t_{i}^{\mathcal K}}G^2(s)ds.
\end{equation}
Now, fix $1\leq i \leq n$, as in \eqref{Jfull}, base on Lemma \ref{mutual}, we have 
\begin{equation}\label{ttp2}
    \mathcal I_{\Delta S_i^{\mathcal K}}(\theta)=\mathcal I_{\phi^{\mathcal K}_{i}(\boldsymbol S)}-\mathbb E_{x\sim \Delta S_i^{\mathcal K}}[\mathcal I_{\phi^{\mathcal K}_{i}(\boldsymbol S)|\Delta S_i^{\mathcal K}= x}(\theta)].
\end{equation}
Again, we calculate the joint distribution
\begin{align*}
    \begin{pmatrix}\phi_i^{\mathcal K}(\boldsymbol S)\\\Delta S_i^{\mathcal K}\end{pmatrix}
    =&\begin{pmatrix}\sum_{j\in [N]}\Delta S_{ij}\Delta G_{ij}\\\sum_{j\in [N]}\Delta S_{ij}\end{pmatrix}\nonumber\\ \sim&\mathcal N
        \begin{pmatrix}
        f(v;\theta)\begin{pmatrix}\sum_{j\in [N]} (\Delta G_{ij})^2\\\sum_{j\in [N]}\Delta G_{ij}\end{pmatrix},
          \sigma^2\Delta t\begin{bmatrix}
 \sum_{j\in [N]} (\Delta G_{ij})^2& \sum_{j\in [N]}\Delta G_{ij} \\
\sum_{j\in [N]}\Delta G_{ij} & \sum_{j\in [N]}1
\end{bmatrix}
    \end{pmatrix}.
  \end{align*}
which further gives a conditional distribution 
\begin{align*}
    \phi_i^{\mathcal K}(\boldsymbol S)|\Delta S_i^{\mathcal K}=x\sim &\mathcal N \Bigg(f(v;\theta)\sum_{j\in [N]} (\Delta G_{ij})^2+\frac{\sum_{j\in [N]} \Delta G_{ij}}{\sum_{j\in [N]} 1}\Big(x-f(v;\theta)\cdot\sum_{j\in [N]} \Delta G_{ij}\Big),\nonumber\\
    &\sum_{j\in [N]} (\Delta G_{ij})^2-\frac{(\sum_{j\in [N]} \Delta G_{ij})^2}{\sum_{j\in [N]} 1}\Bigg).
\end{align*} 
Again, following Lemma \ref{jacob}, 
    we have, as $ N\rightarrow \infty$,
\begin{align}\label{ttp3}
    \mathcal I_{\phi_i^{\mathcal K}(\boldsymbol S)|\Delta S_i^{\mathcal K}}(\theta)=&\frac{(\frac{\partial f}{\partial \theta})^2}{\sigma^2}\Big(\sum_{j\in [N]}\frac{(\Delta G_{ij})^2}{\Delta t}-\frac{(\sum_{j\in [N]}\Delta G_{ij})^2}{\sum_{j\in [N]}\Delta t}\Big)\nonumber\\
    \rightarrow& \frac{(\frac{\partial f}{\partial \theta})^2}{\sigma^2}\Big(\int_{t_{i-1}^{\mathcal K}}^{t_{i}^{\mathcal K}} G^2(t)dt-\frac{(\int_{t_{i-1}^{\mathcal K}}^{t_{i}^{\mathcal K}}G(t)dt)^2}{t_{i}^{\mathcal K}-t_{i-1}^{\mathcal K}}\Big).
\end{align}
Now, since $\mathcal I_{\boldsymbol S_{\text{partial}}}(\theta)$ does not depend on $N$, we can combine \eqref{ttp1}, \eqref{ttp2},\eqref{ttp3} and \eqref{ttp4} to show \eqref{42}:
\begin{equation*}
    \mathcal I_{\boldsymbol S_{\text{partial}}}(\theta)= \frac{(\frac{\partial f}{\partial \theta})^2}{\sigma^2}\Big(\sum_{i=1}^{n}\frac{(\int_{t_{i-1}^{\mathcal K}}^{t_{i}^{\mathcal K}}G(t)dt)^2}{t_{i}^{\mathcal K}-t_{i-1}^{\mathcal K}}\Big).
\end{equation*}

\subsubsection{Derivation of Example \ref{expyixia}}\label{asdas}

To calculate \eqref{comppppp}, notice when $G(s)=e^{-\rho s}$, we have 
\begin{equation}\label{8left}
    \frac{(\int_0^t G(t)dt)^2}{t}+\frac{(\int_t^T G(t)dt)^2}{T-t}=\frac{1}{t}\Big(\frac{1-e^{-\rho t}}{\rho}\Big)^2+\frac{1}{T-t}\Big(\frac{e^{-\rho t}-e^{-\rho T}}{\rho}\Big)^2
\end{equation}
and 
\begin{equation}\label{8right}
    \Big(\int_0^T G(t)(T-t)dt\Big)^2=\frac{3}{T^3}\Big(\frac{\rho T+e^{-\rho T}-1}{\rho^2}\Big)^2.
\end{equation}
To show \eqref{8left} is greater than \eqref{8right}, it suffices to show
\begin{equation}\label{8final}
    \frac{(\frac{1-e^{-\rho t}}{\rho})^2}{t/T}+\frac{(\frac{e^{-\rho t}-e^{-\rho T}}{\rho})^2}{1-t/T}\geq \frac{3}{T^2}\Big(\frac{\rho T+e^{-\rho T}-1}{\rho^2}\Big)^2
\end{equation}
Now, let $\frac{t}{T}\rightarrow \tau$ while $t, T \rightarrow \infty$, \eqref{8final} becomes 
\begin{equation*}
    \frac{1}{\tau \rho^2} \geq \frac{3}{\rho^2}.
\end{equation*}
or simply $\tau\leq \frac{1}{3}$.

\subsubsection{Proof of Corollary \ref{corosq}}\label{proofcorosq}
Under the setting of Corollary \ref{corosq}, we have 
\begin{equation*}
   \mu(T,v)=\frac{c}{\delta} (vT)^\delta
\end{equation*}
and \begin{equation*}
    \frac{J}{X}\sim\mathcal{N}(\frac{c}{\delta(1+\delta)}(vT)^\delta, \frac{\sigma^2 T}{3})
\end{equation*}
for some $c$. Under the setting of Corollary \ref{corosq}, for $S_t, S_T$, we have
\begin{gather*}
\mu(\delta) = \left[\begin{array}{c}
    c \frac{1}{\delta} (vt)^\delta  \\
    c \frac{1}{\delta} (vT)^\delta 
\end{array} \right]\\
S_t \sim \mathcal{N} \left(c \frac{1}{\delta} (vt)^\delta, \sigma^2 t \right) \\
S_T \sim \mathcal{N} \left(c \frac{1}{\delta} (vT)^\delta, \sigma^2 T \right) \\
\Sigma =
\frac{1}{\sigma^2 t(T-t)} \left[\begin{array}{cc}
    T & -t \\
    -t & t
\end{array}\right]
\end{gather*}
According to Lemma \ref{jacob}, the Fisher information with respect to $\delta$ is,
\begin{align}
&\mathcal{I}_{S_t, S_T}(\delta) 
= \mathcal{J}(\delta) ^T \Sigma^{-1} \mathcal{J}(\delta) \\
=& \frac{c^2}{\sigma^2 t(T-t)} 
\Bigg(
T \Big(-\frac{1}{\delta^2} (vt)^\delta + \frac{\ln(vt)}{\delta} (vt)^\delta\Big)^2
+ t \Big(-\frac{1}{\delta^2} (vT)^\delta + \frac{\ln(vT)}{\delta} (vT)^\delta \Big)^2 \\
&-2t \Big(-\frac{1}{\delta^2} (vt)^\delta + \frac{\ln(vt)}{\delta} (vt)^\delta \Big)
    \Big(-\frac{1}{\delta^2} (vT)^\delta + \frac{\ln(vT)}{\delta} (vT)^\delta \Big)
\Bigg)
\end{align}
where the Jacobian matrix is,
\begin{equation}
\mathcal{J}(\delta)=
\left[\begin{array}{c}
    \frac{\partial}{\partial \delta} c \frac{1}{\delta} (vt)^\delta \\
    \frac{\partial}{\partial \delta} c \frac{1}{\delta} (vT)^\delta
\end{array}\right]
= c\left[\begin{array}{c}
     -\frac{1}{\delta^2} (vt)^\delta + \frac{\ln(vt)}{\delta} (vt)^\delta \\
     -\frac{1}{\delta^2} (vT)^\delta + \frac{\ln(vT)}{\delta} (vT)^\delta
\end{array}\right]
\end{equation}

For $J$, consider
\begin{gather*}
\frac{J}{X} \sim \mathcal{N} \left(c \frac{1}{\delta(1+\delta)} (vT)^\delta, \frac{\sigma^2 T}{3} \right)
\end{gather*}
Applying Lemma \ref{jacob}, the Fisher information with respect to $\delta$ is,
\begin{equation*}
\mathcal{I}_J(\delta)
= \frac{3 c^2}{\sigma^2 T}
\left(
-\frac{2\delta+1}{\delta^2 (1+\delta)^2} (vT)^\delta
+ \frac{\ln(vT)}{\delta(1+\delta)} (vT)^\delta
\right)^2.
\end{equation*}
Finally, it follows from an ad-hoc numerical analysis, which does not depend on $c$ (as long as it is not 0), that for $0.005\leq vT\leq 1$ and $\delta\leq 0.8$, we have 
\begin{equation*}
\mathcal{I}_J(\delta)\leq\mathcal{I}_{S_t, S_T}
\end{equation*}
as long as $1.2\%\leq t/T \leq 22.2\%$.

\newpage

\end{document}